\pgfplotsset{compat=1.7}
\def\TPT@doparanotes{\par
    \prevdepth\z@ \TPT@hsize
    \TPTnoteSettings
    \parindent\z@ \pretolerance 8
    \linepenalty 100000
    \renewcommand\item[1][]{\relax\ifhmode \begingroup
        \hskip .6em plus .2em minus .6em
        \endgroup\fi
        \tnote{##1}\,\ignorespaces}%
    \let\TPToverlap\relax
    \def\endtablenotes{\par}%
}
\theoremstyle{plain}
\newtheorem{prop}{Proposition}[section]
\newtheorem{lemma}[prop]{Lemma}
\newtheorem{thm}[prop]{Theorem}
\newtheorem{assumption}[prop]{Assumption}
\newtheorem{remark}[prop]{Remark}
\newtheorem{example}[prop]{Example}
\DeclarePairedDelimiter\ceil{\lceil}{\rceil}
\definecolor{green}{RGB}{0,128,0}
\newcommand\mtl[2][none]{%
    \tikz[overlay,remember picture] 
        \node (marker-#2-a) at (0,0.9ex) {};%
}
\newcommand\mbr[2][none]{%
    \tikz[overlay,remember picture] 
        \node (marker-#2-b) at (0.2ex,0.3ex) {};%
    \tikz[overlay,remember picture,inner sep=5pt]
        \node[blend mode=darken,
                rectangle,
                rounded corners,
                fill=yellow,
                opacity=.3,
                fit=(marker-#2-a.center) (marker-#2-b.center)] {};%
    \tikz[overlay,remember picture] 
    \path 
        let
            \p{mtl} = (marker-#2-a), \p{mbr} = (marker-#2-b)
        in
            [x=\x{mbr}, y=\y{mtl}, xshift=1em, yshift=0.5em]
            node[rounded corners,inner sep=2pt,fill=white,opacity=0.8,text opacity=1] at (1, 1) {};%
}
\begin{document}
\title{Conditions for a quadratic quantum speedup in nonlinear transforms with applications to energy contract pricing}

\author[1]{Gabriele Agliardi}

\author[2]{Corey O'Meara}

\author[3]{Kavitha Yogaraj}

\author[2]{Kumar Ghosh}

\author[4,5]{Piergiacomo Sabino}

\author[2]{Marina Fernández-Campoamor}

\author[2]{Giorgio Cortiana}

\author[6]{Juan Bernabé-Moreno}

\author[7]{Francesco Tacchino}

\author[8]{Antonio Mezzacapo}

\author[8]{Omar Shehab}

\affil{IBM Quantum, IBM Research, Italy}
\affil{E.ON Digital Technology GmbH, Essen, Germany}
\affil{IBM Quantum, IBM Research, India}
\affil{E.ON SE, Essen, Germany}
\affil{University of Helsinki, Finland, Department of Mathematics and Statistics}
\affil{IBM Research Europe, Dublin, Ireland}
\affil{IBM Quantum, IBM Research Europe, Zurich, Switzerland}
\affil{IBM Quantum, IBM Thomas J Watson Research Center, Yorktown Heights, NY, USA}

\maketitle

\begin{abstract}

Computing nonlinear functions over multilinear forms is a general problem with applications in risk analysis.
For instance in the domain of energy economics, accurate and timely risk management demands for efficient simulation of millions of scenarios, largely benefiting from computational speedups.
We develop a novel hybrid quantum-classical algorithm based on polynomial approximation of nonlinear functions, computed through Quantum Hadamard Products, and we rigorously assess the conditions for its end-to-end speedup for different implementation variants against classical algorithms. In our setting, a quadratic quantum speedup, up to polylogarithmic factors, can be proven only when forms are bilinear and approximating polynomials have second degree, if efficient loading unitaries are available for the input data sets.
We also enhance the bidirectional encoding, that allows tuning the balance between circuit depth and width, proposing an improved version that can be exploited for the calculation of inner products.
Lastly, we exploit the dynamic circuit capabilities, recently introduced on IBM Quantum devices, to reduce the average depth of the Quantum Hadamard Product circuit.
A proof of principle is implemented and validated on IBM Quantum systems.
\end{abstract}

\twocolumn

\section{Introduction}

\begin{figure*}[t]
    \centering
    \footnotesize
    \newcommand{\diagw}{26mm}
    \newcommand{\diagh}{12mm}
    \newcommand{\diaggapv}{1mm}
    \newcommand{\ellipsew}{40mm}
    \newcommand{\ellipseh}{13mm}
    \newcommand{\ellipsegapv}{1mm}
    \newcommand{\boxwmid}{30mm}
    \newcommand{\boxh}{16.5mm}
    \newcommand{\boxgapv}{1mm}
    \definecolor{steelblue31119180}{RGB}{31,119,180}
    \includegraphics{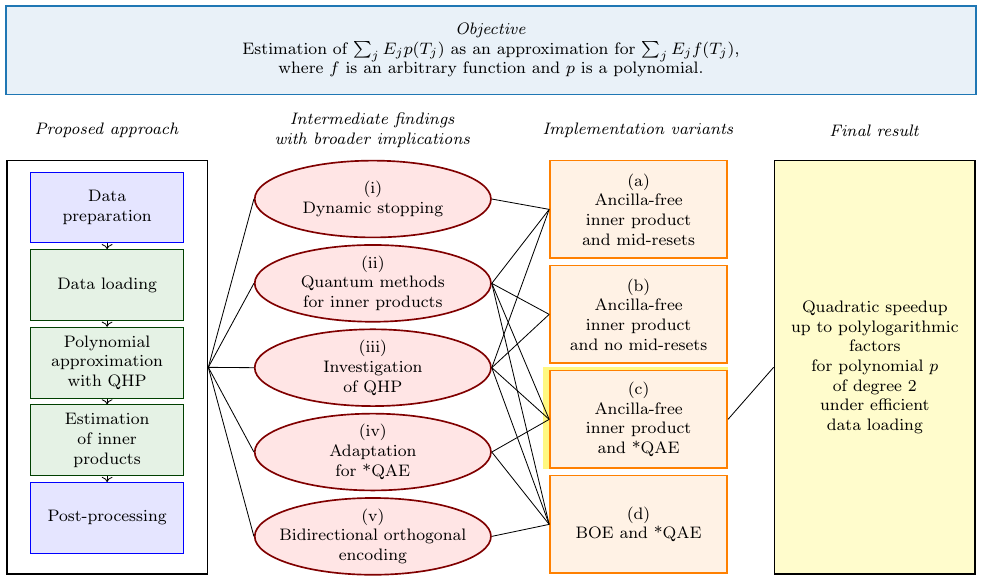}
    \caption{A conceptual representation of the paper, detailed in Subsec.~\ref{subsec:tech-overview}. We propose a hybrid \textit{approach} for the problem resolution. While tuning and implementing the algorithm, we derive five \textit{intermediate findings}, which contribute to different \textit{implementation variants}. Variant~(c) is finally selected for providing the desired speedup, which constitutes our \textit{final result}. The following terms are used in the picture: QHP for Quantum Hadamard Product (see Subsec.~\ref{subsec:qhp}), *QAE for Quantum Amplitude Estimation or equivalent alternatives, BOE for Bidirectional Orthogonal Encoding (see Subsec.~\ref{subsec:data-encoding}).}\label{fig:contrib}
\end{figure*}

Any speedup in the calculation of contract values has a paramount importance in the risk management of the energy industry, enabling real-time planning, finer risk diversification, and faster re-negotiation of hedging contracts~\cite{ghosh_energy_2024}. Gaining a speedup with quantum computing though poses two challenges: on one side, the native linear behavior of quantum operators makes nonlinearities in the contract functions less simple to represent through quantum circuits, and on the other side, classical algorithms with linear scaling in the number of data points are available.
In this work, we design a quantum approach for contract valuation based on polynomial approximations, and provide a comprehensive study of alternate implementations based on tuned combinations of quantum subroutines, comparing their performance. The different variants share the core idea of using Quantum Hadamard Products, which appears as a promising path~\cite{holmes_nonlinear_2021} for nonlinear transformations of data stored in state amplitudes. Finally, we select an optimized implementation with quadratic speedup up to polylogarithmic terms, for second-degree polynomials.

The idea of applying quantum computing techniques to risk management problems was widely explored in the context of finance~\cite{stamatopoulos_option_2020, woerner_quantum_2019, herman_survey_2022} and recently exported to the energy field by the authors in Ref.~\cite{ghosh_energy_2024}. These works have the main objective of accelerating the execution of Monte Carlo methods used for the assessment of risk measures of contract portfolios, exploiting the Quantum Amplitude Estimation (QAE) algorithm that provides a quadratic speedup to classical Monte Carlo simulations~\cite{brassard_quantum_2002, rall_amplitude_2022}.

In this manuscript, we rather focus on the more fundamental task of efficiently calculating the \textit{value} of a contract, and by extension of portfolios, thus providing the data for subsequent risk measure calculation. The contract value is defined\footnote{We introduce here primed notations for non-normalized vectors, consistently with the entire paper.} as
$$v = \sum_{j=0}^{N-1} E'_j \, f(T'_j),$$
where $[E'_j]$ and $[T'_j]$ are energy prices and temperatures respectively, while $f$ is a nonlinear transform.
The conceptual structure of the work is summarized in Fig.~\ref{fig:contrib} and described in Subsec.~\ref{subsec:tech-overview}. Technically, the problem is reduced to the evaluation of the inner product
$$\sum_{j=0}^{N-1} E_j \, p(T_j),$$
where $[E_j]$ and $[T_j]$ are arbitrary normalized vectors, and $p$ is a polynomial approximating the function $f$. This formulation is highly general and allows to address diverse forms of energy contracts, beyond the exemplary one adopted here, as well as other use cases across industries, e.g. finance \cite{bouland_prospects_2020}, climate science \cite{berger_quantum_2021}, etc.

The remainder of the section provides a technical overview of our approach and findings, a collection of prior art on nonlinear transforms, as well as the context of risk analysis. Section~\ref{sec:quantum-approach} describes the proposed approach, the main building blocks, and the different implementation variants, with a focus on the most performing one, namely (c). Section~\ref{sec:complexity} studies the algorithmic complexity from a theoretical stand point, while Section~\ref{sec:experiments} gives experimental results. The final Section~\ref{sec:conclusion} contains conclusions and outlooks. The paper is complemented with a rich set of Appendices, containing multiple implementation variants with the technical details for the underlying theory.

\subsection{Technical overview}\label{subsec:tech-overview}

\begin{figure*}[t]
    \centering
    \footnotesize
    \newcommand{\boxh}{17mm}
    \newcommand{\boxhsmall}{5mm}
    \newcommand{\boxw}{16mm}
    \newcommand{\boxwfat}{27mm}
    \newcommand{\boxwmid}{25mm}
    \newcommand{\gapv}{1mm}
    \newcommand{\gaphbig}{1mm}
    \newcommand{\gaph}{1mm}
    \includegraphics{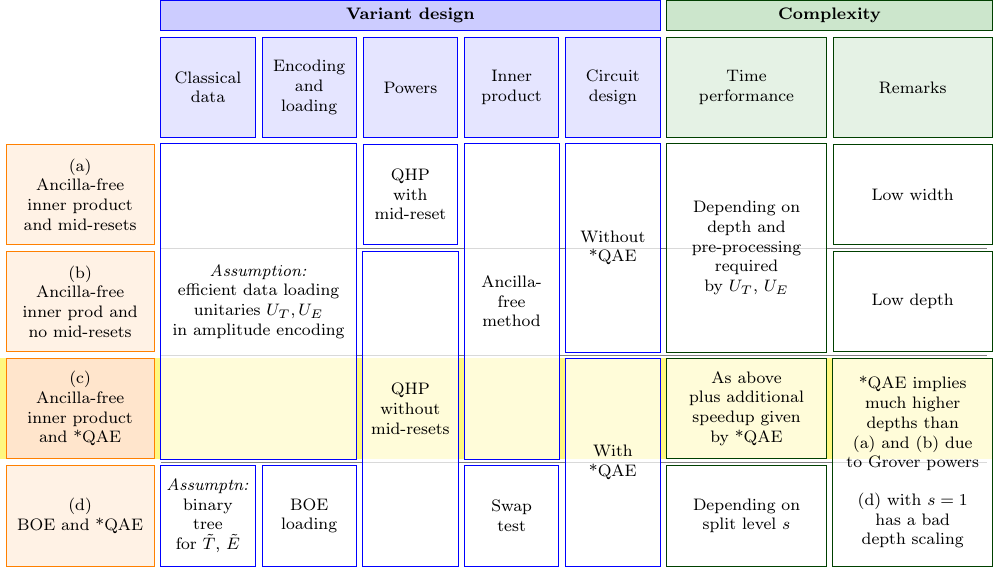}
    \caption{Characteristics of the four proposed implementation variants. Each row is a variant. See Table~\ref{tab:algo-compare-simplified} for quantitative time scaling analysis. Variant~(c) is highlighted as it provides the desired quantum speedup. Terms used in the table are defined in Section~\ref{sec:quantum-approach}: see Subsec.~\ref{subsec:data-encoding} for encoding ($U_T$ and $U_E$ are the loading unitaries for temperatures and prices, BOE stands for Bidirectional Orthogonal Encoding), Subsec.~\ref{subsec:qhp} for powers (QHP is the Quantum Hadamard Product), Subsec.~\ref{subsec:inner} for inner product, Subsec.~\ref{subsec:qae} for the circuit design (*QAE is the Quantum Amplitude Estimation, or any equivalent technique). Refer to Appendix~\ref{appendix:complexity} for a detailed description of the variants and for the extensive complexity analysis comparison.}
    \label{fig:summary}
\end{figure*}

\tikzexternaldisable
\begin{table*}
    \newcolumntype{Y}{>{\centering\arraybackslash}X}
    \centering
    \small
    \begin{tabularx}{\textwidth}{cm{1mm}Yc}
    \toprule
	& &
    Algorithm variant &
    Time complexity
    \\
 
	\midrule
    \multirow{3}{*}{\textit{Classical}} &
    &
    Classical exact &
    $O(N)$
    \\

    & &
    Classical polynomial approx &
    $O_K(N)$
    \\

    & &
    Sampling-based polynomial approx &
    $O_{\beta, K}(\epsilon^{-2} \lg N)$
    \\

	\midrule
    \multirow{2}{*}{\textit{QAE-free}} &
    (a) &
    Ancilla-free inner prod, mid-resets &
    $O_{\beta, K, b} \left(C_{\mathrm{c}, \mathrm{load}}(N) + C_{\mathrm{d}, \mathrm{load}}(N) N^{K-1} \epsilon^{-2} \right)$
    \\

    &
    (b) &
    Ancilla-free inner prod, no mid-res &
    $O_{\beta, K, b} \left(C_{\mathrm{c}, \mathrm{load}}(N) + C_{\mathrm{d}, \mathrm{load}}(N) N^{K-1} \epsilon^{-2} \right)$
    \\
    
	\midrule
    \multirow{2}{*}{\textit{QAE-based}} &
    \mtl[yellow]{c1}(c) &
    Ancilla-free inner prod, *QAE &
    $O_{\beta, K, b} \left( C_{\mathrm{c}, \mathrm{load}}(N) + \left[ C_{\mathrm{d}, \mathrm{load}}(N) + \lg N \right] N^{\frac{K-1}{2}} \epsilon^{-1} \right)$ \mbr{c1}
    \\

    &
    (d) &
    BOE and *QAE &
    $O_{\beta, K, b} \left( N^K \left(2^s +\lg^2 N -s^2 \right) \epsilon^{-1} \right)$
    \\
	\bottomrule
    \end{tabularx}
    \caption{
    Time complexity of the proposed algorithm variants, introduced in Fig.~\ref{fig:summary}, in comparison with the classical benchmarks. Variant~(c) is highlighted as it provides the desired speedup. An extended version is Table~\ref{tab:algo-compare} of Appendix~\ref{appendix:complexity}, that contains other complexity metrics as well as references to the Propositions justifying the results.
    Notations are summarized in Table~\ref{tab:notations}.
    The parameters in the analysis are the data set size $N$, the polynomial degree $K$ ($K \geq 1$), the target precision $\epsilon$ and confidence level $\beta$ such that $\mathbb{P}\left( \abs{ V - v^*} \leq \epsilon \abs{v} \right) \geq \beta$, the coefficients $b_k(\eta)$ in Eq.~\eqref{eq:contractvalue-approx}, and the split level $s=s(N) \in \{1, ..., \lg N\}$ for the BOE.
    Remarkably, the error $\epsilon$ is measured against the exact polynomial evaluation, under the assumption that the polynomial itself is a good approximation of the target volume function. Asymptotic estimations are provided for $\epsilon \to 0$ or $N \to \infty$. Constants affecting asymptotic estimates are marked in the subscript of the big $O$ notation: for instance the notation $O_\beta(\epsilon^{-2})$ is intended for $\epsilon \to 0$ uniformly in $N$, with factors depending only on $\beta$. For readability in the subscripts we use $b$ for $b_k(\eta)$.
    The time scaling is calculated under the additional assumption that norms scale as $\sqrt{N}$, as justified in Subsection~\ref{subsec:assemble-k}. We use the star notation in front of QAE (*QAE) to emphasize that any QAE technique is applicable, such as Iterative QAE (IQAE)~\cite{grinko_iterative_2021} Dynamic QAE~\cite{ghosh_energy_2024}, as long as it shares the same substantial time-scaling with the usual QAE.
    }
    \label{tab:algo-compare-simplified}
\end{table*}
\tikzexternalenable

Our overall approach is sketched in the first column of Fig.~\ref{fig:contrib}. Specifically, we employ polynomial approximation of nonlinear functions and then compute polynomials by resorting to Quantum Hadamard Products (QHP) \cite{holmes_nonlinear_2021,lubasch_variational_2020} for powers. Note that in a classical workflow, a polynomial approximation would also be a step, hidden inside the low-level calculation of the nonlinear function.

In the process of implementing and optimizing the algorithm, we derive the following intermediate findings that are interesting also beyond our specific use case (see second column in Fig.~\ref{fig:contrib}):
\begin{enumerate}[label=(\roman*)]
    \item \textit{Dynamic stopping:} we show that QHP can be implemented via dynamic circuits, introducing the \textit{dynamic stopping}, namely the early abortion of the circuit execution to reduce the average circuit depth, based on mid-circuit measurements which are a relatively new capability in commercial quantum processors,
    \item \textit{Quantum methods for inner products:} we show that the sampling complexity of the swap test for the calculation of an inner product $p$ is unbounded when $p$ tends to $0$, making other techniques (e.g. what we call ancilla-free) more convenient, whenever applicable,
    \item \textit{Investigation of QHP:} we provide a simplified formalization of the QHP, that describes it as a unitary providing a desired output state under a success rate, and we clarify the impact of normalization on the performance of the QHP algorithm,
    \item \textit{Adaptation for QAE:} we show how to adapt our approach based on QHP, in order to make use of the QAE algorithm, appropriately modifying the circuit structure to postpone measurements, and
    \item \textit{Bidirectional Orthogonal Encoding:} we highlight that the encoding strategy proposed in Ref.~\cite{araujo_configurable_2022} is not compatible with swap tests (used for computing the inner product of data vectors), but it can be modified giving rise to the new \textit{Bidirectional Orthogonal Encoding} (BOE), which is suitable for such a task.
\end{enumerate}

By combining the considerations above, we generate multiple implementation variants for our approach, that specifically differ for the encoding and loading subroutines, for the presence of mid-circuit measurements and resets in the circuits, for the quantum subroutine applied to calculate inner products, as well as for the presence of the Quantum Amplitude Estimation (QAE) as a performance booster.
Not all of these combinations are compatible with each other, resulting in four selected variants (see third column in Fig.~\ref{fig:contrib} and details in Fig.~\ref{fig:summary}), namely: (a) ancilla-free inner product and mid-reset, (b) ancilla-free inner product and no mid-reset, (c) ancilla-free inner product and *QAE, (d) BOE and *QAE. The first three variants use the amplitude encoding and the ancilla-free method for inner products, while the last one employs the Bidirectional Orthogonal Encoding (BOE) and the swap test. The notation *QAE remarks that QAE can be replaced by any equivalent alternative, such as the Iterative QAE (IQAE) \cite{grinko_iterative_2021}, the Chebyshev QAE (ChebQAE) \cite{rall_amplitude_2022} or the Dynamic QAE~\cite{ghosh_energy_2024}.

We prove that variant (c) outperforms the others and achieves a quadratic speedup up to polylogarithmic factors when the approximating polynomial has degree~2, if an efficient data loading unitary is available in amplitude encoding. Table~\ref{tab:assumptions}, discussed in Sec.~\ref{sec:complexity}, formalizes the conditions for correctness and speedup, as well as the setting under which the complexity analysis was conducted.
Since the cost of contract valuation is already linear in the number of data points on classical computers, the possibility to improve on that result via quantum algorithms is strictly connected to the ability of loading data efficiently, because in general the data loading procedure has a linear cost~\cite{shende_synthesis_2006, araujo_configurable_2022}. We deeply discuss the impact of data encoding and loading on the performance of the methods.

Our approach is validated on IBM Quantum devices, for small problem instances ($N=4$ data points). In this setting, overall errors are in line with the theory, but the effect of noise is already clearly observable when drilling down to the errors associated to high powers in the approximating polynomial.

\newpage
\subsection{Prior art on nonlinear transformations}
While calculating nonlinear function of data stored in the basis encoding can be done relatively easily by mimicking classical arithmetic~\cite{rieffel_quantum_2011, Nielsen} (see also Ref.~\cite[Appendix~C]{rebentrost_quantum_2018}), quantum circuits that produce nonlinear behaviors of amplitudes are not straightforwardly available, due to the native linearity of quantum operators \cite{terno_nonlinear_1999, holmes_nonlinear_2021, horodecki_limits_2003, schuld_quest_2014}.

Methods for treating nonlinearities either exploit black-box approaches for usage in trainable circuits~\cite{cong_quantum_2019,beer_training_2020}, or represent specific types of nonlinear functions~\cite{leyton_quantum_2008, 10.5555/2011586.2011592}. Ref.~\cite{terno_nonlinear_1999} discusses the role of selective operations and the probability of failure for nonlinear circuits. The work~\cite{maronese_quantum_2022} pursues an objective that is complementary to ours: it estimates an arbitrary (activation) function that takes an inner product as the input, while we approximate the inner product of two vectors, one of which is the output of an arbitrary function.
Ref.~\cite{guo_nonlinear_2021} discusses nonlinear transformations of amplitudes loaded by an oracle through polynomial approximations, and the method was very recently improved in Ref.~\cite{rattew_non-linear_2023} through importance sampling. The paper also shows applications in machine learning and optimization. Such methods though are not immediately applicable to our case, where we want to scale the number of data points: indeed the cited works do not discuss the effect of input renormalization, and conversely focus on data points that increasingly concentrate around $0$ when the problem size grows.

Remarkably, while prior works~\cite{rebentrost_quantum_2018, egger_credit_2021} for Monte Carlo estimations based on QAE typically encode the \textit{probabilities} of each outcome of a discrete distribution, we only assume availability of \textit{samples} from potentially continuous processes of temperatures and prices, and directly encode them. Our method is then agnostic on whether input data are generated by simple independent shots of a single variable, or sophisticated autoregressive models. Nonetheless, for simplicity of analysis, the complexity is discussed here under a set of additional assumptions, slightly more general than independence and identical distribution of the underlying variables (see Table~\ref{tab:assumptions}).

\subsection{Background on portfolio risk analysis in the energy industry}

The gas demand of households or heating can be well described by a deterministic dependency
of gas volumes and weather variables, typically the temperature. 
Standard contracts for private or industrial customers normally entail full-supply gas delivery, without volume constraints. Namely, the customer pays
a fixed price  for the individual consumption whereas the supplier takes
the risk of volume deviations from the projected load profile of the customer. For
example, on cold days the gas demand is likely to be higher than expected, therefore in order to
meet the demand, a supplier has to buy the extra gas needed on the day-ahead market,
typically at prices that are higher than those contracted with the customer. 

In contrast, excess volumes need to be sold by the supplier for lower prices in order to balance the
economic position when temperatures are higher than expected. This leads to costs and risks for the gas supplier which can be managed with either purely temperature-based weather derivatives or with cross-commodity temperature derivative contracts, often called \emph{quantos}. 

Accordingly, risk managers perform risk analysis and compute the fair value and some statistics of the entire weather-related portfolio and of the contracts it consists of. To this end, one defines upfront a joint stochastic model for the gas prices and temperatures and relies on extensive and time-consuming Monte Carlo simulations to estimate the fair value, which is seen as the sample mean, and some risk measures related to the sample statistics~\cite{BenthBenth05, BenthBenth07, cucu_et_al16, cs20_2}. %

Let us consider a simplified weather-related portfolio which depends on gas and temperature. %
For simplicity, we consider one market and one temperature station. On the other hand, a real portfolio would consider multiple markets as well as several weather stations.
Suppose we have a one-year time horizon
from 1-Jan-2022 to 31-Dec-2022, with daily granularity.

We assume that the gas prices and the temperatures, denoted $[E'_j]$ and $[T'_j]$, $j=0, 1, ..., 364$, respectively, are given. Typically they are generated by a two-factor Markov model each, namely the value at time $j$ is updated with $2$ random variables, hence the total number of random draws is $365 \times 2=730$ both for the gas and for the temperature, times the number of markets, leading to thousands. These random variables are usually assumed to be normally distributed and mutually correlated even across gas and temperature. If then one considers $R$ Monte Carlo repetitions one has to generate a random sample linearly scaling with $R$.

As mentioned, the portfolio we focus on consists of fully supply contracts based on which the customer can nominate gas volumes at an agreed sales price, denoted $asp$. These contracts are then implicitly temperature dependent indeed,  their volume can be described by a function $f$ of the temperature which is supposed to model the customers' demand.  Such a function is often a so-called\footnote{Despite the name being widespread in the energy industry, the function is not a sigmoid according to the usual mathematical definition.} \emph{sigmoid} function, namely
\begin{equation} \label{eq:volume}
    f(T') = \frac{A}{1 + \left( \frac{B}{T' - T_0}\right)^C} + D,
\end{equation}
with $T'\le T_0$. Of course, the higher the temperature, the lower is the volume demand and vice-versa.
The parameters $A\ge 0$, $B<0$, $C>1$, $D>0$ are given, in the sense that they are either part of the contract or are estimated from the historical demand of the cluster a specific customer belongs to (eg. households, medium-size enterprises, etc.), and normally one takes $T_0=40$ degree Celsius. %

\begin{table}[t!]
    \newcolumntype{Y}{>{\centering\arraybackslash}X}
    \centering
    \small
    \begin{tabularx}{\columnwidth}{cY}
    \toprule
	Symbol & Meaning \\
	\midrule
	$j=0,...,N-1$ & Index over time\\
	$[T'_j], [E'_j]$ & Temperature and price series\\
	$[T_j], [E_j]$ & Normalized temperature and price series, see Eqs.~\eqref{eq:defn-rho-T}, \eqref{eq:defn-rho-E}\\
	$[\tilde T_j], [\tilde E_j]$ & Normalized square-rooted temperature and price series, see Eq.~\eqref{eq:defn-TandE-bidir}\\
	$\rho_T, \rho_E, \tilde \rho_T, \tilde \rho_E$ & Normalization factors, see Eqs.~\eqref{eq:defn-rho-T}, \eqref{eq:defn-rho-E}, \eqref{eq:defn-rho-TandE-bidir}\\
	$\eta$ & Translation term for temperatures, see Eq.~\eqref{eq:affinity}\\
	$f(T')$ & Volume function, see Eq.~\eqref{eq:volume} \\
	$v$ & Contract value, see Eq.~\eqref{eq:contractvalue} \\
	$v^*$ & Polynomial approximate contract value, see Eq.~\eqref{eq:finalsum} \\
	$k=0,...,K$& Index over monomials in the approximating polynomial, see Eq.~\eqref{eq:poly}\\
	$b_k(\eta)$ & Coefficients in the approximating polynomial, see Eq.\eqref{eq:poly}\\
    $y_k$ & Normalized inner product defined in Eq.~\eqref{eq:defn-yk} \\
    $y'_k$ & Inner product defined in Eq.~\eqref{eq:defn-yprimek} \\
	$\ket{\psi_T}, \ket{\psi^{(k)}_T}$ & Amplitude encoding for $[T_j]$, $[T_j^k]$, see Eqs.~\eqref{eq:psiT} and \eqref{eq:psiTpower}\\
	$\ket{\tilde \psi_{\tilde T}}, \ket{\tilde \psi^{(k)}_{\tilde T}}$ & BOE for $[T_j]$, $[T_j^k]$, see Eqs.~\eqref{eq:psiTandE-bidir} and~\eqref{eq:psiTpower-bidir}\\
	$a_k, \tilde a_k$ & Normalization factor for $[T_j^k]$ and $[\tilde T_j^k]$, see Eqs.~\eqref{eq:defn_ak} and~\eqref{eq:defn_ak-bidir}\\
    $\odot$ & Quantum Hadamard Product, see Eq.~\eqref{eq:qhp}\\
    $s$ & Split level in the BOE \\
    $S$ & Number of samples \\
    $C_\cdot$ & Complexity measures, see Subsec.~\ref{subsec:complex-measures}\\
    $\epsilon, \alpha$ & Acceptable error threshold and associated confidence level for estimators\\
    $\lg$ & Base-2 logarithm\\
    $\norm{\cdot}$ & 2-norm of a vector (Euclidean norm) \\
	\bottomrule
    \end{tabularx}
    \caption{Summary of notations employed in the paper and appendices. Primed letters are used for the non-normalized version of variables, while tildes indicate normalization of according to square root of the input data.}
    \label{tab:notations}
\end{table}

Finally, we are given a vector $\tau'_j$ named \emph{season-normal} temperature which describes our daily expectation of the temperature station.

In our study we focus on the calculation of the change of gross margin, defined as the (unknown random) difference between the net random sales less the random costs at a certain future time $j$ and the planned, therefore known, sales minus cost at the same future time $j$.
Formally, the change (delta) gross margin ($\Delta GM$) of a contract, depending on a certain gas market and a certain temperature can be written as
\begin{equation}
\Delta GM = \sum_{j=0}^{N-1}(f(T'_j) - f(\tau'_j)) (asp - E'_j)).
\label{eq:deltaGM}
\end{equation}
In this study, we present multiple quantum approaches to evaluate $\Delta GM$ given the temperature and energy price vectors, and discuss their potential advantages over classical counterparts under different conditions. More specifically, we provide methods to efficiently compute contract value functions of the form 
\begin{equation}\label{eq:contractvalue}
    v = \sum_{j=0}^{N-1} f(T'_j) E'_j,
\end{equation}
which can be used to reconstruct the expression in Eq.~\eqref{eq:deltaGM}.

Notations adopted across the paper are collected in Table~\ref{tab:notations}.

\begin{figure*}[t!]
    \centering
    \includegraphics{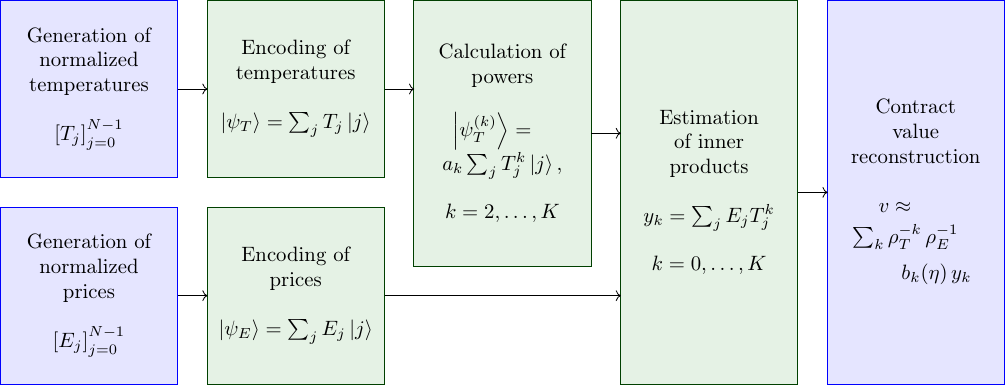}
    \caption{Algorithm for the estimation of the contract value through the polynomial approximation of $f$. The figure instantiates the approach represented in the first column of Fig.~\ref{fig:contrib}. Blue boxes represent classical pre- and post-processing, while green boxes are the core quantum processing.}
    \label{fig:diagram}
\end{figure*}

\section{A hybrid quantum-classical approach}\label{sec:quantum-approach}

Consider a single realization of the time series representing temperatures and prices, namely two real vectors $[T'_j]_{j=0}^{N-1}$ and $[E'_j]_{j=0}^{N-1}$ indexed over time. Suppose they are classically generated, and transformed into the normalized versions $[T_j]_{j=0}^{N-1}$ and $[E_j]_{j=0}^{N-1}$ through the affinities:
\begin{equation}\label{eq:affinity}
    \begin{cases}
    T_j = \rho_T (T'_j - \eta),\\
    E_j = \rho_E E'_j,
    \end{cases}
\end{equation}
where $\eta$ is an appropriate translation to guarantee $T_j>0$ for all $j$ at least in probability (refer to Assumption~\ref{ass:innerprod-positive} below), and $\rho_T$ and $\rho_E$ are suitable scale factors:
\begin{eqnarray}
    \label{eq:defn-rho-T}
    \rho_T^{-2} = \sum_{j=0}^{N-1} (T'_j-\eta)^2\\
    \label{eq:defn-rho-E}
    \rho_E^{-2} = \sum_{j=0}^{N-1} (E'_j)^2.
\end{eqnarray}
Our proposed quantum algorithm, outlined in Fig.~\ref{fig:diagram}, approximates the volume function $f(T')$ by means of a polynomial of degree $K$. %
WLOG, we can write the polynomial in the form
\begin{equation}\label{eq:poly}
    f(T'_j) \approx p(T'_j) := \sum_{k=0}^K b_k(\eta) \; (T'_j - \eta)^k,
\end{equation}
where $b_k(\eta)$ are real coefficients. Consequently the contract value of Eq.~\eqref{eq:contractvalue} writes
\begin{equation}\label{eq:contractvalue-approx}
\begin{split}
    v &= \sum_{j=0}^{N-1} E'_j \; f(T'_j) \\
    & \approx
    v^* := 
    \sum_{j=0}^{N-1} E'_j \; p(T'_j) =
    \sum_{j,k} b_k(\eta) \; E'_j \; (T'_j-\eta)^k.
\end{split}
\end{equation}
We require the following:
\begin{assumption}\label{ass:vapprox-good}
    The polynomial $p$ is a good approximation of the function $f$ under the probability distributions of prices and temperatures, in the following sense:
    $\mathbb{P}\left(\abs{ v^* - v} \leq \epsilon \abs{v} \right) \geq \beta$.
\end{assumption}
A sufficient condition is provided in Prop.~\ref{prop:sufficient-vapprox} and commented in the subsequent remarks of Subsec.~\ref{subsec:assemble-k}.

The algorithm starts by loading normalized temperature and price vectors into a quantum register. A sequence of non-linear transformations is then exploited to calculate the powers $T_j^k$ for all $j=0,\dots, N-1$ and $k=0,\dots, K$. Finally, the inner product of the processed vectors is efficiently evaluated, thus returning 
\begin{equation}\label{eq:defn-yk}
    y_k := \sum_{j=0}^{N-1} E_j \, T_j^k
\end{equation}
for all $k$. The estimation of $y_k$ is boosted by the QAE algorithm, that provides the quadratic speedup (up to polylogarithmic factors).

Finally, the contract value in Eq.~\eqref{eq:contractvalue-approx}
is reconstructed as the summation
\begin{equation}\label{eq:finalsum}
    v^* = \sum_{k=0}^{K} \rho_T^{-k} \, \rho_E^{-1} \; b_k(\eta) \; y_k,
\end{equation}
where $b_k(\eta)$ are known classically. It will be useful to write $v^* = \sum_k b_k(\eta) \; y'_k$ where
\begin{equation}\label{eq:defn-yprimek}
y'_k:= \rho_T^{-k} \, \rho_E^{-1} \; y_k.
\end{equation}
In the remainder of this Section, we discuss in detail the key parts of the quantum algorithm, namely data encoding, power calculation, and inner product. We focus on the main implementation variant (c), while highlighting the key differences where relevant. For additional clarity, we summarize the main variant at the end of the Section, namely in Subsection~\ref{subsec:variants-selected}.

\subsection{Data encoding and loading}\label{subsec:data-encoding}
There are multiple ways to represent the same data set in quantum registers \cite{weigold_encoding_2021,barenco1995elementary, Kumar, Cortes, Plesch, Miszczak, Heinosaari, Bengtsson}, such as the \textit{basis} (aka \textit{digital}, or \textit{equally-weighted}) \textit{encoding}, the \textit{amplitude} (aka \textit{analog}) \textit{encoding}, the \textit{angle encoding}, etc. The subsequent quantum processing techniques are highly dependent on the data encoding protocol. Here, we focus on the amplitude encoding. Additionally, we introduce the \textit{Bidirectional Orthogonal Encoding} (BOE), which can be seen as a variant of the amplitude encoding designed to balance circuit width and depth. %

In the \textit{amplitude encoding}, a normalized classical vector $[D_j]$ of linear size $N$ is represented as
\begin{equation}\label{eq:psiD-ampl}
     \ket{\psi_D} := \sum_{j=0}^{N-1} D_j \ket{j},
\end{equation}
where $\{\ket{j}\}$ are computational basis states of $\lg N$ qubits. While amplitude encoding offers an effective use of memory resources, the exact preparation of an arbitrary state $\ket{\psi_D}$ of the form shown in Eq.~\eqref{eq:psiD-ampl} requires $O\left(N\right)$ operations in the worst case~\cite{shende_synthesis_2006}, thus jeopardizing the benefits of many quantum algorithms. In practice, amplitude encoding remains attractive in combination with approximate data loading techniques (e.g., qGANs \cite{zoufal2019quantum, agliardi_optimal_2022}), in the presence of specific data structures \cite{grover_creating_2002} or under standard quantum memory assumptions \cite{giovannetti_architectures_2008}. In our complexity analysis, the overall computational time is studied as a function of the classical and quantum costs of data loading, without constraining our work to a specific choice of loading technique, in consideration of the constant evolution of such a critical and universal subject. Notice that, for our purposes, loading data from a classical memory or generating data through a quantum simulation are equivalent processes: our only requirement is a unitary which encodes the samples in the amplitudes.

Appendix~\ref{appendix:bidir} details a second version of our protocol that exploits an alternative scheme, based on a newly introduced data encoding that we name \textit{Bidirectional Orthogonal Encoding} (BOE). Even though in the current application the BOE does not achieve the same performance as the amplitude encoding, it may turn useful beyond our use case, as it is designed to balance memory and time resources (i.e., circuit width and depth). On one side, it builds upon the bidirectional encoding %
\cite{araujo_configurable_2022}, and on the other side, it guarantees that side registers are orthogonal \cite{araujo2021divide}, thus enabling subsequent processing through swap tests. The BOE is therefore characterised by states of the form
\begin{equation}
    \ket{\tilde\psi_{D}} := \sum_{j=0}^{N-1} D_j \ket{j} \ket{\phi^D_j}, \label{eq:psiD-bidir}
\end{equation}
where the second register is auxiliary, and contains states entangled to the main register, with the orthonormal property $\braket{\phi^D_i}{\phi^D_j} = \delta_{ij}$. Similar to the bidirectional encoding, the BOE requires classical data to be organized in a binary tree structure (Fig.~\ref{fig:binary-tree}). The split level steers the balance between circuit depth and width: for $s=1$, the width is $O(N)$ and the depth $O(\lg^2 N)$ (depth efficient), while for $s=\lg N$ the width is $O(\lg N)$ and the depth $O(N)$ (memory efficient, akin to amplitude encoding).
The technique is less favourable than the amplitude encoding in our context as Eq.~\eqref{eq:finalsum} gets replaced by Eq.~\eqref{eq:finalsum-bidir}, that has a quadratic dependence on the norms instead of a linear one, implying a worse scaling with $N$. Refer to Subsection~\ref{subsec:variants-selected} for more information on the error scaling.
The data loading procedure is detailed in Subsection~\ref{subsec:bidir}.

\subsection{Non-linear transformation with QHP}\label{subsec:qhp}
\begin{figure}[t]
    \centering
    \tikzsetnextfilename{qhp}
    \begin{quantikz}
    \lstick{$\ket{\psi_0}$} & \qwbundle{n} & \ctrl{1} \gategroup[wires=2,steps=2,style={dashed, rounded corners, fill=black!10}, background]{} & \qw{} & \rstick{$\ket{\psi_0 \odot \psi_1}$} \qw{}\\
    \lstick{$\ket{\psi_1}$} & \qwbundle{n} & \targ{} & \meterD{\ketbra{0}{0}}
    \end{quantikz}
    \caption{Circuit implementation of the Quantum Hadamard Product producing the state $\ket{\psi_0 \odot \psi_1}$ when the bottom register is measured in the state $\ket{0}^{\otimes n}$. The success probability is $a^{-2}$, where $a$ is the normalization constant appearing in Eq.~\eqref{eq:qhp}. The CNOT notation on registers indicates pairwise application on belonging qubits (namely, a CNOT between the first qubits of both register, a CNOT between the second qubits of both registers, etc.)}
    \label{fig:qhp}
\end{figure}
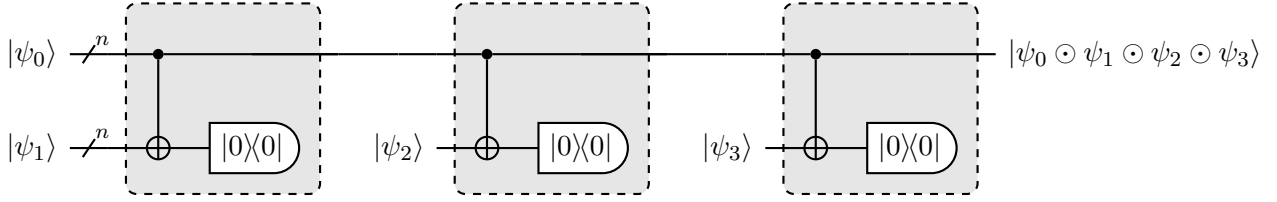
\begin{figure*}[t!]
    \centering
    \tikzsetnextfilename{qhp-midreset}
    \begin{quantikz}
    \lstick{$\ket{\psi_0}$} & \qwbundle{n} & \ctrl{1}\gategroup[wires=2,steps=2,style={dashed, rounded corners, fill=black!10}, background]{} & \qw{} & \hphantomgate{wide}\qw{} & \qw{} & \ctrl{1}\gategroup[wires=2,steps=2,style={dashed, rounded corners, fill=black!10}, background]{} & \qw{} & \hphantomgate{wide}\qw{} & \qw{} & \ctrl{1}\gategroup[wires=2,steps=2,style={dashed, rounded corners, fill=black!10}, background]{} & \qw{} & \rstick{$\ket{\psi_0 \odot \psi_1 \odot \psi_2 \odot \psi_3}$} \qw{} \\
    \lstick{$\ket{\psi_1}$} & \qwbundle{n} & \targ{} & \meterD{\ketbra{0}{0}} & & \lstick{$\ket{\psi_2}$} & \targ{} & \meterD{\ketbra{0}{0}} & & \lstick{$\ket{\psi_3}$}  & \targ{} & \meterD{\ketbra{0}{0}}
    \end{quantikz}
    \caption{QHP of $k=4$ vectors with mid-measurements and mid-resets, requiring $k-1=3$ iterations.}
    \label{fig:qhp-midreset}
\end{figure*}
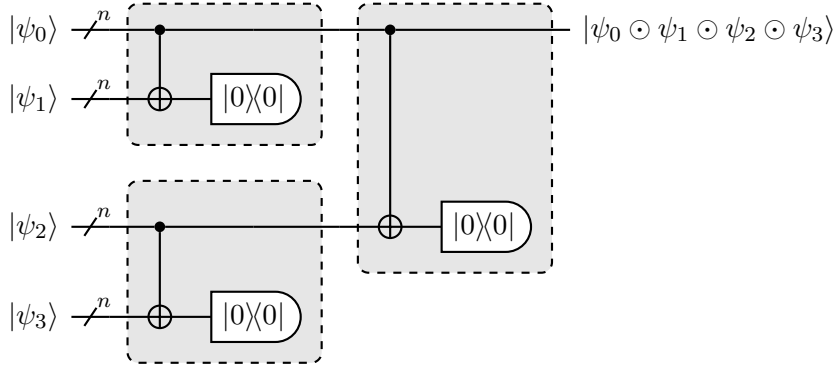
\begin{figure*}[t!]
    \centering
    \tikzsetnextfilename{qhp-nomidreset}
    \begin{quantikz}
    \lstick{$\ket{\psi_0}$} & \qwbundle{n} & \ctrl{1} \gategroup[wires=2,steps=2,style={dashed, rounded corners, fill=black!10}, background]{} & \qw{} & \qw{} & \ctrl{3} \gategroup[wires=4,steps=2,style={dashed, rounded corners, fill=black!10}, background]{} & \qw{} & \rstick{$\ket{\psi_0 \odot \psi_1 \odot \psi_2 \odot \psi_3}$} \qw{}\\
    \lstick{$\ket{\psi_1}$} & \qwbundle{n} & \targ{} & \meterD{\ketbra{0}{0}} & & & \\
    &&&&&&&\\
    \lstick{$\ket{\psi_2}$} & \qwbundle{n} & \ctrl{1} \gategroup[wires=2,steps=2,style={dashed, rounded corners, fill=black!10}, background]{} & \qw{} & \qw{} & \targ{} & \meterD{\ketbra{0}{0}} & & \\
    \lstick{$\ket{\psi_3}$} & \qwbundle{n} & \targ{} & \meterD{\ketbra{0}{0}} & & &
    \end{quantikz}
    \caption{QHP of $k=4$ vectors with no mid-measurements and no mid-resets, requiring $\lg k = 2$ iterations (note that all measurements can be postponed to the end). This implementation is adopted in \cite{lubasch_variational_2020}, without being named QHP.}
    \label{fig:qhp-nomidreset}
\end{figure*}
Let us now discuss the non-linear transformation, namely the calculation of monomial powers. Assume temperatures can be encoded in the amplitudes of a quantum state. More precisely, let $[T_j]_{j=0}^{N-1}$ be a normalized time series of temperatures, namely a vector of real numbers such that $\sum_j T^2_j = 1$. We can assume $[T_j]_{j=0}^{N-1}$ is obtained by simulated temperatures through a translation and re-scaling as in Eq.~\eqref{eq:affinity}. We are able to produce the state
\begin{equation}\label{eq:psiT}
    \ket{\psi_T} := \sum_{j=0}^{N-1} T_j \ket{j}.
\end{equation}
In order to evaluate the approximating polynomial, we need the powers $T_j^k$, in the form of the state
\begin{equation}\label{eq:psiTpower}
    \ket{\psi^{(k)}_T} := a_k \sum_{j=0}^{N-1} T_j^k \ket{j},
\end{equation}
for $k=1,\dots,K$, where $a_k$ is the appropriate scale factor, 
\begin{equation}\label{eq:defn_ak}
    a_k:=\left(\sum_{j=0}^{N-1} T_j^{2k} \right)^{-\sfrac{1}{2}}.
\end{equation}
Note that $a_k$ is an increasing finite sequence, when $k$ grows. In particular, $a_k \geq 1$ for all $k$ since $a_1=1$.

For the calculation of the powers, we resort to the non-linear transformation known as Quantum Hadamard Product (QHP), recently introduced by Holmes \textit{et al.}~\cite{holmes_nonlinear_2021}.
Given two states $\ket{\psi_0}$ and $\ket{\psi_1}$, their QHP is the state
\begin{equation}\label{eq:qhp}
    \ket{\psi_0 \odot \psi_1} := a \sum_{j=0}^{N-1} \braket{\psi_0}{j} \braket{\psi_1}{j} \ket{j},
\end{equation}
which, in the circuit of Fig.~\ref{fig:qhp}, is obtained in the first register as the result of a post-selection conditioned on the second register being in $\ket{0}^{\otimes n}$. The probability to measure $\ket{0}$ in the second register, i.e. the success rate for the calculation of the QHP, equals $a^{-2}$. It is important to mention that the QHP was originally defined~\cite{holmes_nonlinear_2021} as the (not necessarily normalized) weighted state arising from the application of a rank-1 measurement operator $\ketbra{0}{0}^{\otimes n}$, i.e., as the output of the circuit in Fig.~\ref{fig:qhp} without post-selection. Our simpler formulation in Eq.~\eqref{eq:qhp} is, nevertheless, fully equivalent and well suited for our purposes.

The QHP can easily be iterated to compute the Hadamard product of multiple vectors, hence producing higher-order powers. In particular, states of the form $\ket{\psi^{(k)}_T}$ as defined in Eq.~\eqref{eq:psiTpower} are obtained by loading $k$ copies of $\ket{ \psi_T }$ as input states and calculating their QHPs:
\begin{equation}\label{eq:qhp-k}
    \ket{\psi_T^{(k)}} := \ket{\bigodot^k \psi_T}.
\end{equation}
Figures~\ref{fig:qhp-midreset} and~\ref{fig:qhp-nomidreset} show two implementations for the QHP of four states. Note that the former requires mid-circuit measurements, and has higher depth but lower width than the latter. The former is also suited for an additional improvement, that we call \textit{dynamic stopping}: thanks to the feature of `dynamic circuits' recently made available on commercial hardware \cite{pressrelease_dynamic_2021, dynamic_circuits}, the execution of the circuit can be aborted right after a mid-circuit measurements if the measurement does not output a $0$, as this corresponds to a failure in the QHP. This way, dynamic stopping allows reducing the average circuit depth.

In the main algorithm variant (c), highlighted in Fig.~\ref{fig:summary} and in Table~\ref{tab:algo-compare-simplified}, we want to exploit QAE and related techniques for improved performance, and therefore we need a unitary circuit, thus forcing us to the adoption of the implementation without mid-resets. Variants that include mid-resets and dynamic stopping are discussed in the Appendices~\ref{appendix:inner-prod} and~\ref{appendix:complexity}.

\subsection{Computing inner products}\label{subsec:inner}

The third quantum step of the algorithm (Fig.~\ref{fig:diagram}) is the calculation of the inner products
\begin{equation}
    y_k := \sum_{j=0}^{N-1} E_j T_j^k,
\end{equation}
for $k = 1,\dots,K$.

\begin{figure}[t]
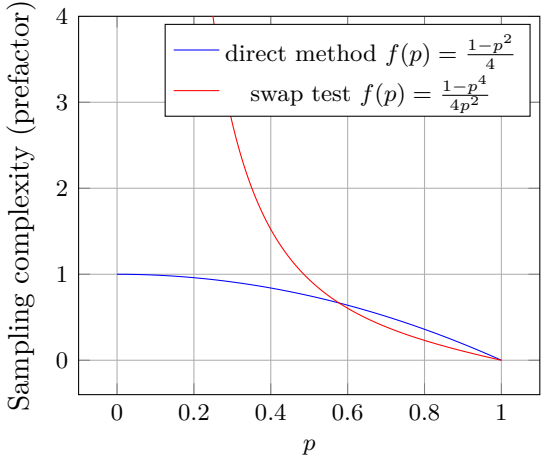

    \centering
    \include{Figures/innerproducts0}
    \caption{The dependence of the sampling complexity on the target inner product $p$ between two statevectors. For the swap test, the complexity is not upper bounded when $p \to 0$. Refer to Prop.~\ref{prop:errorbound-swap} for the details. A numerical evidence can be found in Fig.~\ref{fig:inner-compare}. %
    }
    \label{fig:inner-p}
\end{figure}

Again, let us start assuming that we have two real vectors mapped to quantum states of the form $\ket{\psi_\alpha}=\sum_j \psi^{(\alpha)}_j\ket{j} $ via amplitude encoding. Multiple quantum techniques for the computation of the absolute inner product $\left| \braket{\psi_0}{\psi_1} \right| = \left| \sum_j \psi^{(0)}_j \psi^{(1)}_j \right| := p$ are known. Specifically, in Appendix~\ref{appendix:inner-prod}, we describe and compare the so-called \textit{swap test} and the \textit{ancilla-free method}. The latter is simply grounded on the observation that, given the two loading unitaries $U_T$ and $U_E$, the expression $\matrixel{0}{U_E^\dag U_T}{0}$ calculates the inner product of the two vectors.

Now, the swap test provides an estimation of $ \frac{1}{2}+\frac{1}{2} p^2$, while the ancilla-free method directly outputs $p^2$. As an effect, one obtains that with the ancilla-free method, the sampling complexity is independent of $p$, while with the swap test it is unbounded for $p \to 0$, as shown numerically in Fig.~\ref{fig:inner-compare}. In other words, it is impossible to estimate the number of required samples to achieve a given precision, without knowing (an estimation of) the result itself, see Fig.~\ref{fig:inner-p}. For this reason we choose, for all results discussed in the main text, the ancilla-free method in association with the amplitude encoding. Refer to the Appendix~\ref{appendix:inner-prod} for more information on both techniques, and in particular to Table~\ref{tab:inner-compare} for an in-depth comparison.

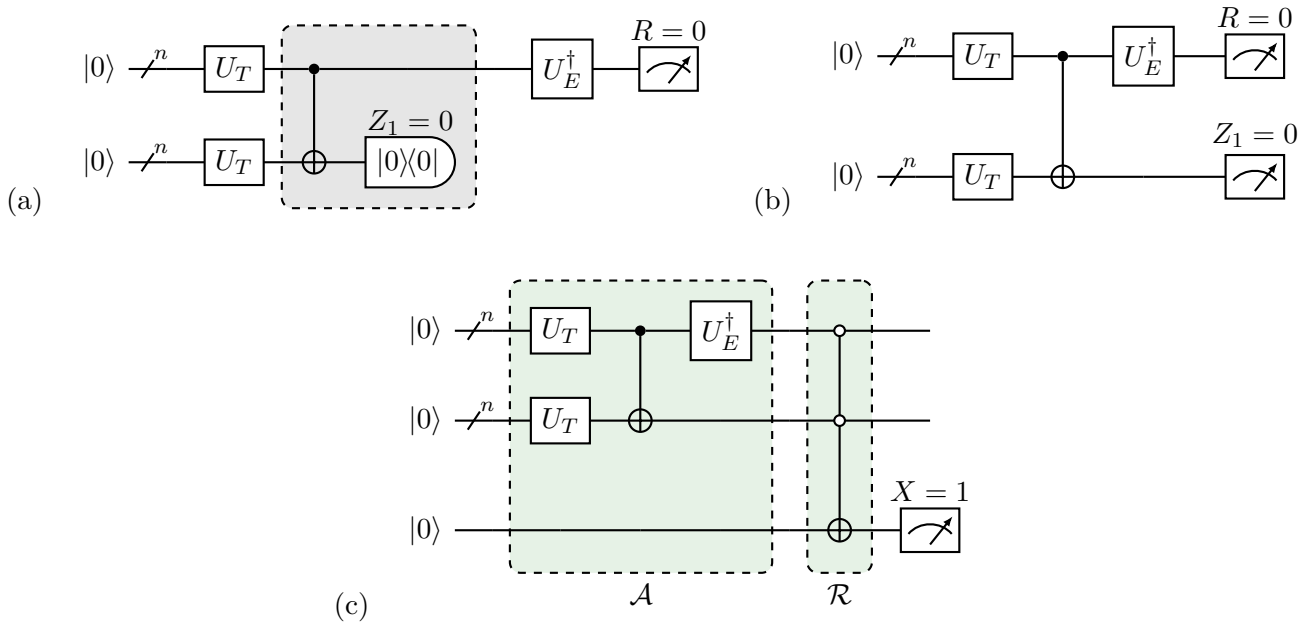
\begin{figure*}[t]
    \centering
    (a)
    \tikzsetnextfilename{qhp-direct1}
    \begin{quantikz}
    \lstick{$\ket{0}$} & \qwbundle{n} & \gate{U_T} & \ctrl{1} \gategroup[wires=2,steps=2,style={dashed, rounded corners, fill=black!10}, background]{} & \qw{} & \qw{} & \gate{U_E^\dag} & \meter{$R=0$} \\
    \lstick{$\ket{0}$} & \qwbundle{n} & \gate{U_T} & \targ{} & \meterD{\ketbra{0}{0}} \gategroup[wires=1,steps=1,style={draw=none}, label style={label position=above, anchor=north}]{$Z_1=0$}
    \end{quantikz}
    \hfill
    (b)
    \tikzsetnextfilename{qhp-direct2}
    \begin{quantikz}
    \lstick{$\ket{0}$} & \qwbundle{n} & \gate{U_T} & \ctrl{1} & \gate{U_E^\dag} & \meter{$R=0$} \\
    \lstick{$\ket{0}$} & \qwbundle{n} & \gate{U_T} & \targ{} & \qw{} & \meter{$Z_1=0$}
    \end{quantikz}\\[8mm]
    (c)
    \tikzsetnextfilename{qhp-direct3}
    \begin{quantikz}
    \lstick{$\ket{0}$} & \qwbundle{n} & \gate{U_T} \gategroup[wires=3,steps=3,style={dashed, rounded corners, fill=green!10}, background, label style={label position=below, anchor=north, yshift=-0.2cm}]{$\mathcal{A}$} & \ctrl{1} & \gate{U_E^\dag} & \qw{} & \octrl{1} \gategroup[wires=3,steps=1,style={dashed, rounded corners, fill=green!10}, background, label style={label position=below, anchor=north, yshift=-0.2cm}]{$\mathcal{R}$}   & \qw{} \\
    \lstick{$\ket{0}$} & \qwbundle{n} & \gate{U_T} & \targ{} & \qw{} & \qw{} & \octrl{1} & \qw{} \\
    \lstick{$\ket{0}$} & \qw{} & \qw{} & \qw{} & \qw{} & \qw{} & \targ{} & \meter{$X=1$}
    \end{quantikz}
    \caption{An exemplary demonstration of the algorithm based on QHP and the ancilla-free method, for the calculation of the inner product between temperatures to the power $k$ and prices, when $k=2$. $U_T$ and $U_E$ are the loading unitaries for $T$ and $E$ respectively. In (a), the algorithm in its original formulation, where the gray box highlights the QHP. In (b), the version without mid-measurements. All measurements are deferred to the end. In (c), through a multi-controlled NOT, a single qubit needs measurement. This suggests the definition of unitaries $\mathcal{A}$ and $\mathcal{R}$, marked in green, that can be fed to a QAE algorithm for efficient estimation, as described in Prop.~\ref{prop:errorbound-qae-qhp+direct}.}
    \label{fig:qhp+direct}
\end{figure*}

Assembling all the algorithm components introduced so far, the inner product $\braket{\psi_T^{(k)}}{\psi_E}$ can be estimated, provided that the following holds
\begin{assumption}\label{ass:innerprod-positive}
For all powers $k=0, \dots, K$, we assume that the inner product of the normalized vectors $\sum_j T_j^k E_j$ is not null, and that its sign can be determined a priori.
\end{assumption}
For simplicity, we rely on the following stronger version:
\begin{assumption}\label{ass:innerprod-positiveterms}
For all  $j = 0, \dots, N-1$, we assume $T_j> 0$ and $ E_j > 0$, and  $\sum_j T_j^k E_j$ is not null (and therefore positive).
\end{assumption}

As a remark, we highlight that Assumptions~\ref{ass:innerprod-positive} and~\ref{ass:innerprod-positiveterms} may be overcome in variant~(c) by resorting to sign-sensitive QAE methods~\cite{manzano_real_2023}. For ease of treatise and comparison, and in consideration that input data would still require a normalization through an affitinity in the form of Eq.~\eqref{eq:affinity}, we nevertheless enforce Assumption~\ref{ass:innerprod-positiveterms} across the manuscript.

\subsection{Quantum Amplitude Estimation techniques}\label{subsec:qae}
The final building block for our main implementation variant (c) is the Quantum Amplitude Estimation (QAE), that is typically described as the quantum alternative to classical Monte Carlo methods, providing a quadratic speedup against the classical version, by leveraging Montanaro's approach~\cite{montanaro_quantum_2015}. In this case, we are rather interested in the fact that it also provides a quadratic speedup in terms of precision $\epsilon$ when estimating the amplitude of a state, against the straight-forward averaging of measuring repeated shots of a quantum circuit (refer again to Ref.~\cite{rebentrost_quantum_2018} for a synthetic explanation). We often use the star notation in front of QAE (*QAE) to emphasize that any QAE technique is applicable, such as Iterative QAE (IQAE)~\cite{grinko_iterative_2021}  or Dynamic QAE~\cite{ghosh_energy_2024}, as long as it shares the same substantial time-scaling with the usual QAE.

\subsection{Assembling the main variant and adapting the circuit for QAE}\label{subsec:variants-selected}

We assume data is available in the amplitude encoding. The circuit is fed with $k$ copies of the temperature state $\ket{\psi_T}$ in Eq.~\eqref{eq:psiT}, so that the power Eq.~\eqref{eq:psiTpower} of the temperature vector is calculated trough the QHP. Moreover, we resort to the QHP implementation without mid-circuit resets, as depicted in Fig.~\ref{fig:qhp-nomidreset}. Afterwards, under the assumption that a loading unitary $U_E$ for $\ket{\psi_E}$ is known, the inner products $y_k$ can be calculated with the ancilla-free method. The circuit described so far can be slightly modified as shown in Fig.~\ref{fig:qhp+direct}(c), to be fed to a *QAE technique. Finally the volume $v$ is reconstructed classically via Eq.~\eqref{eq:finalsum}.

It should be noted now that various monomials in Eq.~\eqref{eq:poly} affect the overall error differently, as a result of the factors $\rho_T^{-k} \rho_E^{-1} b_k(\eta)$ in Eq.~\eqref{eq:finalsum}. In order to achieve a relative error $\leq \epsilon$ on the final result $v$, the absolute error of the monomial of index $k$ must be controlled by
\begin{equation}\label{eq:eps-k}
    \epsilon_k := \epsilon \; \rho_T^{k-1} \; K^{-1} \; \abs{b_k(\eta)}^{-1}.
\end{equation}
The reason for such scaling is that the monomial of power $k$ decays as $\rho_T^k$ when $N$ grows, meaning that, when mapping back $y_k$ to $y'_k$, the absolute error $\epsilon_k$ associated to $y_k$ becomes $\epsilon'_k = \rho_T^{-k} \epsilon_k$ in terms of $y'_k$. On the other side, the final result $v$ only scales as $\rho_T^{-1}$ due to bilinearity. The relative error, defined as the absolute error divided by the target value, then scales as $\rho_T^{1-k} \epsilon_k$. %
The argument is formalized and discussed in Subsection~\ref{subsec:assemble-k}.
Consequently, $\epsilon_k$ is the threshold used for the *QAE.

Theoretical details for the implementation outlined above are presented in the Appendix~\ref{appendix:qae-ampl}, building on the version without *QAE introduced in Appendix~\ref{appendix:inner-prod} (specifically in Subsec.~\ref{subsec:qhp+inner}).

\newpage
\section{Quantum complexity analysis}\label{sec:complexity}
In this section we discuss the space and time complexity of the algorithm.

The time complexity of the quantum algorithm is summarized by the sum of the classical runtime and quantum runtime. The latter is defined in turn as the sum over $k=0, ..., K$ ($k$ being the index of the monomials forming the approximating polynomial $p$) of the circuit depths (number of layers in the quantum circuit) multiplied by the respective sampling complexities (number circuit executions). 

\begin{table}[t]
    \fbox{\parbox{\dimexpr\columnwidth-2\fboxsep-2\fboxrule\relax}{%
    \small%
    \vspace{-4mm}
    \paragraph{Conditions for method correctness.} All method variants provide an estimate of the contract value with relative error smaller than a desired threshold with high probability if:
    \begin{itemize}
        \item The polynomial $p$ is a good approximation of $f$, see Assumption~\ref{ass:vapprox-good}.
        \item Normalized temperatures and prices are available, see Eq.~\eqref{eq:affinity} and Assumption~\ref{ass:innerprod-positiveterms}.
        \item The algorithm is executed for a sufficiently large number of shots (see details in Table~\ref{tab:algo-compare}, raw `Samples $C_S(k)$').
    \end{itemize}
    \paragraph{Assumptions in complexity analysis.} The complexity analysis is performed in the following simplified setting:
    \begin{itemize}
        \item The 2-norms of the input vectors scale as $\sqrt{N}$ when $N$ grows, and the target contract value scales as $N$, see Assumptions~\ref{ass:norm-scaling} and~\ref{ass:v-scaling}.
    \end{itemize}
    \paragraph{Conditions for asymptotic speedup.} Variant (c) of the algorithm provides an asymptotic speedup under the following hypotheses:
    \begin{itemize}
        \item $p$ has degree $K \leq 2$.
        \item There exists an efficient loading method in the amplitude encoding, namely $C_{\mathrm{c}, \mathrm{load}}(N) \ll N$ and quantum depth $C_{\mathrm{d}, \mathrm{load}}(N) \ll N^{1/2}$ (see Subsec.~\ref{subsec:complexity-results}).
    \end{itemize}
    }}
    \caption{Summary of conditions and assumptions.}
    \label{tab:assumptions}
\end{table}

\subsection{Setting}\label{subsec:complexity-assumptions}
Time complexity is studied in the limit of the horizon length $N \to \infty$, for simplicity, since an asymptotic speedup implies at least a break-even point where quantum provides an advantage. The characterization of the break-even point is left for future work. Additionally, to perform the asymptotic analysis, we need to focus on the relative error and to introduce further hypotheses on how the involved norms scale with $N$. Assumptions are collected in Table~\ref{tab:assumptions} and commented below. It is worth emphasizing though that the analysis in the Appendices is conducted in full generality, before applying the Assumptions of this Subsection: in particular, the term $r_k$ defined in Eq.~\eqref{eq:defn-rk} and used in the full detailed complexity analysis of Table~\ref{tab:algo-compare}, accounts for the ratio between the result norm and the input norms.

\begin{assumption}\label{ass:norm-scaling}
The Euclidean norms $\rho_T^{-1}$ and $\rho_E^{-1}$ of the input vectors scale as $\sqrt{N}$, when $N$ grows.
\end{assumption}
For instance, the previous Assumption holds when all temperatures (prices) are independently sampled from the same random variable $T'$ ($E'$), whatever its distribution, as long as it is $L^2$ (see Remark~\ref{rem:norm-scaling} and Example~\ref{ex:error-scaling} in Subsection~\ref{subsec:assemble-k}). The independence of sampled temperatures and prices along time, though, is a strong assumption that is hardly verified in practice. We introduced it here as an intuitive justification for an assumption that holds in broader contexts. Sufficient conditions for the former assumption, related to stationary processes, as well as an asymptotic study under different assumptions, are sparks for future developments of this work, that may not be limited to the energy economics field.

Similarly, we enforce the following:
\begin{assumption}\label{ass:v-scaling}
The target contract value $v$ scales linearly with $N$, when $N$ grows.
\end{assumption}
The previous Assumption is verified, for example, when temperatures are independently sampled from a same random variables, as well as prices, under the regularity condition $f(T') E' \in L^1$, see Remark~\ref{rem:v-scaling}.

As shown in Table~\ref{tab:algo-compare-simplified}, quantum algorithms are benchmarked against three classical alternatives (see also Subsection~\ref{subsec:benchmarks}), increasingly closer to the quantum approach. The first is constituted by the exact evaluation of the function and the subsequent naive application of the inner products. The second involves the replacement of $f$ by the approximating polynomial $p$, providing advantages when the same inputs are used in multiple functions, as we discuss at the end of this section. The third is a sampling-based approach and compares against the quantum variant (d) as they both require a classical binary tree structure.

\subsection{Results}\label{subsec:complexity-results}

If an efficient loading procedure in amplitude encoding is known, having classical cost $C_{\mathrm{c}, \mathrm{load}}(N) \ll N$ and quantum depth $C_{\mathrm{d}, \mathrm{load}}(N) \ll N^{1/2}$, then the proposed main variant (c) has a quantum speedup against the classical case when $K=2$. If instead $C_{\mathrm{c}, \mathrm{load}}(N) = O(1)$ and $C_{\mathrm{d}, \mathrm{load}}(N) = O(1)$, for $K=3$ the quantum complexity is comparable to the classical one, up to logarithmic factors.

More in general, for any $K$, the time complexity of the main variant (c) is
$$O \left( C_{\mathrm{c}, \mathrm{load}}(N) + \epsilon^{-1} \left[ C_{\mathrm{d}, \mathrm{load}}(N) + \lg N \right] N^{\frac{K-1}{2}} \right),$$
where $\epsilon$ is the acceptable relative error threshold. This cost compares with classical benchmarks of $O(N)$, implying an advantage for $K \leq 2$ and efficient data loading. Table~\ref{tab:algo-compare-simplified} (and the extended Table~\ref{tab:algo-compare}) show that the other selected variants have higher costs. The analyzed QAE-free methods, though, have the same asymptotic time complexity of the classical techniques when $N$ grows, if $K = 2$ and data loading is performed in $O(1)$ depth, thus nullifying any advantage.

As a final remark on time, consider what happens if the same temperature series and the same price series are used for multiple contracts, that differ for the definition of the volume function $f$ (for instance, constants $A$, $B$, $C$, $D$ in Eq.~\eqref{eq:volume} are varied). Then the polynomial approximation approach, both quantumly or classically implemented, is advantageous as $y_k$ in Eq.~\eqref{eq:finalsum} can be computed just once for the different versions of $f$. Only the polynomial coefficients $b_k(\eta)$ and the sum in Eq.~\eqref{eq:finalsum} need to be recomputed for each $f$. Specifically for the quantum case, the circuits are executed once, independently of how many $f$ functions are being evaluated.

The space complexity is represented by the \textit{circuit width} $C_{\mathrm{w}}$, namely the total number of qubits. For our main variant (c), it is $O(k \lg N)$.

The derivation of the theoretical bounds and their validation on simulators is contained in the Appendices, that consider one variant at a time. The detailed comparison of the variants, under multiple complexity metrics, can be found in Appendix~\ref{appendix:complexity} and specifically in Table~\ref{tab:algo-compare}.

\section{Experimental demonstration}\label{sec:experiments}

\begin{table*}
    \centering
    \small
    \newcolumntype{Y}{>{\centering\arraybackslash}X}
    \begin{tabularx}{\textwidth}{YcYYYYcYYYY}
    \toprule
    \multirow{2}{*}{\begin{tabular}{@{}c@{}}Overall \\relative \\error\end{tabular}} && \multicolumn{4}{c}{\vspace{2mm} Relative error for $y'_k$} && \multicolumn{4}{c}{\vspace{2mm} Error for the contribution of $y'_k$} \\
    \cline{3-6} \cline{8-11}
    && $k=0$ & $k=1$ & $k=2$ & $k=3$ && $k=0$ & $k=1$ & $k=2$ & $k=3$ \\
    \midrule
    1.36\% && 1.71\% & 0.05\% & 16.52\% & 89.86\% && 2.23\% & 0.01\% & 0.83\% & 0.05\% \\
    1.02\% && 1.71\% & 0.37\% & 21.25\% & 76.94\% && 2.23\% & 0.09\% & 1.07\% & 0.04\% \\
    4.97\% && 1.67\% & 0.34\% & 57.79\% & 82.79\% && 2.18\% & 0.09\% & 2.92\% & 0.04\% \\
    4.76\% && 2.94\% & 0.07\% & 17.05\% & 78.73\% && 3.84\% & 0.02\% & 0.86\% & 0.04\% \\
    0.81\% && 0.20\% & 0.05\% & 20.57\% & 90.48\% && 0.26\% & 0.01\% & 1.04\% & 0.05\% \\
    \midrule
    2.58\% && 1.65\% & 0.18\% & 26.64\% & 83.76\% && 2.15\% & 0.04\% & 1.35\% & 0.04\% \\
    (0.0209) && (0.0097) & (0.0016) & (0.1754) & (0.0622) && (0.0126) & (0.0004) & (0.0080) & (0.0000) \\
    \bottomrule
    \end{tabularx}
    \caption{Errors arising in the estimation, in five independent algorithm runs of the algorithm with $N=4$ and $K=3$. The algorithm is modified to set $\epsilon_k = 0.04$ as described in Sec.~\ref{sec:experiments}. The first column represents the overall relative error $\abs{\frac{V-v^*}{v^*}}$. The second group of columns contains the relative error for the estimators of $y'_k$, namely $\abs{\frac{Y'_k-y'_k}{y'_k}}$. The last group contains the same errors as the second, but rescaled according to the contributions they give to the overall estimation, namely $\abs{\frac{b_k(\eta) \; (Y'_k-y'_k)}{v^*}}$. On the bottom two rows, the average and standard deviation of the values above. In the table, the coefficients $b_k(\eta)$ are those for the Taylor expansion centered in $0$, see Sec.~\ref{sec:experiments}. Underlying values are plotted in Fig.~\ref{fig:results}, gray area.}
    \label{tab:experiment-results-taylor}
\end{table*}

\begin{table*}
    \centering
    \small
    \newcolumntype{Y}{>{\centering\arraybackslash}X}
    \begin{tabularx}{\textwidth}{YcYYYYcYYYY}
    \toprule
    \multirow{2}{*}{\begin{tabular}{@{}c@{}}Overall \\relative \\error\end{tabular}} && \multicolumn{4}{c}{\vspace{2mm} Relative error for $y'_k$} && \multicolumn{4}{c}{\vspace{2mm} Error for the contribution of $y'_k$} \\
    \cline{3-6} \cline{8-11}
    && $k=0$ & $k=1$ & $k=2$ & $k=3$ && $k=0$ & $k=1$ & $k=2$ & $k=3$ \\
    \midrule
    0.03\% && 1.71\% & 0.05\% & 16.52\% & 89.86\% && 2.23\% & 0.01\% & 0.76\% & 1.46\% \\
    0.10\% && 1.71\% & 0.37\% & 21.25\% & 76.94\% && 2.23\% & 0.10\% & 0.98\% & 1.25\% \\
    3.41\% && 1.67\% & 0.34\% & 57.79\% & 82.79\% && 2.19\% & 0.10\% & 2.66\% & 1.35\% \\
    5.93\% && 2.94\% & 0.07\% & 17.05\% & 78.73\% && 3.85\% & 0.02\% & 0.78\% & 1.28\% \\
    2.14\% && 0.20\% & 0.05\% & 20.57\% & 90.48\% && 0.27\% & 0.01\% & 0.95\% & 1.47\% \\
    \midrule
    2.32\% && 1.65\% & 0.18\% & 26.64\% & 83.76\% && 2.16\% & 0.05\% & 1.23\% & 1.36\% \\
    (0.0247) && (0.0097) & (0.0016) & (0.1754) & (0.0622) && (0.0127) & (0.0004) & (0.0080) & (0.0010)\\
    \bottomrule
    \end{tabularx}
    \caption{Same as Table~\ref{tab:experiment-results-taylor}. Here, the coefficients $b_k(\eta)$ are those for the best-fit polynomial, see Sec.~\ref{sec:experiments}. Estimation is based on the same shots as in the previous Table, only the post-processing is modified to account for the modified coefficients. Underlying values are plotted in Fig.~\ref{fig:results}, blue area.}
    \label{tab:experiment-results-bestfit}
\end{table*}

\begin{table*}
    \centering
    \small
    \newcolumntype{Y}{>{\centering\arraybackslash}X}
    \begin{tabularx}{\textwidth}{YYYYYYYYY}
    \toprule
    $k$ & Grover power & Width & Depth & RZ count & SX count & X \, count & CNOT count & Meas count\\
    \midrule
    $0$	&$2^0$	&$2$	&$11$	&$9$	&$5$	&$1$	&$2$	&$2$\\
    $0$	&$2^0$	&$2$	&$11$	&$9$	&$5$	&$1$	&$2$	&$2$\\
    $0$	&$2^2$	&$2$	&$118$	&$73$	&$37$	&$6$	&$52$	&$2$\\
    $1$	&$2^0$	&$2$	&$12$	&$8$	&$6$	&$1$	&$2$	&$2$\\
    $1$	&$2^0$	&$2$	&$12$	&$8$	&$6$	&$1$	&$2$	&$2$\\
    $1$	&$2^2$	&$2$	&$14$	&$9$	&$8$	&$0$	&$2$	&$2$\\
    $2$	&$2^0$	&$4$	&$22$	&$16$	&$18$	&$2$	&$12$	&$4$\\
    $2$	&$2^1$	&$4$	&$167$	&$101$	&$73$	&$7$	&$112$	&$4$\\
    $2$	&$2^4$	&$4$	&$596$	&$364$	&$243$	&$27$	&$416$	&$4$\\
    $3$	&$2^0$	&$6$	&$39$	&$30$	&$29$	&$3$	&$23$	&$6$\\
    $3$	&$2^1$	&$6$	&$652$	&$365$	&$162$	&$23$	&$470$	&$6$\\
    \bottomrule
    \end{tabularx}
    \caption{Collection of the IQAE iterations needed for the estimation underlying Tables~\ref{tab:experiment-results-taylor} and~\ref{tab:experiment-results-bestfit}. Each row represents an iteration, and contains the metrics of the circuit transpiled for IBM Jakarta. For high $k$, some circuits have depths that lie beyond the possibilities of current hardware, even under error mitigation.}
    \label{tab:experiment-results-iterations}
\end{table*}

\begin{figure*}[t]
    \centering \small
    \include{Figures/results}
    \caption{
    Empirical outcomes of the algorithm, as described in Sec.~\ref{sec:experiments}. Errors for these experiments are provided in Tables~\ref{tab:experiment-results-taylor} and~\ref{tab:experiment-results-bestfit}. Marks represent five independent experiments with $N=4$ and $K=3$, compared to the exact solution in bars. Each bar represents the contribution of a power $k=0, ..., 3$ to the overall result. In the gray case, the coefficients $b_k(\eta)$ are those for the Taylor expansion centered in 0. In the blue case, same experiments as before, but coefficients this time are for the best-fit polynomial. 
    }
    \label{fig:results}
\end{figure*}

We focus our experiments on an instance as small as $N=4$. We take realistic temperatures simulated for a weather station, and the volume function in Eq.~\eqref{eq:volume} with parameters $T_0 = 40$, $A = 20000$, $B = -35$, $C = 3$, $D = 6000$. These constants specify the shape of the sigmoid function and are generated by fitting the curve to historic weather temperatures and gas volumes. Since all generated temperatures are positive, we take $\eta=0$. For energy prices, we use a four-dimensional random vector.

For demonstration purposes, we consider here $K=3$, even though our algorithm is asymptotically advantageous in $N$ only for $K=2$. Note that asymptotically in $N$ implies infinitely in the future temperature and weather forecast data, which clearly does not occur in any real application as described here.
Polynomial coefficients are $[b_k]_k = [17976,\allowbreak -360,\allowbreak -7.17,\allowbreak 0.0072]$ if we consider the Taylor expansion in $0$, or $[b_k]_k = [17957,\allowbreak -393,\allowbreak -6.50,\allowbreak 0.225]$ if we consider the best-fit polynomial. %
In both cases, given $b_k$ are fast decreasing, the only relevant terms for with a $1\%$ error threshold, are $b_0$ and $b_1$, according to Eq.~\eqref{eq:eps-k}. Let us emphasize again that this is due to small $N$, while for increasing $N$ the term $\rho_T^K$ always becomes dominant, independently of the coefficients.

To provide a more insightful example, we run the algorithm with $\epsilon_k = 0.04$ for all $k$, violating the prescription in Eq.~\eqref{eq:eps-k}. We apply IQAE with $100$ shots per iteration. Circuits are executed on the IBM Jakarta device, having 7 qubits, a Quantum Volume~\cite{cross_validating_2019} of 16, and 2.4K CLOPS~\cite{wack_quality_2021}.

Data loading in amplitude encoding is performed through general-purpose initialization methods available in Qiskit \cite{Qiskit}, given the unstructured nature of input data. For circuit optimization and error reduction we use three techniques, namely: `Mapomatic' for finding the best qubit-layout identifying the low noise subgraph \cite{mapomatic}, `dynamical decoupling' (using single $X$-gate configuration) for mitigating decoherence in the ideal qubits \cite{dynamicdecoupling} and `M3 error mitigation' for measurement error mitigation \cite{m3errormitigation}.

Fig.~\ref{fig:results}, as well as Tables~\ref{tab:experiment-results-taylor} and~\ref{tab:experiment-results-bestfit}, show the outputs. As expected, the error is very high for $k=2$ and $3$, but contributes little to the overall result given the low associated coefficients $b_k(\eta)$. The error level for large $k$ is not only a consequence of the choice of $\epsilon_k$, but mostly an effect of noise: indeed, some iterations have a depth as high as 596 or 652, with more than 400 CNOT gates, as shown in Table~\ref{tab:experiment-results-iterations}.

\section{Conclusion}\label{sec:conclusion}
We analyzed a newly proposed approach for the calculation of the inner product $\sum_j f(T'_j) E'_j$ where $[T'_j]_j$ and $[E'_j]_j$ are two input vectors, and $f$ is a function well approximated by a polynomial $p$.
The approach allows to break the workload into multiple parts, where the bottleneck becomes the calculation of the inner products $\sum_j T^k_j E_j$, where $[T_j]_j$ and $[E_j]_j$ are suitably normalized vectors. Hence, we explored the application of quantum computing to accelerate the summations $\sum_j T^k_j E_j$ for all relevant $k$.

In our approach, powers of the input vectors were calculated through QHPs. We introduced a variant of QHP with dynamic stopping thereby leveraging dynamic quantum circuits, achieving an improvement in the leading constants of time scaling, compared to the naive implementation. This variant though is incompatible with QAE and is therefore excluded from the core algorithm.

By applying additional QHPs, the method extends to the calculation of values resulting from multilinear functions $\sum_j X_{j,1} \cdots X_{j,v}$, where $X_{j,i}$ are the inputs, thus generalizing the bilinearity in our methodology. It also adapts to the case where $X_{j,i}$ result from the elementwise application of functions $f_i$, as long as the $f_i$ can be polynomially approximated by $p_i$ for all $i$. Said extensions, though, do not preserve the asymptotic analysis expressed in the manuscript.

To encode the input vectors, we adopted the amplitude encoding technique, after having evaluated the novel Bidirectional Orthogonal Encoding (BOE). The latter was proposed here to overcome the limitations of the bidirectional encoding for our purposes, and specifically for the calculation of inner products through the swap test.

We compared the ancilla-free method against the swap test approach for the calculation of inner products, discussing their asymptotic width and time complexity as functions of the data set size and the error, and highlighting that the ancilla-free method is preferable.

We merged these building blocks and adapted them to apply *QAE, thus getting an asymptotic improvement against the classical performance under the assumption of efficient data loading, for degree $K = 2$ of the approximating polynomial. The optimal implementation was selected after evaluating multiple variants, collected and discussed in deeper detail in accompanying Appendices.

We provided a rigorous analysis and evaluation of four variants of the algorithm, differing with respect to data loading, inner product computations, and sequential algorithmic steps.
Whilst most of prior literature studies the complexity of quantum subroutines, our work focused on the performance of an overall quantum workload. The extensive and rigorous analysis of the asymptotic complexity of each variant showed how subroutines of an application quantum workload interact and condition the overall performance. We highlighted that subtle details of the encodings can make dramatic effect in terms of performance or applicability of methods.
A potential area of future investigation is the scope of improvement if one adopts the Szegedy walk \cite{szegedy_quantum_2004} to create the input.
Other areas of prospect work may include the validation and extension of the hypotheses under which the asymptotic analysis was conducted.

Experiments on real quantum hardware were conducted for the main variant, and showed errors in line with theoretical expectations for small problem instances. The effect of noise becomes relevant for high power orders of polynomial expansion, and it is known from the theory that error gets amplified when dealing with lengthier input vectors. Extensive experiments on quantum simulators are included in the Appendices and validate the overall theoretical framework across the different variants. It is currently impossible to run experiments in the asymptotically advantageous regime, due to the depth of the circuits. The estimation of the break-even point and of the resources needed to achieve the asymptotic regime is a substantial work, left as a future development (see Ref.~\cite{campbell_applying_2019} as an example of resource estimation).

\section*{Acknowledgements}
G.A., K.Y., and O.S. acknowledge Travis L. Scholten, Raja Hebbar, and Morgan Delk for helping with the business case analysis; Francois Varchon, Winona Murphy, and Matthew Stypulkoski from the IBM Quantum Support team to help executing the experiments; Kristan Temme, Daniel Egger, and Stefan Woerner for their feedbacks on the manuscript; Jay Gambetta, Thomas Alexander and Sarah Sheldon for allocating compute time on advanced hardware; Maria Cristina Ferri, Jeannette M. Garcia, Gianmarco Quarti Trevano, Katie Pizzolato, Jae-Eun Park, Heather Higgins, and Saif Rayyan for their support in cross-team collaborations.

\bibliographystyle{quantum}
\bibliography{eon_bibliography} 

\onecolumn

\appendix

\section{Alternate ways to compute the inner product in amplitude encoding}\label{appendix:inner-prod}
Consider two vectors $\ket{\psi_0}$ and $\ket{\psi_1}$, and suppose one wants to calculate their inner product $p$. Multiple quantum techniques for the computation of their inner product are known~\cite{fanizza_beyond_2020}.
In this Appendix we focus on two, namely the swap test and the ancilla-free method. We show in Appendix~\ref{appendix:qae-ampl} how these methods can be further enhanced recurring to Quantum Amplitude Estimation techniques as suggested in Ref.~\cite{markov_generalized_2022}.

\subsection{The swap test and the ancilla-free method: definition and sampling complexity}

The swap-test~\cite{buhrman_quantum_2001} is depicted in Fig.~\ref{fig:swaptest}. Being characterized by low gate depth, it is widely used in near-term applications including quantum machine learning~\cite{schuld_introduction_2015}.

To discuss its convergence, we need a Lemma:
\begin{lemma}\label{lemma:clt-sqrt}
Let $\bar X_S$ be the mean of $S$ i.i.d. random variables with mean $\mu >0$ and variance $\sigma^2$, and let $Y_S := \sqrt{\max\{a X _S + b, 0\}}$, for some real constants $a, b$ with $a \neq 0$ and $a\mu+b > 0$.
Then $\frac{ Y_S -\sqrt{a\mu+b} }{\abs{a} \sigma/\sqrt{4S(a\mu+b)}}$ is asymptotically a standard normal random variable when $S \to \infty$. Therefore, the error is controlled by $\mathbb{P}\left( \abs{ Y_S - \sqrt{a\mu+b}} < \epsilon \right) = \alpha$
once $S$ is chosen as
\begin{equation}\label{eq:clt-sqrt:S-scaling}
    S = \frac{a^2 \sigma^2}{4 (a\mu+b) \epsilon^2} \left[ \Phi^{-1} \!\left(\frac{1+\alpha}{2}\right) \right]^2
\end{equation}
asymptotically when $\epsilon \to 0$, where $\Phi$ is the CDF of the standard normal distribution.
\end{lemma}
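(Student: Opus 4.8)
The plan is to prove this as a direct application of the Central Limit Theorem combined with the delta method. First I would recall that by the classical CLT, $\sqrt{S}(\bar X_S - \mu)$ converges in distribution to a centered normal with variance $\sigma^2$ as $S \to \infty$. I would then introduce the function $g(x) := \sqrt{\max\{ax+b, 0\}}$ and observe that, since $a\mu + b > 0$, there is a neighborhood of $\mu$ on which $g$ coincides with the smooth map $x \mapsto \sqrt{ax+b}$; hence $g$ is differentiable at $\mu$ with $g'(\mu) = a / \bigl(2\sqrt{a\mu+b}\bigr)$. The delta method then yields that $\sqrt{S}\,\bigl(g(\bar X_S) - g(\mu)\bigr) = \sqrt{S}\,\bigl(Y_S - \sqrt{a\mu+b}\bigr)$ converges in distribution to a normal with mean $0$ and variance $g'(\mu)^2 \sigma^2 = a^2 \sigma^2 / \bigl(4(a\mu+b)\bigr)$. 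Dividing by the standard deviation gives exactly the stated standard-normal limit for $\bigl(Y_S - \sqrt{a\mu+b}\bigr)\big/\bigl(\abs{a}\sigma/\sqrt{4S(a\mu+b)}\bigr)$.

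From the asymptotic normality I would then extract the sample-size formula. Write $\sigma_Y := \abs{a}\sigma / \sqrt{4(a\mu+b)}$, so that $\sqrt{S}(Y_S - \sqrt{a\mu+b})/\sigma_Y$ is approximately standard normal for large $S$. Then $\mathbb{P}\bigl( \abs{Y_S - \sqrt{a\mu+b}} < \epsilon \bigr) \approx \mathbb{P}\bigl( \abs{Z} < \epsilon \sqrt{S}/\sigma_Y \bigr) = 2\Phi\!\bigl(\epsilon\sqrt{S}/\sigma_Y\bigr) - 1$, where $Z$ is standard normal. Setting this equal to $\alpha$ and solving gives $\epsilon \sqrt{S}/\sigma_Y = \Phi^{-1}\!\bigl((1+\alpha)/2\bigr)$, i.e. $S = \sigma_Y^2 \epsilon^{-2} \bigl[\Phi^{-1}\!\bigl((1+\alpha)/2\bigr)\bigr]^2 = \dfrac{a^2\sigma^2}{4(a\mu+b)\epsilon^2}\bigl[\Phi^{-1}\!\bigl((1+\alpha)/2\bigr)\bigr]^2$, which is Eq.~\eqref{eq:clt-sqrt:S-scaling}.

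The one point that needs care — and which I expect to be the main obstacle to a fully rigorous argument — is the interchange of limits implicit in the phrase ``asymptotically when $\epsilon \to 0$'': the formula for $S$ is derived from the $S\to\infty$ limit law, but then $\epsilon$ is sent to $0$, which forces $S\to\infty$ as well. I would handle this by noting that the prescribed $S = S(\epsilon)$ diverges as $\epsilon \to 0$, so along this sequence the CLT/delta-method approximation becomes exact in the limit; more precisely, one argues that $\mathbb{P}\bigl( \abs{Y_{S(\epsilon)} - \sqrt{a\mu+b}} < \epsilon \bigr) \to \alpha$ as $\epsilon \to 0$, using the uniform convergence of the CDF in the CLT (Polya's theorem) to control the error in replacing the true distribution of the studentized $Y_S$ by the normal CDF. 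The truncation by $\max\{\cdot,0\}$ is harmless in the limit because $\mathbb{P}(a\bar X_S + b \le 0) \to 0$ (indeed it is eventually super-polynomially small if $\bar X_S$ concentrates), so $Y_S = \sqrt{a\bar X_S + b}$ with probability tending to one, and the event where the truncation is active contributes negligibly to the probability being computed.
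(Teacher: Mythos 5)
Your proposal is correct, and it reaches the same conclusion as the paper by what is in substance the same idea --- asymptotic normality of the square-root transform of the sample mean --- but packaged differently. You invoke the delta method as a black box: $g(x)=\sqrt{\max\{ax+b,0\}}$ is smooth near $\mu$ because $a\mu+b>0$, so $\sqrt{S}\,(g(\bar X_S)-g(\mu))$ is asymptotically normal with variance $g'(\mu)^2\sigma^2 = a^2\sigma^2/(4(a\mu+b))$. The paper instead re-derives this delta-method statement from scratch: starting from the CLT for $\bar X_S$, it inserts a vanishing correction term $\beta^2\,\abs{a}\sigma/(4(a\mu+b)\sqrt{S})$ into the probability inequality so that the threshold becomes a perfect square $c^2$, splits the probability on the event where the truncation is inactive (disposing of the complementary event by the law of large numbers, exactly as you do), and then takes square roots of both sides of the inequality. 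Your route is shorter and leans on a standard, citable theorem; it also treats the truncation and the $\epsilon\to 0$ versus $S\to\infty$ interchange more explicitly than the paper does (the paper's final step simply equates $\epsilon/(\abs{a}\sigma/\sqrt{4S(a\mu+b)})$ with $\Phi^{-1}((1+\alpha)/2)$ ``asymptotically,'' whereas you justify this via Polya's theorem on uniform convergence of CDFs). The paper's elementary derivation buys self-containedness and makes visible exactly where the hypothesis $a\mu+b>0$ enters, but it is otherwise equivalent. No gap in your argument.
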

\begin{proof}
By the Central Limit Theorem, for any real $\beta$,
$$\mathbb{P} \left( \frac{\bar X_S - \mu}{\sigma/ \sqrt{S}} < \beta \right) \to \Phi^{-1}(\beta)$$
when $S \to \infty$. Therefore, using the symmetry of $\Phi$ if $a<0$,
$$\mathbb{P} \left( \frac{ a \bar X_S + a \mu}{\abs{a} \sigma/\sqrt{S}} < \beta \right) \to \Phi^{-1}(\beta)$$
and 
$$\mathbb{P} \left( \frac{(a \bar X_S + b) - (a \mu +b)}{\abs{a} \sigma/\sqrt{S}} < \beta \right) \to \Phi^{-1}(\beta).$$
Then also
\begin{equation*}
\mathbb{P} \left( \frac{ (a \bar X_S + b) - (a \mu +b)}{\abs{a} \sigma/\sqrt{S}} < \beta + \beta^2 \frac{\abs{a} \sigma/\sqrt{S}}{4(a\mu+b)} \right) \to \Phi^{-1}(\beta)
\end{equation*}
by continuity of $\Phi^{-1}$, since the additional term is defined for $a\mu+b > 0$ and tends to 0. Rearranging the inequality:
\begin{equation*}
\mathbb{P} \left( \frac{a \bar X_S + b}{\abs{a} \sigma/\sqrt{S}} < \frac{a \mu +b}{\abs{a} \sigma/\sqrt{S}} + \beta + \beta^2 \frac{\abs{a} \sigma/\sqrt{S}}{4(a\mu+b)} \right) \to \Phi^{-1}(\beta)
\end{equation*}
namely
\begin{equation*}
\mathbb{P} \left( \frac{a \bar X_S + b}{\abs{a}\sigma/\sqrt{S}} < c^2 \right) \to \Phi^{-1}(\beta) \qquad \text{where } c := \sqrt{\frac{a \mu + b}{\abs{a} \sigma/\sqrt{S}}} + \beta \sqrt{  \frac{\abs{a} \sigma/\sqrt{S}}{4 (a\mu+b)}}.
\end{equation*}
At the same time, the probability can be decomposed as
\begin{equation*}
\mathbb{P} \left( \frac{a \bar X_S + b}{\abs{a}\sigma/\sqrt{S}} < c^2 \mid \bar X_S \geq 0 \right) \mathbb{P} \left( \bar X_S \geq 0 \right) + \mathbb{P} \left( \frac{a \bar X_S + b}{\abs{a}\sigma/\sqrt{S}} < c^2 \mid \bar X_S < 0 \right) \mathbb{P} \left( \bar X_S < 0 \right)
\end{equation*}
where the second term tends to $0$ for the strong law of large numbers, since $\mu > 0$. On the first term, $a \bar X_S + b$ equals $Y^2$, so that we obtained
\begin{equation*}
\mathbb{P} \left( \frac{Y_S^2}{\abs{a}\sigma/\sqrt{S}} < c^2 \right) \to \Phi^{-1}(\beta)
\end{equation*}
Taking the square roots:
\begin{equation*}\mathbb{P} \left( \frac{Y_S}{\sqrt{\abs{a} \sigma/\sqrt{S}}} < \sqrt{\frac{a\mu+b}{\abs{a} \sigma/\sqrt{S}}} + \beta \sqrt{  \frac{\abs{a} \sigma/\sqrt{S}}{4 (a\mu+b)}} \right) \to \Phi^{-1}(\beta)
\end{equation*}
which rewrites
\begin{equation*}
\mathbb{P} \left( \frac{ Y_S -\sqrt{a\mu+b} }{\abs{a} \sigma/\sqrt{4S(a\mu+b)}} < \beta \right) \to \Phi^{-1}(\beta).
\end{equation*}
This proves the asymptotic standard normality. 
Therefore $\mathbb{P}\left( \abs{ Y_S - \sqrt{a\mu+b}} < \epsilon \right) = \alpha$ is equivalent to
\begin{equation*}
\frac{\epsilon}{\abs{a} \sigma/\sqrt{4S(a\mu+b)}} = \Phi^{-1}\!\left(\frac{1+\alpha}{2}\right)
\end{equation*}
asymptotically, where the last equation turns to be~\eqref{eq:clt-sqrt:S-scaling}.
\end{proof}

\begin{figure}
    \centering
    \tikzsetnextfilename{swaptest}
    \begin{quantikz}
    \lstick{$\ket{\psi_0}$} & \qwbundle{n} & \qw{} \gategroup[wires=3,steps=3,style={dashed, rounded corners, fill=blue!10}, background]{} & \swap{1} & \qw{} & \qw{} \\
    \lstick{$\ket{\psi_1}$} & \qwbundle{n} & \qw{} & \targX{} & \qw{} & \qw{} \\
    \lstick{$\ket{0}$} & \qw{} & \gate{H} & \ctrl{-2} & \gate{H} & \meter{}
    \end{quantikz}
    \caption{The swap test. The probability to measure $\ket{0}$ in the result qubit is $ \frac{1}{2}+\frac{1}{2} \left| \braket{\psi_0}{\psi_1} \right|^2$.}
    \label{fig:swaptest}
\end{figure}

\begin{prop}[Sampling complexity of the swap test in amplitude encoding]\label{prop:errorbound-swap}
Let $\alpha \in (0,1)$. Let $X_i$, for $i=1,...,S$, be a r.v. representing the output of the swap-test measurement after the $i$-th shot of circuit in Fig.~\ref{fig:swaptest}. Call $\bar X_S$ the mean r.v. resulting from the $S$ independent shots. Then $Y_S := \sqrt{\max\{0,1-2 \bar X_S\}}$ is an estimator for $p=\braket{\psi_0}{\psi_1}$. Assuming $p > 0$, the error is controlled by
\begin{equation}\label{eq:errorbound-clt-swap}
    \mathbb{P}\left( \abs{ Y_S - p} < \epsilon \right) = \alpha,
\end{equation}
once $S$ is chosen as
\begin{equation}\label{eq:errorbound-clt-shots-swap}
    S = \frac{1-p^4}{4\epsilon^2 p^2} \left[ \Phi^{-1} \!\left(\frac{1+\alpha}{2}\right) \right]^2
\end{equation}
asymptotically when $\epsilon \to 0$, where $\Phi$ is the CDF of the standard normal distribution.
\end{prop}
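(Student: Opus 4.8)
The plan is to recognize the swap-test output as a Bernoulli random variable and then invoke Lemma~\ref{lemma:clt-sqrt} with an appropriate choice of the constants $a$ and $b$. From the circuit in Fig.~\ref{fig:swaptest}, the probability of reading $\ket{1}$ in the ancilla is $\tfrac12 - \tfrac12 p^2$, so I would let $X_i$ be the indicator of that outcome; then each $X_i$ is Bernoulli with mean $\mu = \tfrac{1-p^2}{2}$ and variance $\sigma^2 = \mu(1-\mu) = \tfrac{1-p^4}{4}$, and $\bar X_S$ is their sample mean.

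Next I would observe that the estimator $Y_S = \sqrt{\max\{0,\,1 - 2\bar X_S\}}$ is exactly the quantity $\sqrt{\max\{a\bar X_S + b,\,0\}}$ appearing in Lemma~\ref{lemma:clt-sqrt} with $a = -2$ and $b = 1$. The hypotheses of the lemma are satisfied: $a \neq 0$, and, using the assumption $p > 0$, one has $a\mu + b = -(1-p^2) + 1 = p^2 > 0$; moreover $\mu > 0$ in the nondegenerate case $0 < p < 1$ (the case $p = 1$ being trivial, since then $X_i \equiv 0$ and $Y_S \equiv 1 = p$). The lemma then yields at once that $Y_S$ is a consistent, asymptotically normal estimator of $\sqrt{a\mu + b} = \sqrt{p^2} = p$, and in particular the error is controlled as in~\eqref{eq:errorbound-clt-swap}.

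Finally, I would substitute $a = -2$, $b = 1$, hence $a^2\sigma^2 = 4\cdot\tfrac{1-p^4}{4} = 1-p^4$ and $a\mu + b = p^2$, into the sample-size formula~\eqref{eq:clt-sqrt:S-scaling} of the lemma:
\[
S = \frac{a^2 \sigma^2}{4(a\mu+b)\epsilon^2}\left[\Phi^{-1}\!\left(\tfrac{1+\alpha}{2}\right)\right]^2 = \frac{1-p^4}{4\epsilon^2 p^2}\left[\Phi^{-1}\!\left(\tfrac{1+\alpha}{2}\right)\right]^2,
\]
which is precisely~\eqref{eq:errorbound-clt-shots-swap}.

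There is essentially no genuine obstacle here, since all the analytic content has been packaged into Lemma~\ref{lemma:clt-sqrt}. The only points that require a line of care are the bookkeeping of which ancilla outcome $X_i$ records — so that the sign of $a$, and hence the $1 - 2\bar X_S$ combination, comes out correctly — and the verification of the positivity condition $a\mu + b = p^2 > 0$, which is exactly where the hypothesis $p > 0$ is used. I might also add a remark that the $\max\{\cdot,0\}$ truncation is asymptotically harmless, as this is already handled inside the proof of the lemma via the strong law of large numbers.
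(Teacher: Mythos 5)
Your proposal is correct and follows essentially the same route as the paper's proof: identify $X_i$ as a Bernoulli variable with mean $\mu=\tfrac{1}{2}-\tfrac{1}{2}p^2$ and variance $\sigma^2=\tfrac{1}{4}-\tfrac{1}{4}p^4$, then invoke Lemma~\ref{lemma:clt-sqrt} with $a=-2$, $b=1$ so that $a\mu+b=p^2$. You are in fact slightly more careful than the paper, which leaves the choice of $a,b$, the verification of $a\mu+b>0$ (the point where $p>0$ enters), and the degenerate case $p=1$ implicit.
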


\begin{proof}
The probability to measure $\ket{0}$ in the swap test qubit is $\frac{1}{2}+\frac{1}{2}p^2$. Therefore, $X_i$ are independent Bernoulli with mean
$\mu = \frac{1}{2} -\frac{1}{2} p^2$
and variance
$\sigma^2 = \mu(1-\mu) = \frac{1}{4}-\frac{1}{4} p^4$.
Then Apply Lemma~\ref{lemma:clt-sqrt}.
\end{proof}

\begin{remark}
Since the presence of $p$ in the denominator of Eq.~\eqref{eq:errorbound-clt-shots-swap} may come unexpected, let us shortly comment: it derives from the fact that our estimator is bound to the mean r.v. through a square root. Indeed, $\frac{\bar X_S - \mu}{\sigma/ \sqrt{S}}$ being asymptotically standard normal is equivalent to $\frac{Y_S^2 - p^2}{2 \sigma/ \sqrt{S}}$ being asymptotically standard normal. Now, we can write $Y_S^2 - p^2 = (Y_S - p)(Y_S + p)$ and informally observe that $Y_S + p \to 2p$. So, informally, $\frac{2p (Y_S - p)}{2 \sigma/ \sqrt{S}} = \frac{Y_S - p}{\sigma/ (p\sqrt{S})}$ is asymptotically standard normal, and $p$ appears at the denominator in the estimator variance.
\end{remark}

\begin{remark}
The previous proposition gives a sufficient condition when $p \neq 0$; we also have the following sufficient condition when $p=0$ (and more in general for $\abs{p} < \epsilon$):
$$
    S \geq \frac{1 - p^4}{\left( \epsilon^2 -p^2 \right)^2} \left[ \Phi^{-1} \!\left(\frac{1+\alpha}{2}\right) \right]^2
$$
via CLT applied to $Y_S^2 - p^2$.
\end{remark}

We call \textit{ancilla-free method} another, even simpler way~\cite{havlicek_supervised_2019, markov_generalized_2022} to calculate the inner product of two statevectors $\ket{\psi_0}$ and $\ket{\psi_1}$. Its application is possible once an (efficient) unitary for loading at least one of the two states, say $\ket{\psi_1}$, is known. Namely, an operator $U_1$ is given such that $U_1 \ket{0} = \ket{\psi_1}$. Indeed:
\begin{equation}\label{eq:U1_inverse}
    \matrixel{0}{U_1^\dag}{\psi_0} = \innerproduct{\psi_1}{\psi_0} = p,
\end{equation}
so that it is sufficient to build the $U_1^\dag\ket{\psi_0}$ circuit, and project on $\ket{0}$.

\begin{prop}[Sampling complexity of the ancilla-free method] \label{prop:errorbound-direct}
Let $\alpha \in (0,1)$. Suppose the ancilla-free method for the calculation of the inner product is implemented, the register is measured, and the execution is repeated $S$ times. Let $R_i \in \{0, ..., N-1\}$ be the measurement output for the $i$-th shot, for $i=1,...,S$, and let $X_i$ be a r.v. valued 1 if $R_i=0$, and valued 0 otherwise. Call $\bar X_S$ the mean r.v. resulting from the $S$ independent shots. Then $Y_S := \sqrt{\bar X_S}$ is an estimator for $p$, and the error is controlled by
\begin{equation}\label{eq:errorbound-clt-direct}
    \mathbb{P}\left( \abs{ Y_S - p} < \epsilon \right) = \alpha,
\end{equation}
once $S$ is chosen as
\begin{equation}\label{eq:errorbound-clt-shots-direct}
    S = \frac{1-p^2}{4 \epsilon^2}  \left[ \Phi^{-1} \!\left(\frac{1+\alpha}{2}\right) \right]^2
\end{equation}
asymptotically when $\epsilon \to 0$, where $\Phi$ is the CDF of the standard normal distribution.
\end{prop}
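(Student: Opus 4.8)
The plan is to mirror exactly the structure of the swap-test proof (Prop.~\ref{prop:errorbound-swap}), which already reduces to an application of Lemma~\ref{lemma:clt-sqrt}. First I would identify the underlying i.i.d.\ Bernoulli variables: the $X_i$ are indicators of the event $R_i = 0$, and by Eq.~\eqref{eq:U1_inverse} the probability of measuring $\ket{0}$ in the circuit $U_1^\dag\ket{\psi_0}$ is $\abs{\matrixel{0}{U_1^\dag}{\psi_0}}^2 = p^2$. Hence each $X_i$ is Bernoulli with mean $\mu = p^2$ and variance $\sigma^2 = \mu(1-\mu) = p^2(1-p^2)$, and $\bar X_S$ is the sample mean.

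Next I would invoke Lemma~\ref{lemma:clt-sqrt} with the choice of constants $a = 1$, $b = 0$, so that $Y_S = \sqrt{\max\{a\bar X_S + b, 0\}} = \sqrt{\bar X_S}$ and $\sqrt{a\mu + b} = \sqrt{p^2} = p$ (using $p > 0$, which follows from Assumption~\ref{ass:innerprod-positiveterms}), and the hypothesis $a\mu + b = p^2 > 0$ is satisfied. The lemma then immediately gives that $Y_S$ is a consistent, asymptotically normal estimator of $p$, and the error control $\mathbb{P}(\abs{Y_S - p} < \epsilon) = \alpha$ holds once
\begin{equation*}
S = \frac{a^2\sigma^2}{4(a\mu+b)\epsilon^2}\left[\Phi^{-1}\!\left(\frac{1+\alpha}{2}\right)\right]^2 = \frac{p^2(1-p^2)}{4 p^2 \epsilon^2}\left[\Phi^{-1}\!\left(\frac{1+\alpha}{2}\right)\right]^2 = \frac{1-p^2}{4\epsilon^2}\left[\Phi^{-1}\!\left(\frac{1+\alpha}{2}\right)\right]^2,
\end{equation*}
which is precisely Eq.~\eqref{eq:errorbound-clt-shots-direct}. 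The cancellation of $p^2$ between numerator and denominator is exactly what removes the $p$-dependence that plagued the swap test, so I would note this as the conceptual punchline rather than a computational step.

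There is essentially no hard part here: the only thing to be careful about is verifying that the hypotheses of Lemma~\ref{lemma:clt-sqrt} are literally met — in particular that $p > 0$ (so that $a\mu + b > 0$ and $\sqrt{a\mu+b} = p$ rather than $\abs{p}$), which is guaranteed by the standing Assumption~\ref{ass:innerprod-positiveterms}, and that the measurement probability is genuinely $p^2$ and not $\abs{p}$ or something else, which is the content of Eq.~\eqref{eq:U1_inverse}. Everything asymptotic (consistency, the CLT scaling, the $\epsilon \to 0$ limit in the sample-size formula) is inherited verbatim from the lemma, so the proof is a two-line specialization.
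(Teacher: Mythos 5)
Your proof is correct and follows exactly the paper's own argument: identify the $X_i$ as i.i.d.\ Bernoulli with mean $\mu = p^2$ and variance $p^2(1-p^2)$ via Eq.~\eqref{eq:U1_inverse}, then apply Lemma~\ref{lemma:clt-sqrt} (with $a=1$, $b=0$), which yields Eq.~\eqref{eq:errorbound-clt-shots-direct} after the $p^2$ cancellation. The only difference is that you spell out the specialization of the lemma's constants explicitly, which the paper leaves implicit.
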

\begin{proof}
By Eq.~\eqref{eq:U1_inverse}, $X_i$ are independent Bernoulli with mean $\mu = p^2$ and variance $\sigma^2 = \mu(1-\mu) = p^2(1-p^2)$. The proof is an application of Lemma~\ref{lemma:clt-sqrt}.
\end{proof}

\begin{remark}
Lemma~\ref{lemma:clt-sqrt} and therefore the proof of Prop.~\ref{prop:errorbound-swap} leverage the fact that $Y_S$ is definitely positive. Nonetheless we comment in Fig.~\ref{fig:inner-compare} that when $p$ is small ($p=0.072$), even with $S$ as high as $10,000$, we empirically get the left-hand side to be negative with a probability of $38\%$. This implies that the estimator $Y_S$ is remarkably biased for the swap test. Fortunately this is not the case of the ancilla-free method since $Y_S$ is always positive.
\end{remark}

The key differences between the ancilla-free method and the swap test are summarized in Table~\ref{tab:inner-compare}. The sampling complexity is plotted in Fig.~\ref{fig:inner-p} as a function of the inner product $p$. It is easy to verify analytically that the sampling complexity of the swap test is unbounded for small $p$, what makes it impossible to choose the number of shots a priori. Fig.~\ref{fig:inner-compare} empirically shows the different behavior of the two methods when $p$ is large (left plot) rather than small (right plot).

\begin{table*}
    \centering
    \small
    \newcolumntype{Y}{>{\centering\arraybackslash}X}
    \begin{tabularx}{\textwidth}{YYY}
    \toprule
	& Ancilla-free method & Swap test \\
	\midrule
	\multirow{2}{*}{Number of measured qubits} &
	$\displaystyle \lg N$ & $\displaystyle 1$ \\[1mm]
	& [Higher] & [Lower] \\[3mm]
	{Sampling complexity for absolute error $\epsilon$ and confidence $\alpha$} &
	$\displaystyle \frac{1-p^2}{4 \epsilon^2}  \left[ \Phi^{-1} \!\left(\frac{1+\alpha}{2}\right) \right]^2$ & 
	$\displaystyle \frac{1-p^4}{4\epsilon^2 p^2} \left[ \Phi^{-1} \!\left(\frac{1+\alpha}{2}\right) \right]^2$ \\[4mm]
	& [Bounded in $p$] & [Unbounded for $p \to 0$] \\[3mm]
    Need for the loading unitary $U_1$ & Yes & No \\[3mm]
	\multirow{4}{*}{Depth} & $U_1^\dag$ is applied on the same register where $\ket{\psi_0}$ is loaded & $\ket{\psi_0}$ and $\ket{\psi_1}$ are loaded in parallel, and the test takes depth $O(\lg N)$ \\[1mm]
	& [Higher] & [Lower]\\
	\bottomrule
    \end{tabularx}
    \caption{Comparison of the ancilla-free method and the swap test for the calculation of the inner product $p$ between two statevectors $\ket{\psi_0}$ and $\ket{\psi_1}$.}
    \label{tab:inner-compare}
\end{table*}

\begin{figure}
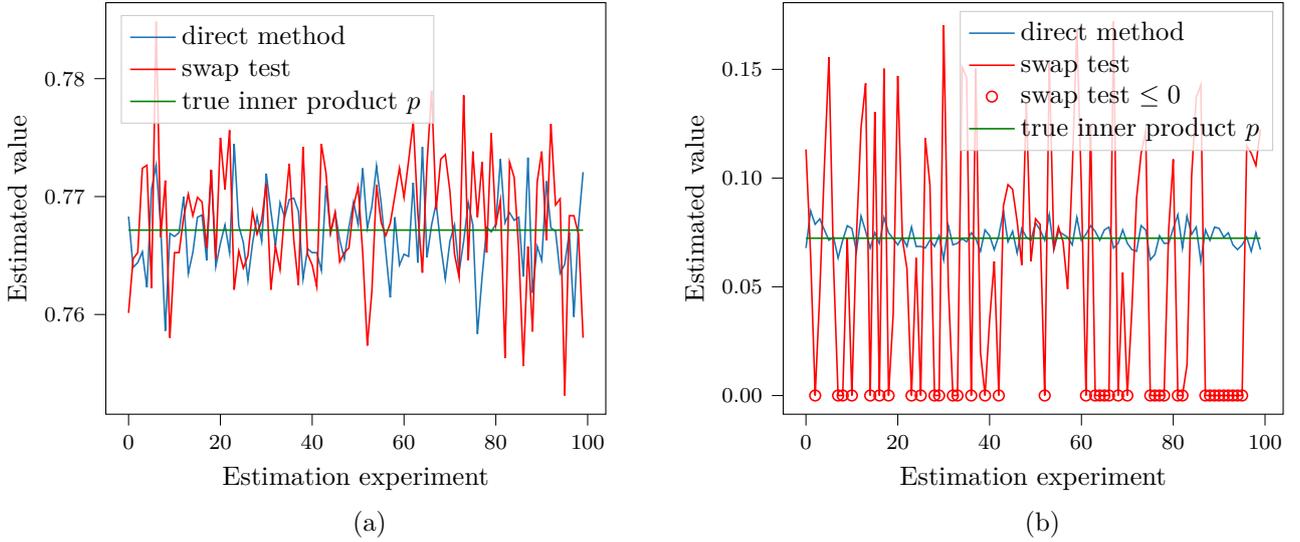

    \centering\small
    \include{Figures/innerproducts1}
    \hspace{1.5cm} (a) \hspace{.48\textwidth} (b)
    \caption{Each figure shows 100 independent estimations of the inner product between the same two statevectors. All estimations are obtained through 10,000 shots of the circuits on a noiseless simulator, so that the different outcomes are only an effect of random sampling. (a) The target inner product is big ($p=0.767$). In this case, the variance is $1.01 \cdot 10^{-5}$ for the ancilla-free method, and $2.77 \cdot 10^{-5}$ for the swap test. (b) The behavior of the swap test worsens when $p$ is small (in this case, $p=0.072$). The variance is $2.60 \cdot 10^{-5}$ for the ancilla-free method and $3.20 \cdot 10^{-3}$ for the swap test, which is significantly higher. Notice the presence of runs that provide a negative squared inner product, in 38 cases out of 100.}
    \label{fig:inner-compare}
\end{figure}

\subsection{Using the inner product after the Quantum Hadamard Product}\label{subsec:qhp+inner}
Two aspects must be taken into account to exploit the inner product techniques expressed so far into our algorithm based on the polynomial expansion: on one hand, the effect of the rescaling factors $\rho_T$ and $\rho_E$ on the sampling complexity, and on the other one, the consequences of the success rate $a_k^{-2}$ of the QHPs.
\begin{figure}
    \centering
    (a)
    \tikzsetnextfilename{qhp-swap1}
    \begin{quantikz}[row sep={0.8cm,between origins}]
    \lstick{$\ket{0}$} & \qwbundle{n} & \gate{U_T} & \ctrl{1} \gategroup[wires=2,steps=2,style={dashed, rounded corners, fill=black!10}, background]{} & \qw{} & \qw{} & \qw{} \gategroup[wires=4,steps=3,style={dashed, rounded corners, fill=blue!10}, background]{} & \swap{2} & \qw{} & \qw{} \\
    \lstick{$\ket{0}$} & \qwbundle{n} & \gate{U_T} & \targ{} & \meterD{\ketbra{0}{0}} \gategroup[wires=1,steps=1,style={draw=none}, label style={label position=above, anchor=north}]{$Z_1=0$} \\
    \lstick{$\ket{0}$} & \qwbundle{n} & \gate{U_E} & \qw{} & \qw{} & \qw{} & \qw{} & \swap{1} & \qw{} & \qw{} \\
    \lstick{$\ket{0}$} & \qw{} & \qw{} & \qw{} & \qw{} & \qw{} & \gate{H} & \ctrl{} & \gate{H} & \meter{$X=1$}
    \end{quantikz}
    \\[8mm]
    (b)
    \tikzsetnextfilename{qhp-swap2}
    \begin{quantikz}[row sep={0.8cm,between origins}]
    \lstick{$\ket{0}$} & \qwbundle{n} & \gate{U_T} & \ctrl{1} & \swap{2} & \qw{} & \qw{} \\
    \lstick{$\ket{0}$} & \qwbundle{n} & \gate{U_T} & \targ{} & \qw{} & \qw{} & \meter{$Z_1=0$}\\
    \lstick{$\ket{0}$} & \qwbundle{n} & \gate{U_E} & \qw{} & \swap{1} & \qw{} & \qw{} \\
    \lstick{$\ket{0}$} & \qw{} & \qw{} & \gate{H} & \ctrl{} & \gate{H} & \meter{$X=1$}
    \end{quantikz}
    \caption{An exemplary demonstration of the algorithm based on QHP and the swap test, for the calculation of the inner product between temperatures to the power $k$ and prices, when $k=2$. $U_T$ and $U_E$ are the loading unitaries for $T$ and $E$ respectively. In (a), the algorithm in its original formulation, where the gray box highlights the QHP, and the blue box the swap test. In (b), the version without mid-measurements. All measurements are deferred to the end.}
    \label{fig:qhp+swap}
\end{figure}
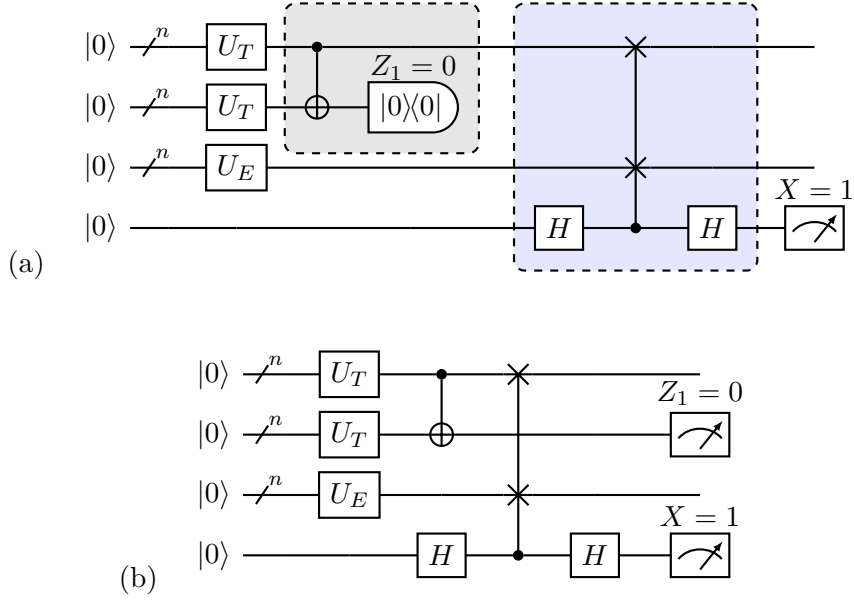
\begin{prop}[Algorithm QHP + swap test in amplitude encoding]\label{prop:errorbound-qhp+swap}
Let $k$ be a fixed power order. Consider a circuit that produces the state $\ket{\psi_T^{(k)}}$ defined in Eq.~\eqref{eq:psiTpower} through QHPs (with or without mid-measurements), then loads $\ket{\psi_E}$ and applies the swap test between $\ket{\psi_T^{(k)}}$ and $\ket{\psi_E}$, as depicted in Fig.~\ref{fig:qhp+swap}. Call $X \in \{0,1\}$ the output of the measurement of the control qubit in the swap test, and $\mathbf{Z} \in \{0, ..., N-1\}^{k-1}$ the outputs of all the $k-1$ measurements in the QHPs. Define $X_i \sim X$ and $\mathbf{Z}_i \sim \mathbf{Z}$, for $i=1,...,S$, as the outcomes of $S$ independent samples from the circuit. Let
\begin{equation*}
Y_S := \sqrt{ \frac{2 \# \{i : X_i = 0, \mathbf{Z}_i = \mathbf{0} \} - \# \{i: \mathbf{Z}_i = \mathbf{0}\}}{S}}; \qquad Y'_S := \rho_T^{-k} \rho_E^{-1} Y_S.
\end{equation*}
Then
\begin{enumerate}
    \item
$\mathbb{E}[Y_S] \to \sum_j E_j T_j^k =: y_k$ when $S\to\infty$ and $\mathbb{E}[Y'_S] \to \sum_j E'_j (T'_j-\eta)^k =: y'_k$ when $S\to\infty$;

    \item
assuming $y_k \neq 0$, the absolute error for $Y_S$ is controlled by $\mathbb{P}\left( \abs{ Y_S - y_k} < \epsilon \right) \leq \alpha$ once $S$ is chosen as
\begin{equation} \label{eq:errorbound-qhp+swap:claim2}
    S \geq \max \left\{ 4 y_k^2 (a_k^2-1), \ \frac{1-a_k^4 y_k^4}{a_k^2 y_k^2} \right\} \frac{1}{\epsilon^2} \left[ \Phi^{-1} \!\left(\frac{3+\alpha}{4}\right) \right]^2
\end{equation}
asymptotically when $\epsilon \to 0$, where $\Phi$ is the CDF of the standard normal distribution;

    \item
assuming again $y_k \neq 0$, $\mathbb{P}\left( \abs{ Y_S - y_k} < \epsilon \right) \leq \alpha$ is also guaranteed by the stronger condition
\begin{equation} \label{eq:errorbound-qhp+swap:claim3}
    S \geq \max \left\{ 4, \ a_k^{-2} y_k^{-2} \right\} \frac{1}{\epsilon^2} \left[ \Phi^{-1} \!\left(\frac{3+\alpha}{4}\right) \right]^2
\end{equation}
asymptotically when $\epsilon \to 0$;

    \item
assuming $y'_k \neq 0$, any of the conditions in Eqs.~\eqref{eq:errorbound-qhp+swap:claim2} or~\eqref{eq:errorbound-qhp+swap:claim3} is also sufficient to control the error of the originally scaled problem in the following sense: $\mathbb{P}\left( \abs{ Y'_S - y'_k} < \rho_T^{-k} \rho_E^{-1} \epsilon \right) \leq \alpha$.
\end{enumerate}
\end{prop}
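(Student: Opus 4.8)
The plan is to condition on the event that all the QHP post-selections succeed, which reduces the analysis to two estimation subproblems whose error budgets are then combined. The two conditional facts I would record first are: by the (iterated) QHP construction of Eq.~\eqref{eq:qhp}, the joint post-selection $\mathbf{Z}_i=\mathbf{0}$ occurs with probability $a_k^{-2}$ (the same for the implementations with or without mid-measurements), and conditionally on it the working register carries exactly the state $\ket{\psi_T^{(k)}}$ of Eq.~\eqref{eq:psiTpower}; hence, conditionally on $\mathbf{Z}_i=\mathbf{0}$, the swap-test bit $X_i$ equals $0$ with probability $\tfrac12+\tfrac12 p^2$, where $p:=\braket{\psi_T^{(k)}}{\psi_E}=a_k\sum_j T_j^k E_j=a_k y_k>0$ by Assumption~\ref{ass:innerprod-positiveterms}. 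It is convenient to work with the i.i.d.\ per-sample quantity $W_i$ equal to $+1$ when $X_i=0,\mathbf{Z}_i=\mathbf{0}$, to $-1$ when $X_i=1,\mathbf{Z}_i=\mathbf{0}$, and to $0$ when $\mathbf{Z}_i\neq\mathbf{0}$, so that $Y_S=\sqrt{\max\{0,\tfrac1S\sum_i W_i\}}$ (the truncation being needed for finite $S$, exactly as in Lemma~\ref{lemma:clt-sqrt}); a one-line computation gives $\mathbb{E}[W_i]=p^2a_k^{-2}=y_k^2$ and $\operatorname{Var}(W_i)=a_k^{-2}-y_k^4$.

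Claim~1 then follows from the strong law of large numbers: $\tfrac1S\sum_i W_i\to y_k^2$ a.s., so $Y_S\to y_k$ a.s.\ by continuity of $t\mapsto\sqrt{\max\{t,0\}}$, and since $0\le Y_S\le 1$ deterministically, bounded convergence gives $\mathbb{E}[Y_S]\to y_k$; substituting the affinities $T_j=\rho_T(T'_j-\eta)$ and $E_j=\rho_E E'_j$ of Eq.~\eqref{eq:affinity} shows $\rho_T^{-k}\rho_E^{-1}y_k=\sum_j E'_j(T'_j-\eta)^k=y'_k$, whence $\mathbb{E}[Y'_S]\to y'_k$. Claim~4 is immediate, since $Y'_S-y'_k=\rho_T^{-k}\rho_E^{-1}(Y_S-y_k)$ identically, so $\{\abs{Y'_S-y'_k}<\rho_T^{-k}\rho_E^{-1}\epsilon\}=\{\abs{Y_S-y_k}<\epsilon\}$ and any admissible $S$ transfers verbatim. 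For Claim~3 I would use Cauchy--Schwarz: since $\sum_jT_j^2=\sum_jE_j^2=1$ we have $y_k=\sum_jE_jT_j^k\le a_k^{-1}$, i.e.\ $p=a_ky_k\le1$, and $a_k\ge1$; hence $4y_k^2(a_k^2-1)=4p^2(1-a_k^{-2})\le4$ and $\tfrac{1-a_k^4y_k^4}{a_k^2y_k^2}=\tfrac{1-p^4}{p^2}\le\tfrac1{p^2}=\tfrac1{a_k^2y_k^2}$, so the threshold in Eq.~\eqref{eq:errorbound-qhp+swap:claim3} dominates the one in Eq.~\eqref{eq:errorbound-qhp+swap:claim2} termwise and Claim~3 follows from Claim~2.

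The heart of the argument is Claim~2. I would split the randomness into $M:=\#\{i:\mathbf{Z}_i=\mathbf{0}\}\sim\mathrm{Bin}(S,a_k^{-2})$ and, conditionally on $M$, $M_0:=\#\{i:X_i=0,\mathbf{Z}_i=\mathbf{0}\}\sim\mathrm{Bin}(M,\tfrac12+\tfrac12p^2)$, and write $Y_S=\sqrt{M/S}\cdot\sqrt{(2M_0-M)/M}$, separating the empirical QHP rate $M/S$ (an estimator of $a_k^{-2}$) from the empirical swap-test statistic $(2M_0-M)/M$ (an estimator of $p^2$, precisely the one analysed in Prop.~\ref{prop:errorbound-swap}). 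Linearising $Y_S-y_k$ about the point $(a_k^{-2},p^2)$ splits the error into a term coming from QHP-rate estimation and a term coming from swap-test estimation; as $\epsilon\to0$---equivalently $S\to\infty$, which also forces $M\to\infty$ a.s.\ and so legitimises the CLT applied conditionally on $M$---the first term is asymptotically normal with variance proportional to $y_k^2(a_k^2-1)/S$, while the second is governed by Prop.~\ref{prop:errorbound-swap} together with Lemma~\ref{lemma:clt-sqrt}, with variance proportional to $(1-p^4)/(p^2S)$. Asking each term to stay below $\epsilon/2$---so that their sum stays below $\epsilon$---with probability $\tfrac{1+\alpha}2$ apiece, and combining the two events through $\mathbb{P}(A\cap B)\ge\mathbb{P}(A)+\mathbb{P}(B)-1$ to secure the joint event with probability $\alpha$, yields exactly the two thresholds $4y_k^2(a_k^2-1)\,\epsilon^{-2}[\Phi^{-1}(\tfrac{3+\alpha}4)]^2$ and $\tfrac{1-a_k^4y_k^4}{a_k^2y_k^2}\,\epsilon^{-2}[\Phi^{-1}(\tfrac{3+\alpha}4)]^2$, the replacement $\tfrac{1+\alpha}2\mapsto\tfrac{3+\alpha}4$ inside $\Phi^{-1}$ being precisely the price of the union bound; taking $S$ above their maximum establishes Eq.~\eqref{eq:errorbound-qhp+swap:claim2}.

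The step I expect to be the main obstacle is this two-stage analysis: one must push the CLT conditional on $M$ cleanly through the product $\sqrt{M/S}\,\sqrt{(2M_0-M)/M}$, control the joint fluctuations of two dependent but asymptotically well-separated sources of randomness, check that the $\max\{\cdot,0\}$ truncation is asymptotically inert because $y_k^2>0$, and allocate the error budget between the two contributions so that the stated constants emerge. By comparison, Claims~1, 3 and 4 are routine.
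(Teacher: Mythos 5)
Your proposal is correct and follows essentially the same route as the paper's proof: the factorization $Y_S=\sqrt{M/S}\cdot\sqrt{(2M_0-M)/M}$ is exactly the paper's $A_S B_S$ decomposition, the $\epsilon/2$ budget split with confidence $\tfrac{1+\alpha}{2}$ per term combined by $\mathbb{P}(A\cap B)\geq\mathbb{P}(A)+\mathbb{P}(B)-1$ is the paper's argument verbatim (whence the $\Phi^{-1}(\tfrac{3+\alpha}{4})$), and the two thresholds arise from Lemma~\ref{lemma:clt-sqrt} applied to the Bernoulli QHP-success rate and from Prop.~\ref{prop:errorbound-swap} applied conditionally on $\mathbf{Z}_i=\mathbf{0}$, just as in the paper. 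Claims 1, 3, and 4 are also handled the same way (law of large numbers, Cauchy--Schwarz bound $a_k y_k\leq 1$, and rescaling, respectively).
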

\begin{proof}
$X$ is a Bernoulli r.v. and, by the swap test theory,
\begin{equation}\label{eq:prob-swap1}
    \mathbb{P}(X = 0 | \mathbf{Z} = \mathbf{0}) = \frac{1}{2}+\frac{1}{2} \left| \braket{\psi_E}{\psi_T^{(k)} } \right|^2
    = \frac{1}{2}+\frac{1}{2} a_k^2 \left( \sum_{j=0}^{N-1} E_j T_j^k \right)^2
\end{equation}
On the other hand,
\begin{equation}\label{eq:prob-swap2}
    \mathbb{P}(X = 0 | \mathbf{Z} = \mathbf{0}) = \frac{\mathbb{P}(X = 0, \mathbf{Z} = \mathbf{0})}{\mathbb{P}( \mathbf{Z} = \mathbf{0})}.
\end{equation}
Recalling that
\begin{equation}\label{eq:successrate}
    \mathbb{P}( \mathbf{Z} = \mathbf{0}) = a_k^{-2},
\end{equation}
by Eqs.\eqref{eq:prob-swap1} and~\eqref{eq:prob-swap2} we derive
$$\mathbb{P}(X = 0, \mathbf{Z} = \mathbf{0}) = \frac{1}{2 a_k^2}+\frac{1}{2} \left( \sum_{j=0}^{N-1} E_j T_j^k \right)^2$$
and therefore
\begin{equation}\label{eq:inner-from-probabilities}
    2 \mathbb{P}(X = 0, \mathbf{Z} = \mathbf{0}) - \mathbb{P}( \mathbf{Z} = \mathbf{0}) = \left( \sum_{j=0}^{N-1} E_j T_j^k \right)^2 .
\end{equation}
The first claim is an application of the law of large numbers. For the second part of this claim, consider that $y_k' = \rho_T^{-k} \rho_E^{-1} y_k$.

Moving to the second claim, observe that
$$Y_S = \sqrt{\abs{ 1-2 \frac{ \sum_{i: \mathbf{Z}_i = \mathbf{0}} X_i}{\# \{i: \mathbf{Z}_i = \mathbf{0}\}}}}
\sqrt{ \frac{\# \{i: \mathbf{Z}_i = \mathbf{0}\}}{S}}.$$

Therefore, let us name the two square roots $A_S$ and $B_S$ respectively. We know $B_S \to a_k^{-1}$ from Eq.~\eqref{eq:successrate}, then
\begin{equation*}
\begin{split}
    \mathbb{P}\left( \abs{ Y_S - y_k} < \epsilon \right)
    & = \mathbb{P}\left( \abs{ A_S B_S - A_S a_k^{-1} + A_S a_k^{-1} - y_k} < \epsilon \right) \\
    & \geq \mathbb{P}\left( \abs{ A_S B_S - A_S a_k^{-1}} < \epsilon /2 \text{ and } \abs{A_S a_k^{-1} - y_k} < \epsilon /2 \right)\\
    & = 1- \mathbb{P}\left( \abs{ A_S B_S - A_S a_k^{-1}} > \epsilon /2 \text{ or } \abs{A_S a_k^{-1} - y_k} > \epsilon /2 \right)\\
    & \geq 1- \mathbb{P}\left( \abs{ A_S B_S - A_S a_k^{-1}} > \epsilon /2 \right) - \mathbb{P}\left( \abs{A_S a_k^{-1} - y_k} > \epsilon /2 \right)\\
    & = \mathbb{P}\left( \abs{ A_S B_S - A_S a_k^{-1}} < \epsilon /2 \right) + \mathbb{P}\left( \abs{A_S a_k^{-1} - y_k} < \epsilon /2 \right) -1\\
    & \geq 2 \beta -1 = \alpha,
\end{split}
\end{equation*}
once we prove that the following inequalities hold for $\beta = \frac{1+\alpha}{2}$:
\begin{empheq}[left=\empheqlbrace]{align}
    & \mathbb{P}\left( \abs{ A_S B_S - A_S a_k^{-1}} < \epsilon /2 \right) \geq \beta \label{eq:errorbound-qhp+swap:proof2a} \\
    & \mathbb{P}\left( \abs{A_S - a_k y_k} < a_k \epsilon /2 \right) \geq \beta \label{eq:errorbound-qhp+swap:proof2b}
\end{empheq}

Let us start with the latter. Apply Prop.~\ref{prop:errorbound-swap} to $X_i$ conditioned to $\mathbf{Z}_i = \mathbf{0}$, taking $a_k y_k$ as the $p$ in Prop.~\ref{prop:errorbound-swap},  $\epsilon a_k / 2$ as the $\epsilon$ in Prop.~\ref{prop:errorbound-swap}, $\# \{i: \mathbf{Z}_i = \mathbf{0}\}$ as the $S$ in Prop.~\ref{prop:errorbound-swap}, and $\beta$ as the $\alpha$ in Prop.~\ref{prop:errorbound-swap}, thus getting
$$\# \{i: \mathbf{Z}_i = \mathbf{0}\} = \frac{1-a_k^4 y_k^4}{a_k^4 \epsilon^2 y_k^2} \left[ \Phi^{-1} \!\left(\frac{1+\beta}{2}\right) \right]^2. $$
Since $\# \{i: \mathbf{Z}_i = \mathbf{0}\}$ is asymptotic to $a_k^{-2} S$, Eq.~\eqref{eq:errorbound-qhp+swap:claim2} guarantees the last expression and therefore Eq.~\eqref{eq:errorbound-qhp+swap:proof2b}.

Let us consider Eq.~\eqref{eq:errorbound-qhp+swap:proof2a} now. Since $A_S$ tends to $a_k y_k$, it is definitely dominated by $2a_k y_k$. Eq.~\eqref{eq:errorbound-qhp+swap:proof2a} is then implied by
$$\abs{B_S - a_k^{-1}} < \frac{\epsilon}{4 a_k y_k}.$$
Now, $B_S^2$ is the mean of $S$ i.i.d. Bernoulli variables with $\mu=a_k^{-2}$ and $\sigma^2 = \mu (1-\mu) = a_k^{-2} (1-a_k^{-2})$. Applying Lemma~\ref{lemma:clt-sqrt}, an asymptotically sufficient condition for Eq.~\eqref{eq:errorbound-qhp+swap:proof2a} is
$$S \geq \frac{a_k^{-2} (1-a_k^{-2})}{4 a_k^{-2} (\epsilon^2 / 16 a_k^2 y_k^2)} \left[ \Phi^{-1} \!\left(\frac{1+\beta}{2}\right) \right]^2,$$
which is again implied by Eq.~\eqref{eq:errorbound-qhp+swap:claim2}. This way the second claim is proved.

The third claim derives from the second one, once we consider that $y_k = \sum_j E_j T_j^k \leq \norm{E} \norm{T^k} = a_k^{-1}$. The last claim is trivial.
\end{proof}
\begin{prop}[Algorithm QHP + ancilla-free method in amplitude encoding] \label{prop:errorbound-qhp+direct}
Let $k$ be a fixed power order. Implement a circuit that produces the state $\ket{\psi_T^{(k)}}$ defined in Eq.~\eqref{eq:psiTpower} through QHPs (with or without mid-measurements), then loads $\ket{\psi_E}$ and applies the ancilla-free method for the inner product between $\ket{\psi_T^{(k)}}$ and $\ket{\psi_E}$, and subsequently measures the target register, as depicted in Fig.~\ref{fig:qhp+direct}(a) and~(b). Let $R \in \{0, ..., N-1\}$ be the measurement output of the target register, let $\mathbf{Z} \in \{0, ..., N-1\}^{k-1}$ the outputs of all the $k-1$ measurements in the QHPs, and let $X$ be a r.v. valued 1 if $R=0$ and $\mathbf{Z}=\mathbf{0}$, and valued 0 otherwise. Consider $S$ independent shots, and let $X_i \sim X$ be their outcomes, for $i=1,...,S$. Finally define
\begin{equation*}
Y_S :=  \sqrt{\bar X_S}; \qquad Y'_S := \rho_T^{-k} \rho_E^{-1} Y_S.
\end{equation*}
Then
\begin{enumerate}
    \item
$\mathbb{E}[Y_S] \to \sum_j E_j T_j^k =: y_k$ when $S\to\infty$ and $\mathbb{E}[Y'_S] \to \sum_j E'_j (T'_j-\eta)^k =: y'_k$ when $S\to\infty$;

    \item
assuming $y_k \neq 0$, the absolute error for $Y_S$ is controlled by $\mathbb{P}\left( \abs{ Y_S - y_k} < \epsilon \right) \leq \alpha$ once $S$ is chosen as
\begin{equation}\label{eq:errorbound-qhp+direct:claim2}
    S = \frac{1 - y_k^2}{4 \epsilon^2}  \left[ \Phi^{-1} \!\left(\frac{1+\alpha}{2}\right) \right]^2
\end{equation}
asymptotically when $\epsilon \to 0$, where $\Phi$ is the CDF of the standard normal distribution;

    \item
assuming again $y_k \neq 0$,  $\mathbb{P}\left( \abs{ Y_S - y_k} < \epsilon \right) \leq \alpha$ is also guaranteed by the stronger condition
\begin{equation} \label{eq:errorbound-qhp+direct:claim3}
    S = \frac{1}{4 \epsilon^2}  \left[ \Phi^{-1} \!\left(\frac{1+\alpha}{2}\right) \right]^2
\end{equation}
asymptotically when $\epsilon \to 0$;

    \item
assuming $y'_k \neq 0$, any of the conditions in Eqs.~\eqref{eq:errorbound-qhp+direct:claim2} or~\eqref{eq:errorbound-qhp+direct:claim3} is also sufficient for the error of the originally scaled problem in the following sense: $\mathbb{P}\left( \abs{ Y'_S - y'_k} < \rho_T^{-k} \rho_E^{-1} \epsilon \right) \leq \alpha$.
\end{enumerate}
\end{prop}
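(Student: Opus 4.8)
The plan is to follow the same skeleton as the proof of Prop.~\ref{prop:errorbound-qhp+swap}, but the argument is substantially shorter here because the ancilla-free estimator $Y_S=\sqrt{\bar X_S}$ involves a single square root, with no product of two random square-root factors to control separately.

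First I would identify the law of $X$. Conditioned on every QHP post-selection succeeding, i.e.\ $\mathbf{Z}=\mathbf{0}$ --- which by Eq.~\eqref{eq:successrate} occurs with probability $a_k^{-2}$ --- the surviving register holds the normalized state $\ket{\psi_T^{(k)}}$ of Eq.~\eqref{eq:psiTpower}. Applying $U_E^\dag$ and measuring the target register then returns the all-zero outcome with probability $\bigl|\matrixel{0}{U_E^\dag}{\psi_T^{(k)}}\bigr|^2 = \bigl|\braket{\psi_E}{\psi_T^{(k)}}\bigr|^2 = a_k^2\bigl(\sum_j E_j T_j^k\bigr)^2 = a_k^2 y_k^2$, using Eq.~\eqref{eq:U1_inverse}. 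Hence $\mathbb{P}(R=0,\mathbf{Z}=\mathbf{0}) = a_k^{-2}\cdot a_k^2 y_k^2 = y_k^2$, so that $X$ is Bernoulli with mean $\mu=y_k^2$ and variance $\sigma^2 = y_k^2(1-y_k^2)$. The only subtlety is the exact cancellation of the factor $a_k^{-2}$ coming from the QHP success rate against the factor $a_k^2$ coming from the normalization of $\ket{\psi_T^{(k)}}$; this holds identically with or without mid-circuit measurements, since in both cases the post-selected state and its success probability are unchanged.

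For claim~1, $\bar X_S\to y_k^2$ almost surely by the strong law of large numbers, so $Y_S\to y_k$ a.s.\ (using $y_k>0$ from Assumption~\ref{ass:innerprod-positiveterms}); since $0\le Y_S\le 1$, dominated convergence gives $\mathbb{E}[Y_S]\to y_k$, and multiplying by the deterministic constant $\rho_T^{-k}\rho_E^{-1}$ yields $\mathbb{E}[Y'_S]\to y'_k$, recalling from Eq.~\eqref{eq:defn-yprimek} that $y'_k=\rho_T^{-k}\rho_E^{-1}y_k=\sum_j E'_j(T'_j-\eta)^k$. For claim~2, I would apply Lemma~\ref{lemma:clt-sqrt} with $a=1$ and $b=0$, so that $a\mu+b=y_k^2$ and Eq.~\eqref{eq:clt-sqrt:S-scaling} reduces to $S=\dfrac{\sigma^2}{4y_k^2\epsilon^2}\bigl[\Phi^{-1}(\tfrac{1+\alpha}{2})\bigr]^2=\dfrac{1-y_k^2}{4\epsilon^2}\bigl[\Phi^{-1}(\tfrac{1+\alpha}{2})\bigr]^2$, which is exactly Eq.~\eqref{eq:errorbound-qhp+direct:claim2}. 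Claim~3 is then immediate: since $y_k^2\ge 0$ we have $1-y_k^2\le 1$, so the sample size in Eq.~\eqref{eq:errorbound-qhp+direct:claim3} dominates that in Eq.~\eqref{eq:errorbound-qhp+direct:claim2} and therefore also suffices. Claim~4 is trivial, since $\abs{Y'_S-y'_k}<\rho_T^{-k}\rho_E^{-1}\epsilon$ is equivalent to $\abs{Y_S-y_k}<\epsilon$ after dividing by the positive constant $\rho_T^{-k}\rho_E^{-1}$.

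I do not expect a genuine obstacle: the computation is essentially bookkeeping of conditional probabilities plus one invocation of Lemma~\ref{lemma:clt-sqrt}. The step requiring the most care is the derivation $\mathbb{P}(R=0,\mathbf{Z}=\mathbf{0})=y_k^2$ --- in particular, justifying that conditioning on $\mathbf{Z}=\mathbf{0}$ leaves exactly the normalized QHP output $\ket{\psi_T^{(k)}}$ irrespective of the QHP implementation --- together with the short boundedness remark needed in claim~1 to pass from almost-sure convergence of $Y_S$ to convergence of $\mathbb{E}[Y_S]$.
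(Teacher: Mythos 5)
Your proposal is correct and follows essentially the same route as the paper's proof: the same computation $\mathbb{P}(R=0,\mathbf{Z}=\mathbf{0})=\mathbb{P}(R=0\mid\mathbf{Z}=\mathbf{0})\,\mathbb{P}(\mathbf{Z}=\mathbf{0})=a_k^2y_k^2\cdot a_k^{-2}=y_k^2$, followed by the law of large numbers for claim~1 and a single invocation of Lemma~\ref{lemma:clt-sqrt} (with $a=1$, $b=0$) for claim~2. You merely supply details the paper leaves implicit (the dominated-convergence step for $\mathbb{E}[Y_S]$ and the explicit bound $1-y_k^2\le 1$ for claim~3), which are all sound.
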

\begin{proof}
Consider that
$$
\mathbb{E}[X] =
\mathbb{P}(R=0, \mathbf{Z}=\mathbf{0})=
\mathbb{P}(R=0 | \mathbf{Z}=\mathbf{0}) \, \mathbb{P}(\mathbf{Z}=\mathbf{0})=
a_k^2 y_k^2 a_k^{-2} = y_k^2
$$
Now the first claim is an application of the law of large numbers. For the second part of this claim, consider that $y_k' = \rho_T^{-k} \rho_E^{-1} y_k$. The second claim is an application of Lemma~\ref{lemma:clt-sqrt}. The third and fourth claims are obvious.
\end{proof}

\subsection{Circuit width and depth}

\begin{prop}\label{prop:width-depth-k}
    Let $k$ be a fixed power order. Let an amplitude-encoding state preparation routine be given for both $\ket{\psi_T}$ and $\ket{\psi_P}$. Suppose each state preparation works in depth $C_{\mathrm{d}, \mathrm{load}}(N)$. Then the algorithm described in Prop.~\ref{prop:errorbound-qhp+direct} (called QHP + ancilla-free method) and that in Prop.~\ref{prop:errorbound-qhp+swap} (QHP + swap test) have the following width and depth:
    \begin{equation}\label{eq:width-k}
        C_{\mathrm{w}}(k) = r(k) \lg N+ \delta^{(\mathrm{swap})},
    \end{equation}
    \begin{equation}\label{eq:depth-k}
        C_{\mathrm{d}}(k) \leq m(k) \, C_{\mathrm{d}, \mathrm{load}}(N) + k + (3\lg N + 1) \delta^{(\mathrm{swap})},
    \end{equation}
    where
    \begin{equation*}
        \delta^{(\mathrm{swap})}=\begin{cases}
              1 \quad &\text{if swap test is used,} \\
              0 \quad &\text{if ancilla-free method is used,} \\
         \end{cases}
    \end{equation*}
    \begin{equation*}
        r(k)=\begin{cases}
              2 \quad &\text{if mid-reset is used,} \\
              k+\delta^{(\mathrm{swap})} \quad &\text{otherwise,} \\
         \end{cases}
    \end{equation*}
    \begin{equation*}
        m(k)=\begin{cases}
              k + (1-\delta^{(\mathrm{swap})}) \quad &\text{if mid-reset is used and $\mu_{1}=\dots=\mu_{k-1}=0$,}\\
              t + (1-\delta^{(\mathrm{swap})}) \quad &\text{if mid-reset is used, $\mu_{1}=\dots=\mu_{t-1}=0$ and $\mu_t=1$, some $t \leq k-1$,}\\
              1 + (1-\delta^{(\mathrm{swap})}) \quad &\text{if no mid-reset is used,}
         \end{cases}
    \end{equation*}
    $\mu_t=0$ being a successful application of the QHP (namely $\mu_t$ being the output of measuring the $Z_t$ variable, defined in Fig.~\ref{fig:qhp+swap} and Fig.~\ref{fig:qhp+direct}).
    If mid-reset is used $m(k) \leq k$ and
    \begin{equation*}
        \mathbb E [m(k)] = 
              k a_k^{-2(k-1)} + (1-a_k^{-2}) \sum_{k'=1}^{k-1} k' a_k^{-2 (k'-1)}.
    \end{equation*}
\end{prop}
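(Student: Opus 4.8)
The plan is to read the two quantities off the circuit diagrams (Figs.~\ref{fig:qhp-midreset}, \ref{fig:qhp-nomidreset}, \ref{fig:qhp+direct} and~\ref{fig:qhp+swap}) and to organize everything as a small case analysis on the two binary choices appearing in the statement: whether mid-circuit resets are used, and whether the inner product is computed by the swap test ($\delta^{(\mathrm{swap})}=1$) or by the ancilla-free method ($\delta^{(\mathrm{swap})}=0$). Width is a register-liveness count, depth is a critical-path count, and the probabilistic formula for $\mathbb{E}[m(k)]$ is a truncated-geometric expectation; none of the three pieces is conceptually hard, but the depth bookkeeping needs care.

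For the width I would count the $\lg N$-qubit registers that must be alive at the same time. With mid-resets the single work register is reloaded $k$ times, so only the accumulator and the work register ever coexist, giving $2\lg N$ qubits; without mid-resets the $k$ copies of $\ket{\psi_T}$ coexist, giving $k\lg N$. In the swap-test variants one further $\lg N$-qubit register holds $\ket{\psi_E}$ (this register is absent in the ancilla-free variants, where $U_E^\dagger$ acts back on the accumulator) and one additional control qubit is required for the test itself. Collecting these contributions yields $r(k)=2$ (mid-reset) resp.\ $r(k)=k+\delta^{(\mathrm{swap})}$ (no mid-reset) together with the additive $\delta^{(\mathrm{swap})}$, i.e.\ Eq.~\eqref{eq:width-k}.

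For the depth I would split the critical path into three blocks: (i) the state-preparation unitaries, each of depth $C_{\mathrm{d},\mathrm{load}}(N)$; (ii) the bundle-CNOTs implementing the QHPs together with the mid-circuit projections and resets, which occupy at most $k$ layers (one CNOT layer per QHP along the mid-reset chain, or $\lg k\le k$ layers in the tree of Fig.~\ref{fig:qhp-nomidreset}, with the projection/reset layers absorbed into the bound); and (iii) the swap-test gadget, when present, consisting of a Hadamard, $\lg N$ controlled-SWAPs of constant depth each and a second Hadamard, hence at most $3\lg N+1$ layers. The only nontrivial block is (i), whose count is exactly $m(k)$. Without mid-resets all $k$ copies of $\ket{\psi_T}$ are prepared in parallel (one state-preparation depth), to which the ancilla-free method adds the single application of $U_E^\dagger$ on the accumulator, while the swap test prepares $\ket{\psi_E}$ in a parallel register at no extra depth; this gives $m(k)=1+(1-\delta^{(\mathrm{swap})})$. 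With mid-resets the copies are loaded one after another along the chain, so if dynamic stopping aborts at the $t$-th QHP (i.e.\ $\mu_1=\dots=\mu_{t-1}=0$, $\mu_t=1$) only the copies up to that point — and no $U_E^\dagger$ — have been applied, and a careful count of the critical-path preparations gives $m(k)=t+(1-\delta^{(\mathrm{swap})})$, degenerating to $m(k)=k+(1-\delta^{(\mathrm{swap})})$ when every QHP succeeds. Summing (i)--(iii) and letting the $\leq$ absorb the $O(1)$ slack in block (ii) yields Eq.~\eqref{eq:depth-k}; since at most $k-1$ QHPs can succeed, $m(k)\le k$ in the mid-reset case.

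Finally, for $\mathbb{E}[m(k)]$ in the mid-reset case I would model the $k-1$ QHP measurement outcomes as independent Bernoulli trials, each returning $\mathbf 0$ (success) with probability $a_k^{-2}$ — the overall QHP success rate of Eq.~\eqref{eq:successrate} used here as a per-step proxy. The run then aborts at step $k'$ with probability $a_k^{-2(k'-1)}(1-a_k^{-2})$ for $k'=1,\dots,k-1$ and reaches the end with probability $a_k^{-2(k-1)}$; substituting the corresponding values of $m(k)$ into $\mathbb{E}[m(k)]=\sum_{k'} m(k)\,\mathbb{P}(\cdot)$ produces the stated geometric-type sum. The main obstacle is precisely the bookkeeping in block (i): tracking which state-preparation sub-circuits can run in parallel and keeping straight the asymmetry between the ancilla-free method — where $U_E^\dagger$ sits on the critical path — and the swap test — where $\ket{\psi_E}$ is prepared off the critical path but the $O(\lg N)$ controlled-SWAP block is added instead; the probabilistic part is then a routine truncated-geometric computation.
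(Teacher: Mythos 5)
Your proposal is correct and follows essentially the same route as the paper's proof: width by counting coexisting $\lg N$-qubit registers plus the swap-test ancilla, depth by decomposing the critical path into the $m(k)$ sequential state preparations, the $\le k$ QHP layers, and the $3\lg N+1$ swap-test block. Your explicit truncated-geometric derivation of $\mathbb{E}[m(k)]$ (with the per-step Bernoulli($a_k^{-2}$) model) fills in a step the paper dismisses as ``obvious,'' and reproduces the stated formula exactly.
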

\begin{proof}
Let us start calculating the space complexity $C_{\mathrm{w}}$ of the quantum circuit, namely the circuit width. The width required to load a data set of size $N$ is $\lg N$. Let us now justify the prefactor $r(k)$. When calculating the width required to encode data for the calculation, two scenarios must be taken into account. If we do not resort to mid-measurements, $k$ copies of $\ket{ \psi_T }$ in different registers are needed, plus one copy of $\ket{ \psi_E }$ that lies in a different register only in the case of the swap test. If we conversely can apply mid-circuit resets, the number of registers can be reduced to $2$, regardless of $k$.
Finally, the swap test requires only one additional qubit, and the overall space cost is that in Eq.~\eqref{eq:width-k}.

Moving to depth, if mid-reset is not used, the data encoding of the $k$ copies $\ket{ \psi_T }$ is performed in parallel, as well as that of $\ket{ \psi_E }$. Vice versa, if mid-reset is used, data encoding is done in series, and in a given shot, an iteration of encoding is performed only if the measurement of the previous iteration was successful. This is called \textit{dynamic stopping}.
The inequality and the expectation of $m(k)$ are obvious.

To complete the derivation of Eq.~\eqref{eq:depth-k}, notice that once data is loaded, the additional depth for each QHP is 1, since all CNOTs can be performed in parallel. Consequently, the additional depth required to produce $\ket{ \psi^{(k)}_T }$, is $t$, and is therefore dominated by $k$. Finally, the swap test has a depth of $3n+1$, as all swaps are controlled by the same ancilla.
\end{proof}

\subsection{Putting powers together}\label{subsec:assemble-k}
So far, we constructed an algorithm providing an estimator $Y_S$, called $Y_k$ in this subsection, that is able to approximate $y_k$, for a single power $k$, up to a given error. Now, following Eq.~\eqref{eq:finalsum}, we define an estimator
\begin{equation}\label{eq:estimatorV}
    V := \sum_{k=0}^K \rho_T^{-k} \rho_E^{-1} b_k(\eta) Y_k
\end{equation}
and we want to verify that $V$ is a good approximation for $v$. As a part of out asymptotic analysis, we shall discuss the error scaling when $N \to \infty$. Since we can expect the contract value to be affected by the growth of $N$, the analysis must be conducted under \textit{relative} error.
By means of the next Proposition, we split the task of estimating the error between $V$ and $v$, into a first calculation of the error between $V$ and $v^*$, and a subsequent one of the error between $v^*$ and $v$.

\begin{prop}[Error triangulation with $v^*$]\label{prop:error-tria}
    Suppose that there exist a relative error $\epsilon > 0$ and a confidence level $\beta > 0$ such that
    \begin{enumerate}[label=(\roman*)]
        \item $V$ is a good estimator for $v^*$, namely
    \begin{equation}\label{eq:V-goodestimator}
        \mathbb{P}\left(\abs{V-v^*} \leq \epsilon \abs{v} \right) \geq \beta
    \end{equation}
    where $V$ is defined by Eq.~\eqref{eq:estimatorV} and $v^*$, introduced in Eq.~\eqref{eq:contractvalue-approx}, is
    $$v^* = \sum_{j=0}^{N-1} E'_j \; p(T'_j) =
    \sum_{j,k} b_k(\eta) \; E'_j \; (T'_j-\eta)^k = \sum_{k=0}^{K} b_k(\eta) \; y'_k = \sum_{k=0}^{K} \rho_T^{-k} \rho_E^{-1} b_k(\eta) \; y_k;$$
        \item $v^*$ is a good approximation for $v$, namely
    \begin{equation}\label{eq:vstar-goodproxy}
        \mathbb{P}\left(\abs{ v^* - v} \leq \epsilon \abs{v} \right) \geq \beta.
    \end{equation}
    \end{enumerate}
    Then, $V$ is also a good estimator for $v$:
    $$\mathbb{P}\left(\abs{V-v} \leq 2 \epsilon \abs{v} \right) \geq 2 \beta -1.$$
\end{prop}
\begin{proof}
Consider that
\begin{equation*}
\begin{split}
\mathbb{P}\left(\abs{V-v} \leq 2 \epsilon \abs{v} \right)
&\geq \mathbb{P}\left(\abs{V-v^*} \leq \epsilon \abs{v} \text{ and } \abs{v^*-v} \leq \epsilon \abs{v} \right) \\
&= 1- \mathbb{P}\left(\abs{V-v^*} \geq \epsilon \abs{v} \text{ or } \abs{v^*-v} \geq \epsilon \abs{v} \right) \\
&\geq 1- \mathbb{P}\left(\abs{V-v^*} \geq \epsilon \abs{v} \right) - \mathbb{P}\left(\abs{v^*-v} \geq \epsilon \abs{v} \right)\\
&= 1- \mathbb{P}\left(\abs{V-v^*} \geq \epsilon \abs{v} \right) + 1 - \mathbb{P}\left(\abs{v^*-v} \geq \epsilon \abs{v} \right) -1\\
&= \mathbb{P}\left(\abs{V-v^*} \leq \epsilon \abs{v} \right) + \mathbb{P}\left(\abs{v^*-v} \leq \epsilon \abs{v} \right) -1\\
&\geq 2\beta -1.
\end{split}
\end{equation*}
The proof is thus given.
\end{proof}

In condition (ii) of the previous Proposition, we introduced a probability of $v^*$ being close to~$v$, implying that the underlying temperatures and prices are sampled from a random process. This subsection is the only portion of the Appendices where such stochasticity is explicitly utilized. The effect of the shape of said distributions is further studied in the next Proposition, which develops sufficient conditions for (ii) of Prop.~\ref{prop:error-tria}. In the rest of the Appendices, on the contrary, condition (i) is examined, so that prices and temperatures (as well as $v$ and $v^*$) are assumed to be fixed, and whenever we talk about probabilities, the randomness derives from the quantum measurements only.

\begin{prop}[Sufficient condition for $v^* \approx v$]\label{prop:sufficient-vapprox}
    The following set of conditions is sufficient to satisfy Eq.~\eqref{eq:vstar-goodproxy}:
    \begin{enumerate}[label=(\roman*)]
        \item
    $\abs{v}$ scales linearly with $N$ when $N$ grows, namely $\abs{v} \sim c N$, some $c \in (0, +\infty)$;
        \item
    prices are bounded, namely $E'_j$ are valued in a bounded interval $[-B, B]$, with $B>0$;
        \item 
    the polynomial $p(T'_j) := \sum_k b_k(\eta) \; (T'_j - \eta)^k$, introduced in Eq.~\eqref{eq:poly}, is a good approximation for $f(T'_j)$ under the distribution of temperatures, in the following sense:
    $$\mathbb P \left( \frac{1}{N} \abs{\sum_{j=0}^{N-1} (f(T'_j)- p(T'_j))} \leq \frac{c}{B} \epsilon \right) \geq \beta$$ definitely when $N \to \infty.$
    \end{enumerate}
\end{prop}
\begin{proof}
Write
\begin{equation*}
\frac{\abs{v-v^*}}{\abs{v}}
\sim \frac{1}{c N} \abs{v-v^*}
\overset{\Delta}{=} \frac{1}{cN} \abs{\sum_{j=0}^{N-1} (f(T'_j)- p(T'_j)) E'_j}
\leq \frac{B}{c} \frac{1}{N} \abs{\sum_{j=0}^{N-1} (f(T'_j)- p(T'_j))}.
\end{equation*}
\end{proof}

\begin{remark}[Linear scaling of $v$]\label{rem:v-scaling}
    In the previous Proposition, we introduced the assumption that $v$ scales linearly with $N$: let us now discuss it. If the time horizon is doubled in length, it is reasonable to think that the overall contract value doubles, under stationary processes. More technically, this is the case for instance if $T'_j$ are i.i.d. samples drawn from a same random variable $T'$, and $E'_j$ from $E'$, such that $f(T') E' \in L^1$: indeed, by the law of large numbers,
    $$ \frac{v}{N} = \frac{1}{N} \sum_{j=0}^{N-1} f(T'_j) E'_j \to \mathbb E [f(T') E']$$
    that is a positive quantity since volumes and prices are.
    
    Our cost-scaling analysis of the relative error under the assumption that $v$ behaves linearly, is equivalent to the study of the cost-scaling for an error $\epsilon / N$, which is the approach followed by Ref.~\cite{guo_nonlinear_2021}.
\end{remark}

\begin{remark}
    In the simplified case where temperatures are i.i.d. samples from a same random variable $T'$, and assuming $f(T'), p(T') \in L^1$, then condition (iii) in the previous proposition is guaranteed by the Central Limit Theorem, definitely in $N$, when
    \begin{equation}\label{eq:p-goodapprox}
        \mathbb E [f(T') - p(T')] < \frac{c}{B}\epsilon.
    \end{equation}
    Eq.~\eqref{eq:p-goodapprox} exhibits that condition (iii) of the proposition is indeed a requirement that $p$ well approximates $f$ under the distribution of temperatures.
\end{remark}

With Prop.~\ref{prop:sufficient-vapprox} and its remarks, we have completed the discussion of condition (ii) in Prop.~\ref{prop:error-tria}. Let us now turn to (i), which is the core convergence discussion for the quantum algorithm. Below in this subsection we argument (i) in the context of QAE-free methods in amplitude encoding. We explore (i) for QAE-based methods and for the BOE encoding in the next Appendices.

\begin{prop}[Convergence rate in amplitude encoding]\label{prop:assemble-k}
    Let $w \in [0,1]^K$ such that $\sum_k w_k = 1$. Also, let $\alpha \in [0, 1)^K$ such that $\sum_k (1- \alpha_k) = 1-\beta$ for some $\beta \in (0,1)$. Finally, let $\epsilon > 0$. For instance, one may take $w_k = K^{-1}$ for all $k$ and $\alpha_k = \frac{K-1+\beta}{K}$ for all $k$.

    Then $V$ defined in Eq.~\eqref{eq:estimatorV} is an estimator for $v^*$ such that
    $$\mathbb{P}\left(\abs{V-v^*} \leq \epsilon \right) \geq \beta, \qquad \text{where } v^* := \sum_{k=0}^{K} b_k(\eta) \; y'_k = \sum_{k=0}^{K} \rho_T^{-k} \rho_E^{-1} b_k(\eta) \; y_k, $$
    provided that $\mathbb{P}(\rho_T^{-k} \rho_E^{-1} \abs{b_k(\eta)} \ \abs{Y_k-y_k} \leq w_k \epsilon) \geq \alpha_k$ for all $k \leq K$.

    Consequently, a similar estimation holds for the relative error:
    $$\mathbb{P}\left(\abs{V-v^*} \leq \epsilon \abs{v} \right) \geq \beta,$$ provided that
    $\mathbb{P}\left( \abs{Y_k-y_k} \leq r_k^{-1} w_k \; \abs{b_k(\eta)}^{-1} \; \epsilon \right)  \geq \alpha_k$ for all $k \leq K$, where
    \begin{equation}\label{eq:defn-rk}
        r_k^{-1} := \abs{v} \; \rho_T^{k} \; \rho_E.
    \end{equation}

    Finally, let $K$ be fixed. When $N$ grows, $r_K$ dominates the asymptotic behavior of $r_k$ for all other $k \leq K$.
\end{prop}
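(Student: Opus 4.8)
My plan is to handle the three assertions in turn; each reduces to an elementary estimate once the structure is laid bare. For the absolute-error bound, note that $V$ in Eq.~\eqref{eq:estimatorV} and $v^*$ carry the \emph{same} classical prefactors $\rho_T^{-k}\rho_E^{-1}b_k(\eta)$ and differ only in that $V$ uses the estimators $Y_k$ where $v^*$ uses the exact values $y_k$, so
\[
V-v^* \;=\; \sum_{k}\rho_T^{-k}\,\rho_E^{-1}\,b_k(\eta)\,(Y_k-y_k).
\]
I would apply the triangle inequality, introduce the ``good'' events $A_k:=\{\rho_T^{-k}\rho_E^{-1}\abs{b_k(\eta)}\,\abs{Y_k-y_k}\le w_k\epsilon\}$, and observe that on the intersection $\bigcap_k A_k$ one has $\abs{V-v^*}\le\sum_k w_k\epsilon=\epsilon$ because $\sum_k w_k=1$. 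Since $\mathbb P(A_k)\ge\alpha_k$ by hypothesis, a union bound over the complements gives $\mathbb P\!\left(\bigcap_k A_k\right)\ge 1-\sum_k(1-\alpha_k)=1-(1-\beta)=\beta$, which is exactly the claim; plugging in $w_k=K^{-1}$ and $\alpha_k=\tfrac{K-1+\beta}{K}$ confirms that the quoted concrete choice satisfies $\sum_k w_k=1$ and $\sum_k(1-\alpha_k)=1-\beta$.

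For the relative-error version I would simply rerun the same argument with $\epsilon$ replaced by $\epsilon\abs{v}$ (assuming $v\ne 0$, as any relative bound presupposes): the per-term hypothesis $\rho_T^{-k}\rho_E^{-1}\abs{b_k(\eta)}\,\abs{Y_k-y_k}\le w_k\epsilon\abs{v}$ rearranges to $\abs{Y_k-y_k}\le(\abs{v}\,\rho_T^{k}\,\rho_E)\,w_k\,\abs{b_k(\eta)}^{-1}\epsilon=r_k w_k\abs{b_k(\eta)}^{-1}\epsilon$, with $r_k$ precisely as in Eq.~\eqref{eq:defn-rk}, and nothing else in the argument changes.

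The last assertion I would obtain by computing the ratio directly: the factors $\abs{v}$ and $\rho_E$ are common to all $r_k$ and cancel, leaving $r_K/r_k=\rho_T^{\,K-k}$ for every $k\le K$. It therefore suffices to show $\rho_T\to 0$ as $N\to\infty$, and this is immediate from the paper's standing scaling hypothesis: $\rho_T^{-2}=\sum_{j=0}^{N-1}(T'_j-\eta)^2$ is a non-decreasing sum of nonnegative terms that grows unboundedly --- indeed $\rho_T^{-2}=\Theta(N)$ by the law of large numbers when the $T'_j$ are i.i.d.\ draws of a random variable with $\mathbb E[(T'-\eta)^2]>0$ (cf.\ Remark~\ref{rem:error-scaling} and Example~\ref{ex:error-scaling}) --- hence $\rho_T\to 0$ and $\rho_T^{\,K-k}\to 0$ for every $k<K$, i.e.\ $r_K=o(r_k)$. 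Thus $r_K$ is asymptotically the \emph{smallest} of the $r_k$, and through the per-term requirement $\abs{Y_k-y_k}\le r_k w_k\abs{b_k(\eta)}^{-1}\epsilon$ it imposes the tightest precision and hence governs the overall sampling cost --- this is the sense in which it ``dominates''.

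The hard part, such as it is, will not be a technical one: the proof is a union bound, the triangle inequality, and the law of large numbers. The real care is in the exposition --- keeping the index set for $k$ unambiguous (in particular whether the $k=0$ monomial is estimated or absorbed as a known prefactor, which only affects the bookkeeping of the $w_k,\alpha_k$ vectors), and making the direction of ``dominates'' explicit, since $r_k\propto\rho_T^{k}$ shrinks \emph{faster} for larger $k$, so $k=K$ is the binding term rather than the slack one. The one genuinely necessary hypothesis to flag is the $\Theta(\sqrt N)$ growth of the input norm (equivalently $\rho_T\to 0$); without it --- e.g.\ if every additional data point equalled $\eta$ --- the domination claim would fail, so the proof should invoke the scaling assumption of Subsection~\ref{subsec:assemble-k} explicitly.
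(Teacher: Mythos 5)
Your proof is correct and follows essentially the same route as the paper's: the triangle inequality plus a union bound over the per-monomial ``good'' events gives the absolute bound, and the relative bound is the same argument with $\epsilon$ replaced by $\epsilon\abs{v}$. The only divergence is the final domination claim, where the paper argues from $\rho_T \leq 1$ alone (so $r_K \leq r_k$ whenever some $r_k \to 0$), whereas you invoke the $\Theta(\sqrt{N})$ norm-scaling to obtain the strict $r_K = o(r_k)$ --- both are valid, and your explicit flagging of that scaling hypothesis (and of the $k=0,\dots,K$ versus $[0,1]^K$ indexing quirk) is a reasonable refinement rather than a gap.
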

\begin{proof}
Consider that
\begin{equation*}
\begin{split}
    \mathbb{P}\left(\abs{V-\sum_{k=0}^{K} \rho_T^{-k} \rho_E^{-1} b_k(\eta) \; y_k} \leq \epsilon \right)
    & \geq \mathbb{P}\left( \sum_k \rho_T^{-k} \rho_E^{-1} \abs{b_k(\eta)} \ \abs{Y_k-y_k} \leq \epsilon \right) \\
    & \geq \mathbb{P}\left(\rho_T^{-k} \rho_E^{-1} \abs{b_k(\eta)} \ \abs{Y_k-y_k} \leq w_k \epsilon \ \text{for all } k \right) \\
    & = 1- \mathbb{P}\left(\rho_T^{-k} \rho_E^{-1} \abs{b_k(\eta)} \ \abs{Y_k-y_k} \geq w_k \epsilon \ \text{for some } k \right) \\
    & \geq 1- \sum_{k=0}^K \mathbb{P}\left(\rho_T^{-k} \rho_E^{-1} \abs{b_k(\eta)} \ \abs{Y_k-y_k} \geq w_k \epsilon \right) \\
    & \geq 1- \sum_{k=0}^K (1-\alpha_k) = \beta.
\end{split}
\end{equation*}
The second claim is an application of the first one. As for the third,
if $r_k^{-1} \to 0$ for some $k$ when $N\to \infty$, then $\rho_T^k \rho_E \to 0$. In such case, since $\rho_T \leq 1$, $r_K^{-1}$ goes to 0 at least as fast as $r_k^{-1}$.
\end{proof}

\begin{remark}
    $\alpha_k = \frac{K-1+\beta}{K}$ is very close to $1$ if $K$ grows, implying a very high sampling complexity from Eq.~\eqref{eq:errorbound-qhp+swap:claim3} or~\eqref{eq:errorbound-qhp+direct:claim3}. Therefore the technique is effective only if $K$ is low.
    Additionally, let us point out that one may leverage the knowledge of $b_k(\eta)$ to refine the definition of $\alpha_k$ and $w_k$.
\end{remark}

It is clear from the previous proposition that the scaling of the error when $N$ grows is bound to that of the norms $\rho_E^{-1}$ and $\rho_T^{-1}$, as well as to the powers $k$.

\begin{remark}[Norm scaling] \label{rem:norm-scaling}
    In general, it is reasonable to assume that the norms $\rho_E^{-1}, \rho_T^{-1}$ scale as $\sqrt{N}$: indeed if $T'_j$ are sampled from a same r.v. $T'$, $\norm{T'_j-\eta} / \sqrt{N}$ tends to the finite quantity $\sqrt{\mathbb{E}[(T'-\eta)^2]}$ if the distribution is $L^2$, and similar for $E'$.
\end{remark}

\begin{remark}\label{rem:error-scaling}
    Combining the linear scaling of $v$ (Assumption~\ref{ass:v-scaling} and Remark~\ref{rem:v-scaling}) with the square-root scaling of norms (Assumption~\ref{ass:norm-scaling} and Remark~\ref{rem:norm-scaling}), one gets that $r_K$ scales as $N^{K/2+1/2-1} = N^{K/2-1/2}$.
\end{remark}

Fig.~\ref{fig:error-scaling-k} confirms the previous remark. Additionally, it shows the effect of $b_k$ in agreement with Prop.~\ref{prop:assemble-k}.
\begin{figure}[p]
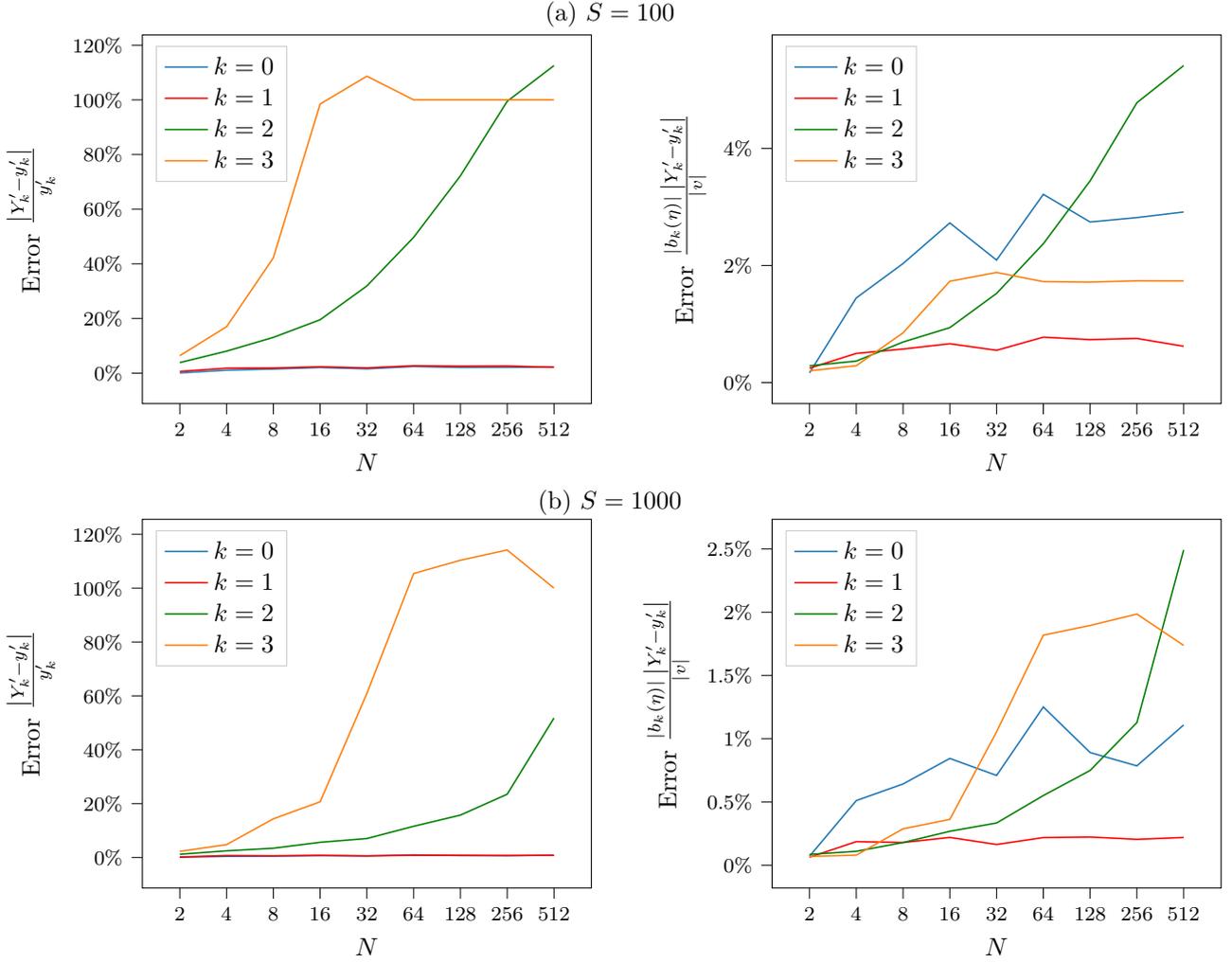

    \centering \small
    \include{Figures/error-scaling}
    \caption{The error of the estimator $Y'_k$, relatively to its own target value $y'_k$ (left) and the error of the power contribution $b_k(\eta) Y'_k$, relatively to the global target $v$ (right), averaged over $S$ samples, for the ancilla-free algorithm without mid-resets. Each point in the plot is the average of 50 independent runs on the qasm simulator. Since the polynomial %
    has a very low coefficient $b_3$, the error for $k=3$ provides a modest contribution to the overall result for small problem sizes.}
    \label{fig:error-scaling-k}
\end{figure}

\begin{example}\label{ex:error-scaling}
Consider a first problem. We are given two four-dimensional inputs $[T'_j]_0^3 = [x_j]_0^3$ and $[E'_j]_0^3 = [y_j]_0^3$, which we assume to be positive and normalized. For simplicity, take the volume function to be $f(x) = x^K$. Then the quantum algorithm is able to estimate $\sum_j (T'_j)^K E'_j = \sum_j x_j^K y_j$.

Now consider a second problem. This time, we are given as inputs the same values as before, but twice: so we have two eight-dimensional vectors $[T'_j]_0^7 = [x_0, ..., x_3, x_0, ..., x_3]$ and $[E'_j]_0^7 = [y_0, ..., y_3, y_0, ..., y_3]$. Obviously this time $\sum_j (T'_j)^K E'_j = 2 \sum_j x_j^K y_j$, and therefore we accept an error that is the double of the one that we would accept in the previous problem. Now, we apply the quantum algorithm: we encode $T_j = 2^{-1/2} T'_j$ and $E_j = 2^{-1/2} E'_j$, obtaining as a result $\sum 2^{-K/2} (T'_j)^K 2^{-1/2} E'_j = 2^{-K/2-1/2} \sum (T'_j)^K E'_j$, which needs to be rescaled by a factor $2^{K/2+1/2}$ to obtain the final result. Unfortunately though this rescaling implies an absolute error propagation that is not 2, but $2^{K/2+1/2}$.
\end{example}

Coherently with Remark~\ref{rem:error-scaling}, the previous examples shows that the relative error scales as $r_k = O(N^{k/2-1/2})$. If we additionally consider that the sampling complexity of the method described in this Appendix scales as $O(\epsilon^2)$, and that we need to add the circuit depth scaling on top to calculate the quantum time, we conclude that we can improve on the classical case $O(N)$ only for $K = 1$. In the next Appendix, we introduce QAE to partially overcome such limitation.

\section{Applying Quantum Amplitude Estimation techniques in amplitude encoding}\label{appendix:qae-ampl}
To outperform known classical results, we exploit the Quantum Amplitude Estimation technique~\cite{brassard_quantum_2002} and its variants, such as Faster QAE~\cite{nakaji_faster_2020}, Iterative QAE (IQAE)~\cite{grinko_iterative_2021}, Chebyshev QAE (ChebQAE)~\cite{rall_amplitude_2022} and Dynamic QAE~\cite{ghosh_energy_2024}.

Let us recall a result for the application of QAE to the estimation of a function expectation, due to Montanaro~\cite{montanaro_quantum_2015}. Suppose to be given a r.v. providing $x$ with probability $\abs{w_x}^2$, with $x \in \{0, \dots, 2^n-1\}$. Let $f$ be a real function defined on the same integer domain. Suppose to have an $n$-qubit loading unitary $\mathcal{A}$ such that $\mathcal{A} \ket{0} = \sum_x w_x \ket{x}$, and an $(n+1)$-qubit unitary $\mathcal{R}$ such that $\mathcal{R}\ket{x} \ket{0} = \ket{x} (\sqrt{1-f(x)}\ket{0} + \sqrt{f(x)}\ket{1})$. The objective is to estimate $\mathbb{E}[f(\mathcal{A})] = \sum_x f(x) \abs{w_x}^2$.

Define $\mathcal{F}:=\mathcal{R} (\mathcal{A} \otimes I)$ and $\ket{\chi} := \mathcal{F} \ket{0}$. Let $\mathcal{Z} := I - 2 \ketbra{0}{0}$ and $\mathcal{U} := I - 2 \ketbra{\chi}{\chi} = \mathcal{F} \mathcal{Z} \mathcal{F}^\dag$. The following holds:

\begin{thm}[QAE scaling for the estimation of the mean of a bounded function]\label{thm:qae-bounded-values}
    Let $f$ and $\mathcal{A}$ as defined above, such that $f(\mathcal{A})$ is valued in $[0,1]$. Let the desired accuracy be $\epsilon$. There exists a quantum algorithm, called QAE, that uses $O (1)$ copies of $\ket{\chi}$ and uses $\mathcal{U}$ for a number of times $O (1/\epsilon)$, and estimates $z := \mathbb{E}[f(\mathcal{A})]$ up to an additive error $\epsilon$ with success probability at least $8/\pi^2 > 0.81$. It suffices to sample from the quantum circuit $S=O(\lg 1/(1-\alpha))$ times and take the median $Z_S$ to obtain an estimate such that $\mathbb{P}(\abs{Z_S-z} \leq \epsilon) \geq \alpha$.
\end{thm}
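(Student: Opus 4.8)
The plan is to reduce the theorem to the amplitude estimation algorithm of Brassard--H\o yer--Mosca--Tapp~\cite{brassard_quantum_2002} applied to the operator $\mathcal{F}$, and then to raise the confidence level from the built-in $8/\pi^2$ to an arbitrary $\alpha$ by a textbook median-of-repetitions argument. First I would unwind the definitions so that $z$ appears as an amplitude. Since $\mathcal{A}\ket{0}=\sum_x w_x\ket{x}$ and $\mathcal{R}\ket{x}\ket{0}=\ket{x}(\sqrt{1-f(x)}\ket{0}+\sqrt{f(x)}\ket{1})$, the state $\ket{\chi}=\mathcal{F}\ket{0}$ satisfies
$$\ket{\chi}=\sum_x w_x\,\ket{x}\bigl(\sqrt{1-f(x)}\,\ket{0}+\sqrt{f(x)}\,\ket{1}\bigr),$$
so, writing $P:=I\otimes\ketbra{1}{1}$ for the projector onto ``last qubit $=\ket{1}$'', the probability of landing in the range of $P$ is exactly $\sum_x\abs{w_x}^2 f(x)=\mathbb{E}[f(\mathcal{A})]=:z$. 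Putting $z=\sin^2\theta$ with $\theta\in[0,\pi/2]$ and $\ket{\chi}=\cos\theta\,\ket{\chi_{\mathrm b}}+\sin\theta\,\ket{\chi_{\mathrm g}}$ for the orthonormal states $\ket{\chi_{\mathrm g}}:=P\ket{\chi}/\sin\theta$ and $\ket{\chi_{\mathrm b}}:=(I-P)\ket{\chi}/\cos\theta$, the Grover iterate $\mathcal{G}$ — the product, up to an overall sign, of $\mathcal{U}=I-2\ketbra{\chi}{\chi}=\mathcal{F}\mathcal{Z}\mathcal{F}^{\dagger}$ with the local reflection $I-2P$ — acts as a rotation by $2\theta$ in the real plane $\mathrm{span}\{\ket{\chi_{\mathrm b}},\ket{\chi_{\mathrm g}}\}$.

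Next I would invoke amplitude estimation on $\mathcal{G}$: phase estimation with $M$ controlled applications of $\mathcal{G}$ returns $\tilde\theta$, hence $\tilde z:=\sin^2\tilde\theta$, with
$$\abs{\tilde z-z}\le\frac{2\pi\sqrt{z(1-z)}}{M}+\frac{\pi^2}{M^2}$$
and success probability at least $8/\pi^2>0.81$, which is exactly the guarantee proved in~\cite{brassard_quantum_2002}. The hypothesis $f(\mathcal{A})\in[0,1]$ enters here in an essential way: it forces $z\in[0,1]$, hence $z(1-z)\le\tfrac14$ \emph{uniformly in the a priori unknown $z$}, so one may fix a number of Grover iterations $M=O(1/\epsilon)$ in advance — e.g.\ $M=\lceil\pi(1+\pi)/\epsilon\rceil$ for $\epsilon\le1$ — and still guarantee $\abs{\tilde z-z}\le\epsilon$. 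Preparing $\ket{\chi}$ costs $O(1)$ uses of $\mathcal{F}$ and each $\mathcal{G}$ costs one application of $\mathcal{U}$, so the single-shot procedure uses $O(1)$ copies of $\ket{\chi}$ and applies $\mathcal{U}$ a number of times $O(1/\epsilon)$, as claimed, returning $\tilde z$ with $\abs{\tilde z-z}\le\epsilon$ with probability $\ge 8/\pi^2$.

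To reach confidence $\alpha$, I would then amplify: run the single-shot estimator $S$ times independently to get $\tilde z^{(1)},\dots,\tilde z^{(S)}$ and let $Z_S$ be their median. If $\abs{Z_S-z}>\epsilon$, then strictly more than $S/2$ of the runs fell outside $(z-\epsilon,z+\epsilon)$; since each run does so with probability $q\le 1-8/\pi^2<\tfrac12$, Hoeffding's inequality yields
$$\mathbb{P}\bigl(\abs{Z_S-z}>\epsilon\bigr)\le\mathbb{P}\bigl(\mathrm{Bin}(S,q)\ge S/2\bigr)\le\exp\!\bigl(-2S(\tfrac12-q)^2\bigr),$$
so $\mathbb{P}(\abs{Z_S-z}\le\epsilon)\ge\alpha$ as soon as $S\ge\bigl(2(\tfrac12-q)^2\bigr)^{-1}\ln\tfrac{1}{1-\alpha}=O\bigl(\lg\tfrac{1}{1-\alpha}\bigr)$. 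Since $z=\mathbb{E}[f(\mathcal{A})]$, this $Z_S$ is the estimator asserted by the theorem.

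The main obstacle is not any one computation but checking that the abstract hypotheses of the cited amplitude-estimation theorem are genuinely met in our setting: that the two reflections composing $\mathcal{G}$ are realizable from the data given (immediate, since $\mathcal{U}=\mathcal{F}\mathcal{Z}\mathcal{F}^{\dagger}$ and $I-2P$ is a single-qubit gate on the last wire), and — the real subtlety — that the accuracy bound is \emph{uniform in the unknown $z$}, which is precisely what the boundedness assumption $f(\mathcal{A})\in[0,1]$ secures and what allows $M$, and hence the entire circuit, to be chosen before the answer is known. Everything remaining is bookkeeping of the constants absorbed into the $O(\cdot)$ notation and of the median-amplification constant $(\tfrac12-q)^{-2}$.
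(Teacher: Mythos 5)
Your proposal is correct and follows essentially the same route as the paper, which proves this theorem simply by citing Montanaro's Theorem~2.3 and Lemma~2.1 (together with Rebentrost et al.): your argument is precisely the content of those two results, namely Brassard--H\o yer--Mosca--Tapp amplitude estimation applied to $\mathcal{F}$ with the uniform bound $z(1-z)\le\tfrac14$ fixing $M=O(1/\epsilon)$ in advance, followed by the median powering lemma to boost the confidence to $\alpha$ with $S=O(\lg\tfrac{1}{1-\alpha})$ repetitions. The only difference is that you spell out the details the paper delegates to the citations.
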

\begin{proof}
    See \cite[Thm 2.3 and Lemma 2.1]{montanaro_quantum_2015}. Also refer to \cite{rebentrost_quantum_2018}.
\end{proof}
The idea of Montanaro's approach is to connect the desired expectation value to an eigenfrequency of an oscillating quantum system and then use the phase estimation algorithm to obtain the estimation up to a desired accuracy \cite{rebentrost_quantum_2018}. The desired expectation $z$ is linked to the corresponding phase $\theta$ via
\begin{equation}\label{eq:z-theta}
    z = \frac{1}{2} \left(1- \cos \frac{\theta}{2} \right),
\end{equation}
and similarly the estimator for $z$ is defined by
\begin{equation}
    Z = \frac{1}{2} \left(1- \cos \frac{\hat \theta}{2} \right).
\end{equation}
The error in $z$ is then linearly controlled by the error in $\theta$ when $\hat \theta \to \theta$ by
\begin{equation}\label{eq:hatz-error-bound}
    \abs{Z - z} = \frac{1}{2} \sin \frac{\theta}{2} \abs{\hat \theta - \theta} + o \left( \abs{\hat \theta - \theta} \right) = O\left(\abs{\hat \theta - \theta} \right),
\end{equation}
through a Taylor expansion of Eq.~\eqref{eq:z-theta}, as shown for instance in Ref.~\cite[Appendix F]{rebentrost_quantum_2018}.

Now, let us apply the Theorem to our case. Specifically, we start discussing QAE where the unitary is taken from the ancilla-free method without mid-measurements, as depicted in Fig.~\ref{fig:qhp+direct}(c). Set $\mathcal{A}$ to be the full unitary of the ancilla-free method, that loads temperatures, computes QHPs without mid-measurements, and applies the inverse of the price loading. By Eq.~\eqref{eq:U1_inverse}, we get $w_0 = \sum_j T_j^k E_j$, while $w_x$ is garbage for $x \neq 0$. Therefore it is sufficient to define 
\begin{equation}
f(x) := \begin{cases}
          1 \quad &\text{if } x=0, \\
          0 \quad &\text{otherwise,}\\
     \end{cases}
\end{equation}
and the algorithm will estimate the desired inner product. Implementing $f$ through a quantum circuit $\mathcal{R}$ is trivial, as it is simply a multi-controlled NOT gate, testing all qubits in all registers to be 0. Now, $z$ is the squared inner product, so that we can use $Y_S := \sqrt{Z_S}$ to estimate the inner product $y := \sqrt{z}$. It turns out that the error bounds given for $z$ in Prop.~\ref{thm:qae-bounded-values} are also valid for $y=\sqrt{z}$, as stated by the following proposition.

\begin{prop}[Oracle complexity for QHP + ancilla-free method + QAE]\label{prop:errorbound-qae-qhp+direct}
    Let $f$ and $\mathcal{A}$ be those specified right above. By applying QAE, it is possible to estimate $y_k :=  \sum_j T_j^k E_j$ up to an additive error $\epsilon$ with success probability at least $8/\pi^2 > 0.81$, using $O(1)$ copies of $\ket{\chi}$ and using $\mathcal{U}$ for a number of times $O (1/\epsilon)$. It suffices to sample the circuit output $Y$ for $S=O(\lg 1/(1-\alpha))$ times and take the median $Y_S$ of the samples, to obtain an estimate such that $\mathbb{P}(\abs{Y_S-y_k} \leq \epsilon) \geq \alpha$.
\end{prop}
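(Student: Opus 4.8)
The plan is to invoke Theorem~\ref{thm:qae-bounded-values} with the unitaries $\mathcal{A}$, $\mathcal{R}$ and the indicator function $f$ constructed above, and then translate the additive error on the estimated amplitude into an additive error on $y_k$. First I would check the hypotheses of the Theorem: $f$ takes values in $\{0,1\}\subseteq[0,1]$, and $\mathcal{A}$ is a genuine loading unitary (temperatures in $k$ registers, the QHP \textsc{cnot}s without mid-measurements, then $U_E^\dagger$ on the first register). By the amplitude computation already performed in the proof of Prop.~\ref{prop:errorbound-qhp+direct}, the amplitude of the all-zero string produced by $\mathcal{A}$ is $w_0=y_k$, so that $z:=\mathbb{E}[f(\mathcal{A})]=\abs{w_0}^2=y_k^2$; moreover $\abs{w_0}^2\le\norm{T^k}^2\norm{E}^2=a_k^{-2}\le 1$, so $z\in[0,1]$ and the Theorem applies. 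Its proof produces, via phase estimation on $\mathcal{U}$, an estimate $\hat\theta$ of the eigenphase $\theta$ attached to $z$ through Eq.~\eqref{eq:z-theta}, with $\abs{\hat\theta-\theta}=O(\epsilon)$, using $O(1)$ copies of $\ket{\chi}$ and $O(1/\epsilon)$ applications of $\mathcal{U}$, with success probability at least $8/\pi^2$ (this is the content behind Eq.~\eqref{eq:hatz-error-bound}).

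The crucial step is to verify that $Y:=\sqrt{Z}$ approximates $y_k=\sqrt{z}$ with the \emph{same} additive accuracy, avoiding the $z^{-1/2}$ blow-up one would naively incur by differentiating the square root near the origin. By the half-angle identity applied to Eq.~\eqref{eq:z-theta}, one has $Z=\tfrac12\bigl(1-\cos(\hat\theta/2)\bigr)=\sin^2(\hat\theta/4)$ and likewise $z=\sin^2(\theta/4)$, so $Y=\abs{\sin(\hat\theta/4)}$ and $y_k=\abs{\sin(\theta/4)}$. Since $t\mapsto\abs{\sin(t/4)}$ is globally $\tfrac14$-Lipschitz, $\abs{Y-y_k}=\bigl|\,\abs{\sin(\hat\theta/4)}-\abs{\sin(\theta/4)}\,\bigr|\le\abs{\sin(\hat\theta/4)-\sin(\theta/4)}\le\tfrac14\abs{\hat\theta-\theta}=O(\epsilon)$, uniformly in $y_k$. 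Rescaling the target accuracy of the underlying QAE by the absolute constant hidden in this $O$ then gives $\abs{Y-y_k}\le\epsilon$ with probability $\ge 8/\pi^2$, still at oracle cost $O(1/\epsilon)$ in $\mathcal{U}$ and $O(1)$ in copies of $\ket{\chi}$.

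Finally I would boost the fixed success probability to an arbitrary confidence $\alpha$ by the standard powering argument: run $S$ independent copies of the estimator, each landing in $(y_k-\epsilon,\,y_k+\epsilon)$ with probability $>0.81$, and output the median $Y_S$; a Chernoff bound on the number of ``good'' copies shows $S=O(\lg 1/(1-\alpha))$ suffices for $\mathbb{P}(\abs{Y_S-y_k}\le\epsilon)\ge\alpha$, exactly as in Theorem~\ref{thm:qae-bounded-values} and \cite[Lemma 2.1]{montanaro_quantum_2015}. The main obstacle is conceptual rather than computational: one must pass from the amplitude $z$ to $y_k=\sqrt z$ through the phase $\theta$, where the map is Lipschitz, rather than through $z$ itself, where $\sqrt{\cdot}$ has unbounded derivative at $0$; this is what keeps the error bound independent of $y_k$ and, in particular, prevents it from degrading as $y_k\to 0$ (in contrast with the swap-test sampling complexity of Prop.~\ref{prop:errorbound-swap}).
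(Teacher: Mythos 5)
Your proposal is correct and follows essentially the same route as the paper: invoke Theorem~\ref{thm:qae-bounded-values} with the indicator $f$ and the unitary $\mathcal{A}$ (so $z=y_k^2$), and then control $\abs{Y-y_k}$ by $O(\abs{\hat\theta-\theta})$ uniformly in $\theta$ by passing through the phase rather than through $z$ itself. Your global $\tfrac14$-Lipschitz bound on $t\mapsto\abs{\sin(t/4)}$ is a slightly sharper, non-asymptotic version of the paper's Taylor-expansion argument for $y=\sqrt{z}$, but the key idea --- that the square root's unbounded derivative at $0$ is harmless because the map $\theta\mapsto y$ stays Lipschitz --- is identical.
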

\begin{proof}
    Applying the Taylor expansion to $y = \sqrt{z}$, as done in Eq.~\eqref{eq:hatz-error-bound}, one obtains again
    \begin{equation}
        \abs{Y - y} = O\left(\abs{\hat \theta - \theta} \right)
    \end{equation}
    for $\hat \theta \to \theta$, uniformly in $\theta$. The rest of the proof of Thm.~\ref{thm:qae-bounded-values} flows alike.
\end{proof}

\begin{prop}[Oracle complexity for QHP + ancilla-free method + IQAE]\label{prop:errorbound-iqae-qhp+direct}
    The results of Prop.~\ref{prop:errorbound-qae-qhp+direct} are valid also if the Iterative QAE is applied instead of QAE.
\end{prop}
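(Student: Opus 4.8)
The plan is to reduce the claim to the guarantee already established for canonical QAE in Prop.~\ref{prop:errorbound-qae-qhp+direct}, exploiting that IQAE~\cite{grinko_iterative_2021} shares the same substantial query scaling. Concretely, IQAE run with target precision $\epsilon$ and confidence $\alpha$ maintains and progressively shrinks a confidence interval for the Grover phase $\theta$ (equivalently, for the amplitude angle), terminating with an interval of width $O(\epsilon)$ while invoking the Grover operator $\mathcal{U}$ only $O(\epsilon^{-1}\lg(1/\alpha))$ times, with polylogarithmic classical overhead. Thus IQAE outputs an estimator $\hat\theta$ with $\mathbb{P}(\abs{\hat\theta-\theta}\le O(\epsilon))\ge\alpha$ at the same asymptotic cost as QAE; since IQAE already internalizes the confidence amplification, no separate median-of-$S$ step is needed to reach confidence $\alpha$.

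First I would recall, exactly as in the proof of Prop.~\ref{prop:errorbound-qae-qhp+direct}, that the target quantity is $z=y_k^2$ and that by Eq.~\eqref{eq:z-theta} one has $z=\tfrac12\big(1-\cos\tfrac{\theta}{2}\big)$, hence $y_k=\sqrt{z}=\abs{\sin\tfrac{\theta}{4}}$. The map $\theta\mapsto\abs{\sin(\theta/4)}$ is globally Lipschitz with constant $1/4$, so setting $Y:=\abs{\sin(\hat\theta/4)}$ (equivalently, taking the positive square root of IQAE's amplitude estimate, consistently with Assumption~\ref{ass:innerprod-positive}) gives $\abs{Y-y_k}\le\tfrac14\abs{\hat\theta-\theta}$; choosing the internal angle target as $4\epsilon$ then yields $\mathbb{P}(\abs{Y-y_k}\le\epsilon)\ge\alpha$ with the query count unchanged up to constants. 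This is precisely the Taylor-expansion argument of Eq.~\eqref{eq:hatz-error-bound}, now applied to the angle estimate delivered by IQAE rather than by QAE, and the rest of the proof of Thm.~\ref{thm:qae-bounded-values} transfers verbatim.

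The delicate point — the only place where one must be careful in quoting~\cite{grinko_iterative_2021} — is that IQAE is usually stated as returning a confidence interval on the \emph{probability} $z$ rather than explicitly on $\theta$: when $z$ is close to $0$, a wide $\theta$-interval can already certify a narrow $z$-interval, so reading off a small $\theta$-interval (which is what we need, since $\sqrt{z}$ is Lipschitz in $\theta$ but not in $z$ near the origin) requires that IQAE not stop early in that regime. I expect this to be the main obstacle; it is resolved either by running IQAE until the \emph{phase} interval, not merely the probability interval, has width $O(\epsilon)$ — which still costs only $O(\epsilon^{-1}\lg(1/\alpha))$ Grover calls — or by invoking Assumption~\ref{ass:innerprod-positive} together with the scaled thresholds $\epsilon_k$ to stay bounded away from $y_k=0$. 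With either fix in place nothing beyond Prop.~\ref{prop:errorbound-qae-qhp+direct} is needed, and the analogous statement for any other QAE variant sharing the same query scaling (Faster QAE, ChebQAE, Dynamic QAE) follows identically.
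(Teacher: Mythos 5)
Your proof takes essentially the same route as the paper: both reduce the claim to the observation that $y=\sqrt{z}=\abs{\sin(\theta/4)}$ is uniformly Lipschitz in the phase $\theta$, so the $O\left(\abs{\hat\theta-\theta}\right)$ control of Eq.~\eqref{eq:hatz-error-bound} transfers to $y$, and IQAE's guarantee is likewise grounded in the phase estimate of \cite[Algorithm 1 and Appendix B]{grinko_iterative_2021}. The paper's proof is a one-line citation; your added caveat that IQAE's termination criterion is stated on the probability interval rather than the phase interval (which matters precisely because $\sqrt{z}$ is not Lipschitz in $z$ near the origin) is a genuine subtlety the paper glosses over, and your proposed fixes are the right ones.
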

\begin{proof}
    The scaling of IQAE is grounded on the estimate Eq.~\eqref{eq:hatz-error-bound} too, refer to \cite[Algorithm 1 and Appendix B]{grinko_iterative_2021}. Therefore, the same argument of Prop.~\ref{prop:errorbound-qae-qhp+direct} can be adopted.
\end{proof}

To perform a comparison between the classical and the quantum case, the cost of a single query must be considered. The cost of $\mathcal{U}$ is obviously derived by the cost of $\mathcal{A}$ and of $\mathcal{R}$. Now, the cost of $\mathcal{A}$ was already calculated in Prop.~\ref{prop:width-depth-k}. As far as $\mathcal{R}$ is concerned, Ref.~\cite{barenco1995elementary} shows that an $n$-controlled NOT can either be implemented with $1$ ancilla qubit and $O(2^n)$ gates and depth \cite[Lemma~7.1]{barenco1995elementary}, or with $n-1$ ancillas with $O(n)$ gates and depth \cite[Lemma~7.2]{barenco1995elementary}. Ref.~\cite{maslov_advantages_2016} further improved the ancillas necessary to achieve a linear depth, to $\ceil{\frac{n-3}{2}}$ for $n \geq 5$.

\begin{prop}\label{prop:depth-width-oracle}
    The implementation of the oracle $\mathcal{U}$ required by Prop.~\ref{prop:errorbound-qae-qhp+direct} and~\ref{prop:errorbound-iqae-qhp+direct}, has a width and depth respectively of:
    \begin{equation}\label{eq:width-k-oracle}
        C_{\mathrm{w}}(k) = O(k \lg N),
    \end{equation}
    \begin{equation}\label{eq:depth-k-oracle}
        C_{\mathrm{d}}(k) = 2 C_{\mathrm{d}, \mathrm{load}}(N) + O(k \lg N),
    \end{equation}
where $k$ is the usual monomial degree in the approximating polynomial.
\end{prop}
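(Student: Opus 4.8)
The plan is to read the cost of $\mathcal{U}$ directly off its block structure. Since $\mathcal{U}=\mathcal{F}\mathcal{Z}\mathcal{F}^\dag$ with $\mathcal{F}=\mathcal{R}(\mathcal{A}\otimes I)$, the oracle is a product of the five blocks $\mathcal{R}$, $\mathcal{A}$, $\mathcal{Z}=I-2\ketbra{0}{0}$, $\mathcal{A}^\dag$, $\mathcal{R}^\dag$ (with $\mathcal{R}^\dag=\mathcal{R}$, as $\mathcal{R}$ is a multi-controlled $\mathrm{NOT}$). Hence $C_{\mathrm{w}}(k)$ is the maximum, and $C_{\mathrm{d}}(k)$ the sum, of the corresponding quantities of the blocks, and the whole argument reduces to costing $\mathcal{A}$, $\mathcal{R}$ and $\mathcal{Z}$ separately and then combining.

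For $\mathcal{A}$ I would simply invoke Prop.~\ref{prop:width-depth-k} in the ancilla-free, no-mid-reset regime (i.e.\ $\delta^{(\mathrm{swap})}=0$, $r(k)=k$, $m(k)=2$), which gives it width $k\lg N$ and depth $\le 2C_{\mathrm{d},\mathrm{load}}(N)+k$: the $2C_{\mathrm{d},\mathrm{load}}(N)$ are the $k$ parallel copies of $U_T$ followed by $U_E^\dag$, and the additive $k$ absorbs the QHP layers. The flag oracle $\mathcal{R}$ is an $n$-controlled $\mathrm{NOT}$ on the $n=k\lg N$ output qubits of $\mathcal{A}$ (with negated controls, i.e.\ conjugated by $X$ gates) onto one fresh qubit, implementing $f(x)=1$ exactly when $x=0$; $\mathcal{Z}$ is the analogous multi-controlled reflection on the $n+1$ qubits of $\ket{\chi}$. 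To keep both metrics small one must implement these with the linear-depth multi-controlled-$\mathrm{NOT}$ constructions: by \cite[Lemma~7.2]{barenco1995elementary}, with the ancilla count sharpened to $\lceil(n-3)/2\rceil$ in \cite{maslov_advantages_2016}, each of $\mathcal{R}$, $\mathcal{R}^\dag$, $\mathcal{Z}$ runs in depth $O(n)=O(k\lg N)$ using $O(k\lg N)$ clean ancillas returned to $\ket{0}$ and hence reusable across the three. Adding the pieces: the width is $k\lg N + 1 + O(k\lg N)=O(k\lg N)$, which is Eq.~\eqref{eq:width-k-oracle}; and the depth is $2C_{\mathrm{d},\mathrm{load}}(N)$ from the loading stages plus $O(k\lg N)$ collecting the QHP layers and the three multi-controlled gates, which is Eq.~\eqref{eq:depth-k-oracle}.

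The only step that is not pure bookkeeping is the implementation of $\mathcal{R}$ and $\mathcal{Z}$: the textbook single-ancilla Toffoli cascade for an $n$-controlled $\mathrm{NOT}$ has depth $\Theta(2^{n})=\Theta(N^{k})$, which would wreck the bound, so one must commit to the linear-depth construction and then check that its $\Theta(n)$ borrowed/clean ancillas do not inflate the width past $O(k\lg N)$ — which holds because $n=k\lg N$ and the ancillas, being clean and restored to $\ket{0}$, can be shared between $\mathcal{R}$, $\mathcal{R}^\dag$ and $\mathcal{Z}$ (and with any ancillas internal to $\mathcal{A}$, of which the ancilla-free circuit has none). With that settled, the block-wise max/sum reduction and the substitution of the Prop.~\ref{prop:width-depth-k} estimates are mechanical, and the same oracle $\mathcal{U}$ — and therefore the same bounds — also serves the IQAE variant of Prop.~\ref{prop:errorbound-iqae-qhp+direct}.
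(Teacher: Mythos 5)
Your proof is correct and follows essentially the same route as the paper's: decompose $\mathcal{U}=\mathcal{F}\mathcal{Z}\mathcal{F}^\dag$, cost $\mathcal{A}$ via Prop.~\ref{prop:width-depth-k} (parallel $U_T$ loads, then $U_E^\dag$), and implement $\mathcal{R}$ and $\mathcal{Z}$ as $(k\lg N)$-controlled NOTs using the linear-depth construction of \cite{barenco1995elementary,maslov_advantages_2016} with $O(k\lg N)$ reusable ancillas — you are in fact more explicit than the paper about why the exponential-depth Toffoli cascade must be avoided and why the ancillas do not inflate the width. The only (shared) imprecision is that counting both $\mathcal{A}$ and $\mathcal{A}^\dag$ would naively give $4C_{\mathrm{d},\mathrm{load}}(N)$ rather than the stated $2C_{\mathrm{d},\mathrm{load}}(N)$, but this constant-factor slip is present in the paper's own accounting as well and is immaterial to the asymptotics.
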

\begin{proof}
Recall that $\mathcal{U} = \mathcal{F} \mathcal{Z} \mathcal{F}^\dag$, where $\mathcal{F}:=\mathcal{R} (\mathcal{A} \otimes I)$ and $\mathcal{Z} := I - 2 \ketbra{0}{0}$. $\mathcal{Z}$ can be implemented with two 1-qubit gates, plus a $(k \lg N)$-controlled NOT.

Concerning width, $k$ copies of $\ket{\psi_T}$ need to be loaded in parallel for $\mathcal{A}$, leading to $k \lg N$. The $(k \lg N)$-CNOT requires $O(k \lg N)$ ancillas \cite{maslov_advantages_2016}\footnote{For detailed depth and width constants of $n$-CNOT refer to the cited manuscript.}. Finally, $\mathcal{R}$ is a $(k \lg N)$-CNOT as well.

Moving to depth, the data loading is performed in parallel on the different registers for $\ket{\psi_T}$, as well as CNOTs for QHPs. The data loading of $\ket{\psi_E}$ in performed afterwards. The other operations are dominated by the two $(k \lg N)$-CNOTs, which require $O(k \lg N)$ depth.
\end{proof}

\begin{remark}\label{remark:qae-swap}
    For the sake of completeness, let us highlight that the same QAE techniques can be applied to the swap test as well, with a slightly more complex design, without any additional advantage compared to QAE with the ancilla-free method. In this case, indeed, one needs to estimate two quantities: with a first estimation problem, by defining $f_1=1$ when QHPs are successful, one derives the success rate $a_k^{-2}$. Then a second estimation problem is run, by setting $f_2=1$ when both the QHPs are successful and the swap test ancilla provide $1$. Finally, the two quantities are merged into Eq.~\eqref{eq:inner-from-probabilities}, recalling also Eq.~\eqref{eq:successrate}, to estimate the desired inner product.
\end{remark}

We comment the asymptotic performance of this technique, in comparison with the others, in Appendix~\ref{appendix:complexity}.

\section{Bidirectional Orthogonal Encoding of Data}\label{appendix:bidir}
In this section, we define the novel concept of Bidirectional orthogonal encoding (BOE) of data. Then we derive the same results covered in Appendices~\ref{appendix:inner-prod} and~\ref{appendix:qae-ampl}, but this time in the context of the bidir-orth encoding.

\subsection{Towards the BOE}\label{subsec:bidir}
Let $[D_j]$ be a vector of length $N$, a power of $2$. Suppose an angle tree is available (see Fig.~\ref{fig:binary-tree}). In this setting, multiple encodings are possible: (a) the amplitude encoding, (b) the D\&C encoding, (c) the bidirectional encoding, and (d) the D\&C-orth encoding. Let us describe these encodings shortly, thus explaining why we introduced an additional one, that we call Bidirectional Orthogonal Encoding (BOE).
\begin{figure}
    \centering
        \includegraphics[page=1]{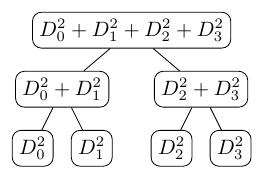}
        \hfill
        \includegraphics[page=2]{Figures/binary-tree.pdf}
    \caption{Two classical binary tree representations of a data set \cite{araujo_configurable_2022}: on the left, the state decomposition representation, and on the right the angle representation. The state decomposition can be built bottom up starting from a classical array, and also applies to unnormalized data sets. The angle representation is specifically suited for quantum data loading, and can be derived travelling the state decomposition tree top-down. Dashed nodes are redundant, since they can inferred from their sibling.}
    \label{fig:binary-tree}
\end{figure}

We are already familiar with the \textit{amplitude encoding}, that produces the state
\begin{equation}\label{eq:psiD-ampl-appendix}
     \ket{\psi_D} := \sum_{j=0}^{N-1} D_j \ket{j}.
\end{equation}
It has the advantage of requiring only $\lg N$ qubits, but unfortunately needs $O(N)$ depth for exact loading.

The \textit{D\&C encoding} is a variant of the analog encoding, recently proposed in Ref.~\cite{araujo2021divide}. We call it divide-and-conquer encoding, or shortly D\&C encoding, after the paper title. In this case, the state produced is
\begin{equation}\label{eq:psiD-dq-appendix}
     \ket{\tilde \psi_D} := \sum_{j=0}^{N-1} D_j \ket{j} \ket{\tilde\phi_j}.
\end{equation}
The idea is to resort to an additional register, that contains auxiliary qubits, entangled with the main register. The advantage of this method is that exact loading can then be performed efficiently, namely in $O\left(\lg^2 N\right)$. The downside is that the required side register is sized $O\left(N\right)$.

This led to the definition of the \textit{bidirectional encoding} \cite{araujo_configurable_2022}, a configurable mixed encoding, that combines the amplitude and the D\&C approaches, and defines a family of encoding techniques parameterized over a so-called split level $s \in \{ 1,...,\lg N\}$, that steers the balance between circuit depth and width. For $s=1$ the encoding coincides with the D\&C, while for $s= \lg N$ the amplitude encoding is retrieved. Finally, for $s=\frac{1}{2} \lg N$, it is possible to achieve a sublinear scaling both in depth and width. The state takes again the form of Eq.~\eqref{eq:psiD-dq-appendix}.

Despite the similarity between Eqs.~\eqref{eq:psiD-ampl-appendix} and~\eqref{eq:psiD-dq-appendix}, and despite the fact that measurement of the primary register provides the same results in both cases, it is essential to remark that D\&C is in fact a different encoding from the amplitude. The algorithms that require the amplitude encoding cannot all be trivially applied to data in the D\&C encoding, and specifically the techniques introduces so far for the calculation of the inner product do not apply in the D\&C encoding. Even more so, they cannot be employed in the bidirectional encoding.

Here the \textit{D\&C-orth encoding} comes to the aid. As the original paper \cite{araujo2021divide} shows, the D\&C encoding can be modified to guarantee that the auxiliary states are orthonormal, i.e. $\braket{\phi_i}{\phi_j} = \delta_{i,j}$, at the expense of an additional side register of small width $\lg N$. The new encoding is relevant for us, since it is compatible with the application of the swap test, that does not provide the same result as in the amplitude encoding (see Prop.~\ref{prop:swap-bidir}), but is still useful for the calculation of the inner product.

Combining these elements, we define the \textit{Bidirectional Orthogonal Encoding} (BOE) in the following way:
\begin{equation}\label{eq:psiD-bidir-appendix}
     \ket{\tilde \psi_D} := \sum_{j=0}^{N-1} D_j \ket{j} \ket{\tilde \phi_j} \ket{j} = \sum_{j=0}^{N-1} D_j \ket{j} \ket{\phi_j}.
\end{equation}
where $\sum_{j=0}^{N-1} D_j \ket{j} \ket{\tilde \phi_j}$ is constructed the bidirectional encoding, and the additional third register obviously guarantees orthonormality of $\left\{ \ket{\phi_j} \right\}_j$.

When $s=1$, the D\&C-orth encoding is retrieved.

\begin{prop}[Circuit depth and width of the BOE]\label{prop:width-depth-bidir}
    Let $[D_j]$ be a vector of length $N$, a power of $2$, and suppose a classical binary tree representation is available. Let $s$ be any integer in $\{1, \dots, \lg N \}$ called split level.
    Then the state in Eq.~\eqref{eq:psiD-bidir-appendix} can be constructed in depth $2^s + \frac{1}{2}\left( \lg^2 N - \lg N - s^2 +s \right) + 1 = O \left( 2^s +\lg^2 N -s^2 \right)$ and width $(s+1) N 2^{-s} -1 + \lg N = O \left( (s+1)N 2^{-s} \right)$.
\end{prop}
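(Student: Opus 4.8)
The plan is to construct the BOE circuit in the two stages suggested by the two equal forms in Eq.~\eqref{eq:psiD-bidir-appendix}: first prepare the bidirectional-encoding state $\sum_j D_j\ket{j}\ket{\tilde\phi_j}$ of Ref.~\cite{araujo_configurable_2022} with split level $s$, and then copy the $\lg N$-qubit index register onto a fresh third register, turning it into $\sum_j D_j\ket{j}\ket{\tilde\phi_j}\ket{j}=\sum_j D_j\ket{j}\ket{\phi_j}$. The cost of the BOE is then the cost of the bidirectional encoding plus the small overhead of the copy step, so the proof reduces to (i) recalling the depth and width of the bidirectional loader and (ii) accounting for the extra register.

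For (i), I would invoke the bidirectional-encoding circuit of \cite{araujo_configurable_2022} (built on the divide-and-conquer loader of \cite{araujo2021divide}) acting on the angle tree of Fig.~\ref{fig:binary-tree}: splitting the tree at level $s$, the sequential (amplitude-encoding-like) portion of the tree contributes depth $2^s$, while the lower levels $\ell=s,\dots,\lg N-1$ are loaded in parallel across the subtrees in divide-and-conquer fashion, level $\ell$ costing depth $\ell$, for a total of $\sum_{\ell=s}^{\lg N-1}\ell=\tfrac12\!\left(\lg^2N-\lg N-s^2+s\right)$. Counting the parallel subtree loaders together with the index register gives width $(s+1)N2^{-s}-1$. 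Hence $\sum_j D_j\ket{j}\ket{\tilde\phi_j}$ is prepared in depth $d_s:=2^s+\tfrac12\!\left(\lg^2N-\lg N-s^2+s\right)$ and width $w_s:=(s+1)N2^{-s}-1$, interpolating between the divide-and-conquer case ($s=1$: depth $O(\lg^2N)$, width $O(N)$) and the amplitude encoding ($s=\lg N$: depth $O(N)$, width $O(\lg N)$); should \cite{araujo_configurable_2022} state these only up to constants, the precise leading constants follow from the level-by-level count just described.

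For (ii), I would realize the map $\ket{j}\ket{\tilde\phi_j}\mapsto\ket{j}\ket{\tilde\phi_j}\ket{j}$ by appending $\lg N$ ancillas in $\ket{0}$ and applying $\lg N$ CNOT gates, the $i$-th controlled by qubit $i$ of the index register and targeting ancilla $i$. These CNOTs have pairwise-disjoint supports, so they form a single depth-$1$ layer, and they leave $\ket{\tilde\phi_j}$ untouched; the third register then makes $\{\ket{\phi_j}\}$ orthonormal, giving exactly Eq.~\eqref{eq:psiD-bidir-appendix}. Adding this layer yields depth $d_s+1=2^s+\tfrac12\!\left(\lg^2N-\lg N-s^2+s\right)+1$ and width $w_s+\lg N=(s+1)N2^{-s}-1+\lg N$, as claimed. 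The $O(\cdot)$ forms follow from elementary bounds uniform in $s\in\{1,\dots,\lg N\}$: since $0\le\tfrac12(\lg^2N-\lg N-s^2+s)\le\tfrac12(\lg^2N-s^2)$ and $1\le2^s$, the depth is $O(2^s+\lg^2N-s^2)$; and since $(s+1)N2^{-s}=(s+1)2^{\lg N-s}\ge\tfrac12\lg N$, the width is $O((s+1)N2^{-s})$.

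The argument is mostly bookkeeping, so no step is genuinely hard; the places that need care are verifying the leading constants of the bidirectional loader (which hinge on the exact circuit of \cite{araujo_configurable_2022}, and otherwise have to be re-derived from the split-tree level count) and checking that the copy step can indeed be performed at the very end with a single CNOT layer, i.e.\ that the bidirectional circuit finishes with a clean $\lg N$-qubit register carrying $\ket{j}$.
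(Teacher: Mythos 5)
Your proposal is correct and follows essentially the same route as the paper: the paper's proof simply cites Ref.~\cite{araujo_configurable_2022} for the depth $2^s + \tfrac{1}{2}\left(\lg^2 N - \lg N - s^2 + s\right)$ and width $(s+1)N2^{-s}-1$ of the bidirectional loader and declares the remaining step (the extra $+1$ depth and $+\lg N$ width from appending the index-copy register) trivial. You supply exactly that missing bookkeeping — the single transversal-CNOT layer onto $\lg N$ fresh ancillas — plus a sketch of the level-by-level count behind the cited constants, so your argument is a more explicit version of the same proof.
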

\begin{proof}
    It is known \cite{araujo_configurable_2022} that $\sum_{j=0}^{N-1} D_j \ket{j} \ket{\tilde \phi_j}$, namely the associated bidirectional encoding, can be constructed in depth $2^s + \frac{1}{2}\left( \lg^2 N - \lg N - s^2 +s \right)$ and width $(s+1) N 2^{-s} -1$. The conclusion is then trivial.
\end{proof}
For $s=\frac{1}{2} \lg N$, both width and depth are sublinear.

\begin{prop}[Swap test in the BOE]\label{prop:swap-bidir}
Let $\left[D^{(0)}_j\right]$ and $\left[D^{(1)}_j\right]$ be two vectors of length $N$, a power of $2$, represented in the BOE with split level $s$. Apply the swap test between the primary register of the two statevectors (see Fig.~\ref{fig:swaptest}). Then
\begin{enumerate}
    \item 
The swap test qubit is measured in the state $\ket{1}$ with probability $\frac{1}{2}+\frac{1}{2} \sum_j \left|D^{(0)}_j\right|^2 \left|D^{(1)}_j \right|^2$.

    \item
Let $\epsilon>0$ and $\alpha \in (0,1)$. Let $X_i$, for $i=1,...,S$, be a r.v. representing the output of the swap-test measurement after the $i$-th shot of circuit. Call $\bar X_S$ the mean r.v. resulting from the $S$ independent shots. Then $Y_S := 1-2 \bar X_S$ is an estimator for $p=\sum_j \left|D^{(0)}_j\right|^2 \left|D^{(1)}_j \right|^2$ and the error is controlled by
\begin{equation}\label{eq:errorbound-clt-bidirswap}
    \mathbb{P}\left( \abs{ Y_S - p} < \epsilon \right) = \alpha,
\end{equation}
once $S$ is chosen as
\begin{equation}\label{eq:errorbound-clt-shots-bidirswap}
    S = \frac{1-p^2}{\epsilon^2} \left[ \Phi^{-1} \!\left(\frac{1+\alpha}{2}\right) \right]^2
\end{equation}
asymptotically when $S \to \infty$, where $\Phi$ is the CDF of the standard normal distribution.
\end{enumerate}
\end{prop}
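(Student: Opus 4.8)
The plan is to split the statement into two independent parts: part~1 is a finite-dimensional linear-algebra computation of the swap-test output probability, and part~2 is a Central Limit Theorem argument for the sampling complexity. Part~2 is in fact \emph{easier} than its amplitude-encoding analogue Prop.~\ref{prop:errorbound-swap}: there the estimator involved a square root, so Lemma~\ref{lemma:clt-sqrt} was needed, whereas here $Y_S=1-2\bar X_S$ is an affine function of the sample mean, and the bare CLT suffices.

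For part~1, I would write the two inputs explicitly as $\ket{\tilde\psi_{D^{(0)}}}=\sum_j D^{(0)}_j\ket{j}\ket{\phi^{(0)}_j}$ and $\ket{\tilde\psi_{D^{(1)}}}=\sum_j D^{(1)}_j\ket{j}\ket{\phi^{(1)}_j}$, living in two disjoint register pairs (primary plus auxiliary), and propagate $\ket{\Psi}:=\ket{\tilde\psi_{D^{(0)}}}\otimes\ket{\tilde\psi_{D^{(1)}}}$ through the circuit of Fig.~\ref{fig:swaptest}, where the controlled operation is a SWAP acting \emph{only} on the two primary registers, the auxiliary registers being untouched spectators. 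Tracking the control qubit exactly as in the standard swap test, the probability of a given test-qubit outcome is $\tfrac12\bigl(1\pm\matrixel{\Psi}{\mathrm{SWAP}_{\mathrm{prim}}}{\Psi}\bigr)$. The heart of the argument is the evaluation
\[
\matrixel{\Psi}{\mathrm{SWAP}_{\mathrm{prim}}}{\Psi}=\sum_{j,k,j',k'}\bar D^{(0)}_{j'}\,\bar D^{(1)}_{k'}\,D^{(0)}_j\,D^{(1)}_k\;\braket{j'}{k}\,\braket{k'}{j}\;\braket{\phi^{(0)}_{j'}}{\phi^{(0)}_j}\,\braket{\phi^{(1)}_{k'}}{\phi^{(1)}_k}.
\]
The two \emph{primary} overlaps impose $j'=k$ and $k'=j$, while the two \emph{auxiliary} overlaps, which equal $\delta_{j'j}$ and $\delta_{k'k}$ by the orthonormality that the BOE is designed to guarantee, force $j'=j$ and $k'=k$; the quadruple sum therefore collapses onto a single index and leaves $\sum_j\abs{D^{(0)}_j}^2\abs{D^{(1)}_j}^2$, which is real and nonnegative, so the $\mathrm{Re}$ can be dropped. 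I would emphasise that it is precisely this orthonormality of the side registers -- the very reason the BOE improves on the plain bidirectional encoding -- that kills the cross terms and yields the Hadamard-product-type quantity rather than an ordinary squared overlap; note also that no mixed overlap $\braket{\phi^{(0)}_i}{\phi^{(1)}_j}$ ever enters, because the two auxiliary registers belong to physically distinct subsystems.

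For part~2, $X_i$ is a Bernoulli variable whose mean I read off from part~1: consistency of the estimator forces $\mathbb{E}[Y_S]=1-2\mu=p$ with $p=\sum_j\abs{D^{(0)}_j}^2\abs{D^{(1)}_j}^2$, hence $\mu=\tfrac12(1-p)$ and $\sigma^2=\mu(1-\mu)=\tfrac14(1-p^2)$. By the CLT, $\bar X_S$ is asymptotically normal with mean $\mu$ and variance $\sigma^2/S$; since $Y_S=1-2\bar X_S$ is affine, $Y_S$ is asymptotically normal with mean $p$ and variance $4\sigma^2/S=(1-p^2)/S$. In the $S\to\infty$ regime, $\mathbb{P}(\abs{Y_S-p}<\epsilon)=\alpha$ is then equivalent to $\epsilon=\sqrt{(1-p^2)/S}\;\Phi^{-1}\!\bigl(\tfrac{1+\alpha}{2}\bigr)$, and solving for $S$ gives exactly Eq.~\eqref{eq:errorbound-clt-shots-bidirswap}.

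The only genuinely delicate point, and the one I would watch most carefully, is the bookkeeping in part~1: being explicit that the SWAP touches the primary registers only, that the two auxiliary registers are independent systems (so only intra-state overlaps appear), and that the BOE orthonormality is invoked at exactly the right place to collapse the quadruple sum. The control-qubit algebra, the reality of the matrix element, and the CLT together with its inversion for $S$ are all routine.
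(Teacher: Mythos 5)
Your proposal is correct and, for the sampling part, coincides exactly with the paper's proof: the paper likewise observes that $Y_S=1-2\bar X_S$ is affine in the Bernoulli mean $\mu=\tfrac12-\tfrac12 p$, $\sigma^2=\tfrac14(1-p^2)$, and applies the bare CLT without Lemma~\ref{lemma:clt-sqrt}. For the first claim the paper simply defers to the D\&C reference, whereas you carry out the computation of $\matrixel{\Psi}{\mathrm{SWAP}_{\mathrm{prim}}}{\Psi}$ explicitly; your collapse of the quadruple sum via the auxiliary-register orthonormality is exactly the right mechanism and makes the argument self-contained. One observation your derivation surfaces: the probability $\tfrac12+\tfrac12\sum_j\abs{D^{(0)}_j}^2\abs{D^{(1)}_j}^2$ is attached to the outcome $\ket{0}$ (consistent with Fig.~\ref{fig:swaptest} and with the mean $\tfrac12-\tfrac12 p$ used in part~2), not to $\ket{1}$ as the statement of claim~1 literally says; your ``fix $\mu$ by consistency of the estimator'' step silently repairs this, but it would be cleaner to read $\mu=\mathbb{P}(X_i=1)=\tfrac12(1-p)$ directly off the circuit rather than to infer it from the desired conclusion.
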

\begin{proof}
The proof of the first claim is the same as that of the swap test for the D\&C encoding, see~\cite{araujo2021divide}.
For the second claim, consider that $X_i$ are i.i.d. Bernoullis with mean $\frac{1}{2}-\frac{1}{2}p$ and variance $\sigma^2 = \mu(1-\mu) = \frac{1}{4}-\frac{1}{4}p^2$. By the Central Limit Theorem, $\frac{\bar X_S-\mu}{\sigma/\sqrt{S}}$ is asymptotically a standard normal. Therefore $\frac{Y_S-p}{2 \sigma/\sqrt{S}} = \frac{Y_S-p}{\sqrt{1-p^2} /\sqrt{S}} $ is asymptotically a standard normal as well.
\end{proof}

\subsection{Data encoding and inner product}\label{subsec:bidir-inner}
The previous proposition motivates us to load the normalized versions of $\sqrt{T'-\eta}$ and $\sqrt{E'}$, under the Assumption~\ref{ass:innerprod-positiveterms} that all terms are positive, so that the estimator provides a normalized version of the desired inner product. More specifically, define

\begin{equation}\label{eq:defn-TandE-bidir}
    \tilde T_j = \tilde \rho_T \sqrt{T'_j - \eta}, \qquad
    \tilde E_j = \tilde \rho_E \sqrt{E'_j},
\end{equation}
where
\begin{equation}
    \label{eq:defn-rho-TandE-bidir}
    \tilde \rho_T^{-2} = \sum_{j=0}^{N-1} (T'_j-\eta), \qquad
    \tilde \rho_E^{-2} = \sum_{j=0}^{N-1} E'_j.
\end{equation}

In other words, we start with two BOE states
\begin{equation}\label{eq:psiTandE-bidir}
     \ket{\tilde\psi_{\tilde T}} := \sum_{j=0}^{N-1} \tilde T_j \ket{j} \ket{\phi^T_j}, \qquad
     \ket{\tilde\psi_{\tilde E}} := \sum_{j=0}^{N-1} \tilde E_j \ket{j} \ket{\phi^E_j}.
\end{equation}

It is trivial to verify that the QHP applied to the primary registers of multiple copies of $\ket{\tilde\psi_{\tilde T}}$ still calculates the power and outputs
\begin{equation}\label{eq:psiTpower-bidir}
     \ket{\tilde\psi_{\tilde T}^{(k)}} := \tilde a_k \sum_{j=0}^{N-1} \tilde T_j^k \ket{j} \ket{\phi^T_j}
\end{equation}
when successful, preserving orthonormality of the side register, where $\tilde a_k$ is the appropriate scale factor
\begin{equation}\label{eq:defn_ak-bidir}
    \tilde a_k:=\left(\sum_j \tilde T_j^{2k} \right)^{-\sfrac{1}{2}}.
\end{equation}
and the success rate of the QHP is $\tilde a_k^{-2}$.
\begin{remark}
This setting can be exploited to estimate
\begin{equation}
    \tilde y_k := \sum_{j=0}^{N-1} \tilde E_j^2 \tilde T_j^{2k} = \tilde\rho_E^{2} \tilde\rho_T^{2k} \sum_{j=0}^{N-1} E'_j (T'_j-\eta)^{k} = \sum_{j=0}^{N-1} y'_k,
\end{equation}
as we demonstrate shortly. Before moving to that, it is worth underlying some differences between the newly introduced $\tilde a_k, \tilde y_k$ and the previous $a_k, y_k$. First of all, $\tilde y_k$ scales quadratically with factors applied to $[\tilde T_j^k]$ or $[\tilde E_j]$, while $y_k$ scales linearly with $[T_j^k]$ and $[E_j]$. We will comment later on this fact, that has a major impact on performance (see Prop.~\ref{prop:assemble-k-bidir}). Secondly, and coherently with the former observation, the bounds $0 \leq a_k y_k \leq 1$ are now replaced by $0 \leq \tilde a_k^2 \tilde y_k \leq 1$. Indeed:
\begin{equation}\label{eq:ykak2bound}
    \tilde y_k = \sum \tilde E_j^2 \tilde T_j^{2k} \leq \norm{E_j^2} \ \norm{T_j^{2k}} = \norm{E_j}_4^2 \ \norm{T_j^k}_4^2 \leq \norm{E_j}^2 \ \norm{T_j^k}^2 = \tilde a_k^{-2}.
\end{equation}
Third, in the context of the BOE, Eq.~\eqref{eq:finalsum} is replaced by:
\begin{equation}\label{eq:finalsum-bidir}
    v \approx v^* = \sum_{k=0}^{K} \; b_k(\eta) \; y'_k = \sum_{k=0}^{K} \tilde\rho_T^{-2k} \; \tilde\rho_E^{-2} \; b_k(\eta) \; \tilde y_k.
\end{equation}
Correspondingly, the estimator for $v^*$ is defined as
\begin{equation}\label{eq:estimatorVtilde}
    \tilde V := \sum_{k=0}^K \rho_T^{-2k} \; \rho_E^{-2} \; b_k(\eta) \; \tilde Y_k
\end{equation}
Even though the estimation method is different, $y'_k$ and therefore $v^*$ are the same as in the amplitude encoding.
\end{remark}

\begin{prop}[Algorithm BOE + swap test]\label{prop:errorbound-bidir+qhp+swap}
Let $k$ be a fixed power order. Implement a circuit that produces the state $\ket{\tilde\psi_{\tilde T}^{(k)}}$ defined in Eq.~\eqref{eq:psiTpower-bidir} through QHPs (with or without mid-measurements), then loads $\ket{\tilde\psi_{\tilde E}}$ and applies the swap test between $\ket{\tilde\psi_{\tilde T}^{(k)}}$ and $\ket{\tilde\psi_{\tilde E}}$, as depicted in Fig.~\ref{fig:bidir+qhp+swap}. Call $X \in \{0,1\}$ the output of the measurement of the control qubit in the swap test, and $\mathbf{Z} \in \{0, ..., N-1\}^{k-1}$ the outputs of all the $k-1$ measurements in the QHPs. Define $X_i \sim X$ and $\mathbf{Z}_i \sim \mathbf{Z}$, for $i=1,...,S$, as the outcomes of $S$ independent samples from the circuit. Let
\begin{equation*}
\tilde Y_S := \frac{2 \# \{i : X_i = 0, \mathbf{Z}_i = \mathbf{0} \} - \# \{i: \mathbf{Z}_i = \mathbf{0}\}}{S}; \qquad Y'_S := \tilde\rho_T^{-2k} \tilde\rho_E^{-2} Y_S.
\end{equation*}
Then
\begin{enumerate}
    \item
$\mathbb{E}[\tilde Y_S] \to \sum_j \tilde E_j^2 \tilde T_j^{2k} =: \tilde y_k$ when $S\to\infty$ and $\mathbb{E}[Y'_S] \to \sum_j E'_j (T'_j-\eta)^k =: y'_k$ when $S\to\infty$;

    \item
the absolute error for $\tilde Y_S$ is controlled by $\mathbb{P}\left( \abs{ \tilde Y_S - \tilde y_k} < \epsilon \right) \leq \alpha$ once $S$ is chosen as
\begin{equation}\label{eq:errorbound-bidir+qhp+swap:claim2}
    S \geq \max \left\{ 16 \tilde y_k^2 (\tilde a_k^{2}-1), \ 4 \frac{1-\tilde a_k^4 \tilde y_k^2}{\tilde a_k^2} \right\} \frac{1}{\epsilon^2} \left[ \Phi^{-1} \!\left(\frac{3+\alpha}{4}\right) \right]^2
\end{equation}
asymptotically when $\epsilon \to 0$, where $\Phi$ is the CDF of the standard normal distribution;

    \item
$\mathbb{P}\left( \abs{ \tilde Y_S - \tilde y_k} < \epsilon \right) \leq \alpha$ is also guaranteed by the stronger condition
\begin{equation} \label{eq:errorbound-bidir+qhp+swap:claim3}
    S \geq 16 \frac{1}{\epsilon^2} \left[ \Phi^{-1} \!\left(\frac{3+\alpha}{4}\right) \right]^2
\end{equation}
asymptotically when $\epsilon \to 0$;

    \item
any of the conditions in Eqs.\eqref{eq:errorbound-bidir+qhp+swap:claim2} or~\eqref{eq:errorbound-bidir+qhp+swap:claim3} is also sufficient for the error of the originally scaled problem in the following sense: $\mathbb{P}\left( \abs{ Y'_S - y'_k} < \tilde \rho_T^{-2k} \tilde \rho_E^{-2} \epsilon \right) \leq \alpha$.
\end{enumerate}
\end{prop}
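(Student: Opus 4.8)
\emph{Proof proposal.} The plan is to mirror the proof of Prop.~\ref{prop:errorbound-qhp+swap}, substituting the BOE swap-test estimate of Prop.~\ref{prop:swap-bidir} for the amplitude-encoding one of Prop.~\ref{prop:errorbound-swap}. The structural simplification is that in the BOE the swap test returns $\sum_j |D^{(0)}_j|^2|D^{(1)}_j|^2$ directly, so $\tilde Y_S$ carries no outer square root and Lemma~\ref{lemma:clt-sqrt} is not needed; a plain CLT suffices throughout. For the first claim I would condition on a fully successful round of QHPs: then the primary register holds the normalized amplitudes $\tilde a_k\tilde T_j^k$ (from Eq.~\eqref{eq:psiTpower-bidir}), so Prop.~\ref{prop:swap-bidir} with $D^{(0)}_j = \tilde a_k\tilde T_j^k$ and $D^{(1)}_j = \tilde E_j$ gives $\mathbb{P}(X=0\mid\mathbf{Z}=\mathbf{0}) = \frac12 + \frac12\tilde a_k^2\tilde y_k$, while the QHP success rate is $\mathbb{P}(\mathbf{Z}=\mathbf{0}) = \tilde a_k^{-2}$. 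Multiplying, $\mathbb{P}(X=0,\mathbf{Z}=\mathbf{0}) = \frac12\tilde a_k^{-2} + \frac12\tilde y_k$, hence $2\mathbb{P}(X=0,\mathbf{Z}=\mathbf{0}) - \mathbb{P}(\mathbf{Z}=\mathbf{0}) = \tilde y_k$; the law of large numbers yields $\mathbb{E}[\tilde Y_S]\to\tilde y_k$, and the identity $y'_k = \tilde\rho_T^{-2k}\tilde\rho_E^{-2}\tilde y_k$ from the preceding Remark settles $\mathbb{E}[Y'_S]$. Note $\tilde y_k>0$ under Assumption~\ref{ass:innerprod-positiveterms}, which I will invoke where a factor of $\tilde y_k$ appears in a denominator.

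For the second claim I would factor $\tilde Y_S = A_S B_S$ with $B_S := \#\{i:\mathbf{Z}_i=\mathbf{0}\}/S \to \tilde a_k^{-2}$ and $A_S := 1 - 2(\sum_{i:\mathbf{Z}_i=\mathbf{0}}X_i)/\#\{i:\mathbf{Z}_i=\mathbf{0}\} \to \tilde a_k^2\tilde y_k$, then write $\tilde Y_S - \tilde y_k = (A_S B_S - A_S\tilde a_k^{-2}) + (A_S\tilde a_k^{-2} - \tilde y_k)$ and apply the triangle inequality and union bound exactly as in Prop.~\ref{prop:errorbound-qhp+swap}: with $\beta = \frac{1+\alpha}{2}$, it is enough that $\mathbb{P}(|A_S\tilde a_k^{-2} - \tilde y_k| < \epsilon/2)\ge\beta$ and $\mathbb{P}(|A_S B_S - A_S\tilde a_k^{-2}| < \epsilon/2)\ge\beta$, which combine to $\mathbb{P}(|\tilde Y_S - \tilde y_k| < \epsilon)\ge 2\beta - 1 = \alpha$. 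The first reduces to $\mathbb{P}(|A_S - \tilde a_k^2\tilde y_k| < \tilde a_k^2\epsilon/2)\ge\beta$, which is Prop.~\ref{prop:swap-bidir} applied to the $X_i$ conditioned on $\mathbf{Z}_i=\mathbf{0}$ (taking $p = \tilde a_k^2\tilde y_k$, accuracy $\tilde a_k^2\epsilon/2$, confidence $\beta$, and sample size $\#\{i:\mathbf{Z}_i=\mathbf{0}\}\sim\tilde a_k^{-2}S$); unwinding that condition, using $\Phi^{-1}(\frac{1+\beta}{2}) = \Phi^{-1}(\frac{3+\alpha}{4})$, produces the branch $4(1-\tilde a_k^4\tilde y_k^2)/\tilde a_k^2$ of the maximum in Eq.~\eqref{eq:errorbound-bidir+qhp+swap:claim2}. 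For the second, since $A_S\to\tilde a_k^2\tilde y_k$ it is eventually dominated by $2\tilde a_k^2\tilde y_k$, so it suffices that $|B_S - \tilde a_k^{-2}| < \epsilon/(4\tilde a_k^2\tilde y_k)$; as $B_S$ is the mean of $S$ i.i.d.\ Bernoulli($\tilde a_k^{-2}$) variables with variance $\tilde a_k^{-2}(1-\tilde a_k^{-2})$, a direct CLT gives the sufficient sample size, which simplifies to the other branch $16\tilde y_k^2(\tilde a_k^2 - 1)$.

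Claim 3 then follows by bounding both branches of the maximum by $16$: one has $\tilde a_k\ge 1$ since $\sum_j\tilde T_j^{2k}\le\sum_j\tilde T_j^2 = 1$ for $k\ge 1$, and $0\le\tilde a_k^2\tilde y_k\le 1$ by Eq.~\eqref{eq:ykak2bound}, hence $\tilde y_k^2\tilde a_k^2 = \tilde y_k\,(\tilde a_k^2\tilde y_k)\le\tilde y_k\le 1$, so $16\tilde y_k^2(\tilde a_k^2 - 1)\le 16$, while $4(1-\tilde a_k^4\tilde y_k^2)/\tilde a_k^2\le 4$. Claim 4 is immediate from $Y'_S - y'_k = \tilde\rho_T^{-2k}\tilde\rho_E^{-2}(\tilde Y_S - \tilde y_k)$. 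The step I expect to be the main obstacle is the same delicate point as in Prop.~\ref{prop:errorbound-qhp+swap}: the number of fully successful QHP rounds $\#\{i:\mathbf{Z}_i=\mathbf{0}\}$ is itself random, so invoking Prop.~\ref{prop:swap-bidir} on the conditionally-kept shots requires controlling this count against its asymptotic value $\tilde a_k^{-2}S$, and it is precisely there that the ``asymptotically when $\epsilon\to 0$'' qualifier carries the weight; everything else is routine CLT bookkeeping.
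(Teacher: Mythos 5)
Your proposal is correct and follows essentially the same route as the paper's proof: the same factorization $\tilde Y_S = A_S B_S$, the same triangle-inequality/union-bound split with $\beta = \frac{1+\alpha}{2}$, Prop.~\ref{prop:swap-bidir} applied to the shots conditioned on $\mathbf{Z}_i=\mathbf{0}$ for one branch of the maximum, a direct CLT on the Bernoulli success counter for the other, and the observation that the absence of an outer square root makes Lemma~\ref{lemma:clt-sqrt} unnecessary. Your explicit derivations of claims 1 and 3 (via $2\,\mathbb{P}(X=0,\mathbf{Z}=\mathbf{0})-\mathbb{P}(\mathbf{Z}=\mathbf{0})=\tilde y_k$ and the bounds $\tilde a_k\ge 1$, $\tilde a_k^2\tilde y_k\le 1$) fill in steps the paper labels ``straight-forward,'' and your closing caveat about the randomness of $\#\{i:\mathbf{Z}_i=\mathbf{0}\}$ is handled at the same (asymptotic) level of rigor as in the paper.
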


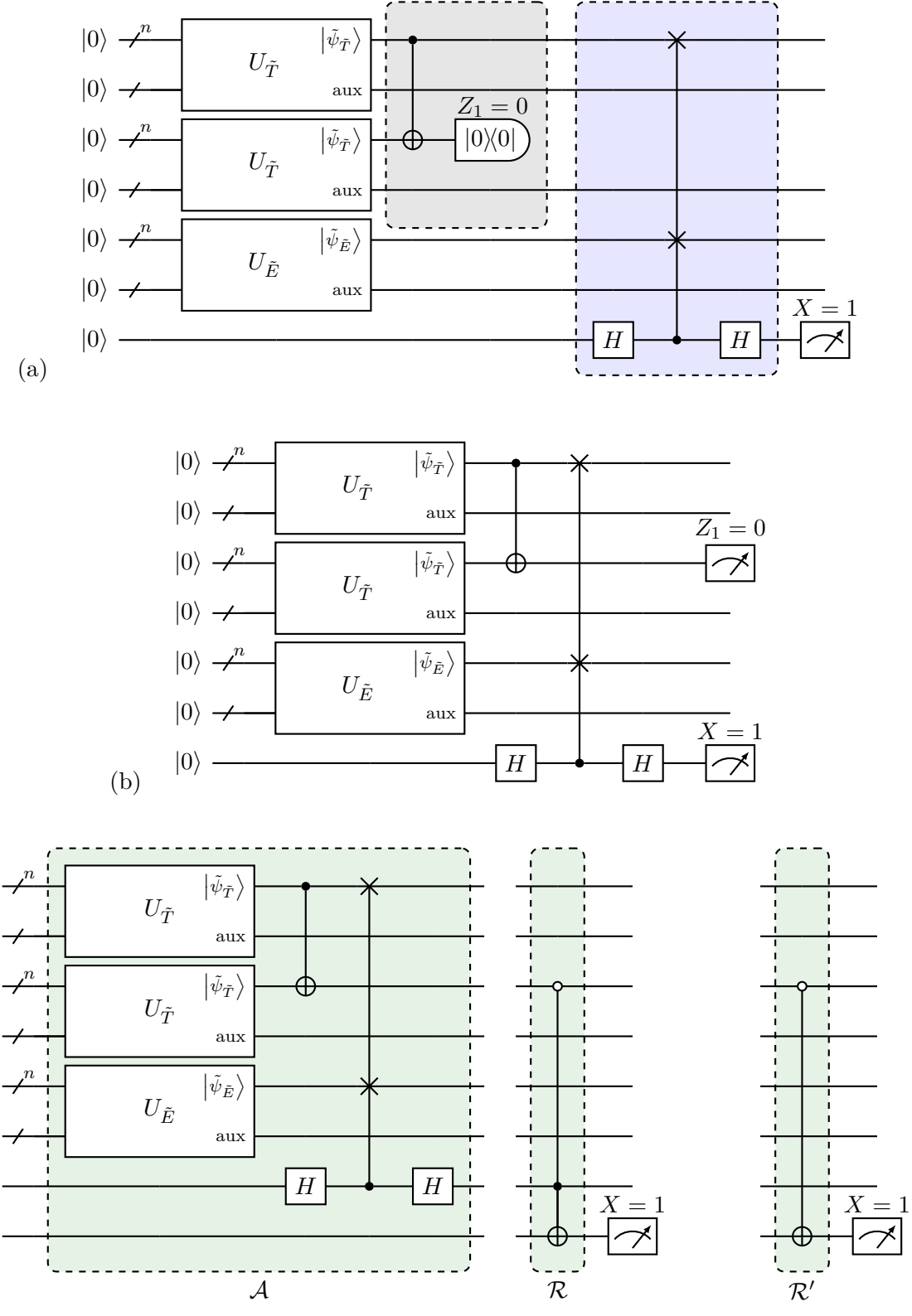
\begin{figure}[p]
    \centering
    (a)
    \tikzsetnextfilename{bidir-qhp-swap1}
    \begin{quantikz}[row sep={0.8cm,between origins}]
    \lstick{$\ket{0}$} & \qwbundle{n} & \gate[wires=2][3cm]{U_{\tilde T} \quad}\gateoutput{$\ket{\tilde\psi_{\tilde T}}$} & \ctrl{2} \gategroup[wires=4,steps=2,style={dashed, rounded corners, fill=black!10}, background]{} & \qw{} & \qw{} & \qw{} \gategroup[wires=7,steps=3,style={dashed, rounded corners, fill=blue!10}, background]{} & \swap{4} & \qw{} & \qw{} \\
    \lstick{$\ket{0}$} & \qwbundle{} & \qw{}\gateoutput{aux} & \qw{} & \qw{} & \qw{} & \qw{} & \qw{} & \qw{} & \qw{} \\
    \lstick{$\ket{0}$} & \qwbundle{n} & \gate[wires=2][3cm]{U_{\tilde T} \quad}\gateoutput{$\ket{\tilde\psi_{\tilde T}}$} & \targ{} & \meterD{\ketbra{0}{0}} \gategroup[wires=1,steps=1,style={draw=none}, label style={label position=above, anchor=north}]{$Z_1=0$} \\
    \lstick{$\ket{0}$} & \qwbundle{} & \qw{}\gateoutput{aux} & \qw{} & \qw{} & \qw{} & \qw{} & \qw{} & \qw{} & \qw{} \\
    \lstick{$\ket{0}$} & \qwbundle{n} & \gate[wires=2][3cm]{U_{\tilde E} \quad}\gateoutput{$\ket{\tilde\psi_{\tilde E}}$} & \qw{} & \qw{} & \qw{} & \qw{} & \swap{2} & \qw{} & \qw{} \\
    \lstick{$\ket{0}$} & \qwbundle{} & \qw{}\gateoutput{aux} & \qw{} & \qw{} & \qw{} & \qw{} & \qw{} & \qw{} & \qw{} \\
    \lstick{$\ket{0}$} & \qw{} & \qw{} & \qw{} & \qw{} & \qw{} & \gate{H} & \ctrl{} & \gate{H} & \meter{$X=1$}
    \end{quantikz}
    \\[8mm]
    (b)
    \tikzsetnextfilename{bidir-qhp-swap2}
    \begin{quantikz}[row sep={0.8cm,between origins}]
    \lstick{$\ket{0}$} & \qwbundle{n} & \gate[wires=2][3cm]{U_{\tilde T} \quad}\gateoutput{$\ket{\tilde\psi_{\tilde T}}$} & \ctrl{2} & \swap{4} & \qw{} & \qw{} \\
    \lstick{$\ket{0}$} & \qwbundle{} & \qw{}\gateoutput{aux} & \qw{} & \qw{} & \qw{} & \qw{} \\
    \lstick{$\ket{0}$} & \qwbundle{n} & \gate[wires=2][3cm]{U_{\tilde T} \quad}\gateoutput{$\ket{\tilde\psi_{\tilde T}}$} & \targ{} & \qw{} & \qw{} & \meter{$Z_1=0$}\\
    \lstick{$\ket{0}$} & \qwbundle{} & \qw{}\gateoutput{aux} & \qw{} & \qw{} & \qw{} & \qw{} \\    
    \lstick{$\ket{0}$} & \qwbundle{n} & \gate[wires=2][3cm]{U_{\tilde E} \quad}\gateoutput{$\ket{\tilde\psi_{\tilde E}}$} & \qw{} & \swap{2} & \qw{} & \qw{} \\
    \lstick{$\ket{0}$} & \qwbundle{} & \qw{}\gateoutput{aux} & \qw{} & \qw{} & \qw{} & \qw{} \\
    \lstick{$\ket{0}$} & \qw{} & \qw{} & \gate{H} & \ctrl{} & \gate{H} & \meter{$X=1$}
    \end{quantikz}
    \\[8mm]
    (c)
    \tikzsetnextfilename{bidir-qhp-swap3}
    \begin{quantikz}[row sep={0.8cm,between origins}]
    \lstick{$\ket{0}$} & \qwbundle{n} & \gate[wires=2][3cm]{U_{\tilde T}}\gateoutput{$\ket{\tilde\psi_{\tilde T}}$} \gategroup[wires=8,steps=4,style={dashed, rounded corners, fill=green!10}, background, label style={label position=below, anchor=north, yshift=-0.2cm}]{$\mathcal{A}$} & \ctrl{2} & \swap{4} & \qw{} & \qw{} & & \qw{} \gategroup[wires=8,steps=1,style={dashed, rounded corners, fill=green!10}, background, label style={label position=below, anchor=north, yshift=-0.2cm}]{$\mathcal{R}$} & \qw{} &[1cm] & \qw{} \gategroup[wires=8,steps=1,style={dashed, rounded corners, fill=green!10}, background, label style={label position=below, anchor=north, yshift=-0.2cm}]{$\mathcal{R'}$} & \qw{} \\
    \lstick{$\ket{0}$} & \qwbundle{} & \qw{}\gateoutput{aux} & \qw{} & \qw{} & \qw{} & \qw{} & & \qw{}& \qw{} & & \qw{} & \qw{} \\
    \lstick{$\ket{0}$} & \qwbundle{n} & \gate[wires=2][3cm]{U_{\tilde T}}\gateoutput{$\ket{\tilde\psi_{\tilde T}}$} & \targ{} & \qw{} & \qw{} & \qw{} & & \octrl{5} & \qw{} & & \octrl{5} & \qw{} \\
    \lstick{$\ket{0}$} & \qwbundle{} & \qw{}\gateoutput{aux} & \qw{} & \qw{} & \qw{} & \qw{} & & \qw{} & \qw{} & & \qw{} & \qw{} \\    
    \lstick{$\ket{0}$} & \qwbundle{n} & \gate[wires=2][3cm]{U_{\tilde E}}\gateoutput{$\ket{\tilde\psi_{\tilde E}}$} & \qw{} & \swap{2} & \qw{} & \qw{} & & \qw{} & \qw{} & & \qw{} & \qw{} \\
    \lstick{$\ket{0}$} & \qwbundle{} & \qw{}\gateoutput{aux} & \qw{} & \qw{} & \qw{} & \qw{} & & \qw{} & \qw{} & & \qw{} & \qw{}  \\
    \lstick{$\ket{0}$} & \qw{} & \qw{} & \gate{H} & \ctrl{} & \gate{H} & \qw{} & & \ctrl{1} & \qw{} & & \qw{} & \qw{} \\
    \lstick{$\ket{0}$} & \qw{} & \qw{} & \qw{} & \qw{} & \qw{} & \qw{} & & \targ{} & \meter{$X=1$} & & \targ{} & \meter{$X=1$}
    \end{quantikz}
    \hspace{1cm}
    \caption{An exemplary demonstration of the algorithm based on BOE data loading, QHP, and the swap test, for the calculation of the inner product between temperatures to the power $k$ and prices, when $k=2$. $U_{\tilde T}$ and $U_{\tilde E}$ are the loading unitaries for $\tilde T$ and $\tilde E$ respectively, in the BOE. In (a), the algorithm in its original formulation, where the gray box highlights the QHP, and the blue box the swap test. In (b), the version without mid-measurements. All measurements are deferred to the end. In (c), a version with single-qubit measurement, suggesting unitaries $\mathcal{A}$ and $\mathcal{R}$ for QAE techniques. To adopt QAE, a separate circuit based on $\mathcal{A}$ and $\mathcal{R'}$ is required to estimate the success rate of QHPs (for this purpose $\mathcal{A}$ can be reduced to a smaller circuit that only computes the QHPs). As usual, we highlighted in gray the QHP, in blue the swap test, and in green the QAE unitaries.}
    \label{fig:bidir+qhp+swap}
\end{figure}
\begin{proof}
The proof follows that in Prop.~\ref{prop:errorbound-qhp+swap}. In particular, the first claim is straight-forward. Concerning the second claim, one defines
$$A_S := 1-2 \frac{ \sum_{i: \mathbf{Z}_i = \mathbf{0}} X_i}{\# \{i: \mathbf{Z}_i = \mathbf{0}\}} \quad \text{and} \quad B_S := \frac{\# \{i: \mathbf{Z}_i = \mathbf{0}\}}{S}.$$
This time $B_S \to \tilde a_k^{-2}$.
Following the proof structure of Prop.~\ref{prop:errorbound-qhp+swap},
\begin{equation*}
    \mathbb{P}\left( \abs{ Y_S - \tilde y_k} < \epsilon \right)
    \geq \mathbb{P}\left( \abs{ A_S B_S - A_S \tilde a_k^{-2}} < \epsilon /2 \right) + \mathbb{P}\left( \abs{A_S \tilde a_k^{-2} - \tilde y_k} < \epsilon /2 \right) -1 \geq 2 \beta -1 = \alpha,
\end{equation*}
once $\beta = \frac{1+\alpha}{2}$ and
\begin{empheq}[left=\empheqlbrace]{align}
    & \mathbb{P}\left( \abs{ A_S B_S - A_S \tilde a_k^{-2}} < \epsilon /2 \right) \geq \beta \label{eq:errorbound-qhp+bidirswap:proof2a} \\
    & \mathbb{P}\left( \abs{A_S - \tilde a_k^2 \tilde y_k} < \tilde a_k^2 \epsilon /2 \right) \geq \beta \label{eq:errorbound-qhp+bidirswap:proof2b}
\end{empheq}

For the latter, apply Prop.~\ref{prop:swap-bidir} to $X_i$ conditioned to $\mathbf{Z}_i = \mathbf{0}$, taking $\tilde a_k^2 \tilde y_k$ as the $p$ in Prop.~\ref{prop:swap-bidir},  $\epsilon \tilde a_k^2 / 2$ as the $\epsilon$ in Prop.~\ref{prop:swap-bidir}, $\# \{i: \mathbf{Z}_i = \mathbf{0}\}$ as the $S$ in Prop.~\ref{prop:swap-bidir}, and $\beta$ as the $\alpha$ in Prop.~\ref{prop:swap-bidir}, and recalling that $\# \{i: \mathbf{Z}_i = \mathbf{0}\}$ is asymptotic to $\tilde a_k^{-2} S$.

For the former instead, since $A_S$ tends to $\tilde a_k^2 \tilde y_k$, it is sufficient to prove
$\abs{B_S - \tilde a_k^{-2}} < \frac{\epsilon}{4 \tilde a_k^2 \tilde y_k}$. $B_S$ is the mean of $S$ i.i.d. Bernoulli variables with $\mu=\tilde a_k^{-2}$ and $\sigma^2 = \mu (1-\mu) = \tilde a_k^{-2} (1- \tilde a_k^{-2})$. Since there are no square roots in this case, the CLT can be applied directly without need for Lemma~\ref{lemma:clt-sqrt},  to verify that the condition in Eq.~\eqref{eq:errorbound-bidir+qhp+swap:claim2} is sufficient. This proves the second claim.

The third one is simple through Eq.~\eqref{eq:ykak2bound}, also remembering $\tilde a_k^{-2} \leq 1$. The fourth is trivial as well.
\end{proof}

\begin{prop}[Circuit width and depth in BOE]\label{prop:width-depth-k-bidir}
    Let $k$ be a fixed power order. Then the algorithm described in Prop.~\ref{prop:errorbound-bidir+qhp+swap} (called BOE + swap test) have the following width and depth:
    $2^s + \frac{1}{2}\left( \lg^2 N - \lg N - s^2 +s \right) + 1$.
    \begin{equation}\label{eq:width-k-bidir}
        C_{\mathrm{w}}(k) = r(k) \left[(s+1) N 2^{-s} -1 + \lg N \right] + 1,
    \end{equation}
    \begin{equation}\label{eq:depth-k-bidir}
        C_{\mathrm{d}}(k) \leq m(k) \left[ 2^s + \frac{1}{2}\left( \lg^2 N - \lg N - s^2 +s \right) \right] + k + (3\lg N + 1) \delta^{(\mathrm{swap})},
    \end{equation}
    where $r(k)$ and $m(k)$ are those defined in Prop.~\ref{prop:width-depth-k} with $\delta^{(\mathrm{swap})}=1$.
\end{prop}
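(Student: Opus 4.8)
The plan is to mirror the proof of Prop.~\ref{prop:width-depth-k}, simply substituting the per-register resource counts of a single BOE state given by Prop.~\ref{prop:width-depth-bidir} — width $(s+1)N2^{-s}-1+\lg N$ and loading depth $2^s+\tfrac{1}{2}(\lg^2 N-\lg N-s^2+s)$ — in place of the generic amplitude-encoding loading cost $C_{\mathrm{d},\mathrm{load}}(N)$ and per-register width $\lg N$ used there. The combinatorial structure is unchanged: by Eq.~\eqref{eq:psiTpower-bidir} the QHP acts only on the primary registers and preserves the orthonormality of the side registers, and the swap test of Prop.~\ref{prop:swap-bidir} is likewise applied only to the primary registers, so the auxiliary registers contribute to the width but are untouched by the QHP and swap-test layers. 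Since the BOE variant is analysed only together with the swap test, we set $\delta^{(\mathrm{swap})}=1$ throughout, and $r(k)$, $m(k)$ are exactly the quantities defined in Prop.~\ref{prop:width-depth-k}.

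For the width, I would count $r(k)$ registers holding copies of BOE states — namely $k$ copies of $\ket{\tilde\psi_{\tilde T}}$ and one copy of $\ket{\tilde\psi_{\tilde E}}$ when mid-resets are not used, so that $r(k)=k+\delta^{(\mathrm{swap})}=k+1$, or just $2$ recycled registers when mid-resets (and dynamic stopping) are used — plus a single swap-test ancilla. Multiplying the per-register BOE width by $r(k)$ and adding the ancilla gives Eq.~\eqref{eq:width-k-bidir}.

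For the depth, I would argue exactly as in Prop.~\ref{prop:width-depth-k}: without mid-resets the $r(k)$ BOE states are prepared in parallel, contributing a single loading-depth block, whereas with mid-resets the loadings are serialized into $m(k)$ successive blocks (with $m(k)\le k$ in the mid-reset case and $m(k)=1$ otherwise), each block costing $2^s+\tfrac{1}{2}(\lg^2 N-\lg N-s^2+s)$; on top of this, every QHP step adds depth $1$ since all its CNOTs act on disjoint qubit pairs, giving an additional contribution dominated by $k$; finally the swap test, whose controlled swaps all share the single ancilla and are therefore sequential, adds depth $3\lg N+1$. Collecting these terms yields Eq.~\eqref{eq:depth-k-bidir}.

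There is no genuine obstacle here — the statement is bookkeeping built on Prop.~\ref{prop:width-depth-bidir} and on the factors $r(k)$, $m(k)$ already established in Prop.~\ref{prop:width-depth-k}. The only points deserving a line of care are confirming that the QHP and swap test touch only the primary registers (so the BOE auxiliary registers enter the width but not the QHP/swap-test depth), and checking that the index-copying layer internal to each BOE load, together with the per-QHP CNOT layers, is safely absorbed into the additive $+k$ term since $m(k)\le k$.
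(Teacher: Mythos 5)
Your proposal is correct and follows exactly the route of the paper, whose proof is simply the one-line remark that the argument mirrors Prop.~\ref{prop:width-depth-k} while substituting the BOE per-register width and loading depth from Prop.~\ref{prop:width-depth-bidir}; your expanded bookkeeping (including the observations that the QHP and swap test act only on primary registers and that the extra index-copying layer is absorbed into the additive $+k$ term) fills in the details the paper leaves implicit.
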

\begin{proof}
    Similar to that of Prop.~\ref{prop:width-depth-k}, exploiting also Prop.~\ref{prop:width-depth-bidir}.
\end{proof}

\begin{prop}[Convergence rate in BOE]\label{prop:assemble-k-bidir}
    Let $w \in [0,1]^K$ such that $\sum_k w_k = 1$. Also, let $\alpha \in [0, 1)^K$ such that $\sum_k (1- \alpha_k) = 1-\beta$ for some $\beta \in (0,1)$. Finally, let $\epsilon > 0$. For instance, one may take $w_k = K^{-1}$ for all $k$ and $\alpha_k = \frac{K-1+\beta}{K}$ for all $k$.

    Then $\tilde V$ defined in Eq.~\eqref{eq:estimatorVtilde} is an estimator for $v^*$ such that
    $\mathbb{P}\left(\abs{V-v^*} \leq \epsilon \right) \geq \beta$
    provided that
    $$\mathbb{P}(\rho_T^{-k} \rho_E^{-1} \abs{b_k(\eta)} \ \abs{Y_k-y_k} \leq w_k \epsilon) \geq \alpha_k \quad \text{for all} k \leq K.$$

    Consequently, a similar estimation holds for the relative error:
    $\mathbb{P}\left(\abs{V-v^*} \leq \epsilon \abs{v} \right) \geq \beta,$ provided that
    $\mathbb{P}\left( \abs{Y_k-y_k} \leq \tilde r_k^{-1} w_k \; \abs{b_k(\eta)}^{-1} \; \epsilon \right)  \geq \alpha_k$ for all $k \leq K$, where
    \begin{equation}\label{eq:defn-rk-bidir}
        \tilde r_k^{-1} := \abs{v} \; \tilde \rho_T^{2k} \; \tilde \rho_E^2.
    \end{equation}

    Finally, let $K$ be fixed. When $N$ grows, $r_K$ dominates the asymptotic behavior of $r_k$ for all other $k \leq K$.
\end{prop}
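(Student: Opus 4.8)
The plan is to mirror, essentially verbatim, the argument of Prop.~\ref{prop:assemble-k}, since the only structural change is that the prefactors $\rho_T^{-k}\rho_E^{-1}$ are replaced by $\tilde\rho_T^{-2k}\tilde\rho_E^{-2}$ and the per-power estimator $Y_k$ by $\tilde Y_k$; none of these changes affect the combinatorics of the triangle inequality or the union bound. First I would combine Eq.~\eqref{eq:estimatorVtilde} with the BOE reconstruction Eq.~\eqref{eq:finalsum-bidir} to write $\tilde V - v^* = \sum_{k=0}^K \tilde\rho_T^{-2k}\tilde\rho_E^{-2}\,b_k(\eta)\,(\tilde Y_k - \tilde y_k)$, whence $\abs{\tilde V - v^*} \le \sum_k \tilde\rho_T^{-2k}\tilde\rho_E^{-2}\abs{b_k(\eta)}\,\abs{\tilde Y_k - \tilde y_k}$. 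Splitting the error budget $\epsilon$ into the slices $w_k\epsilon$ (using $\sum_k w_k = 1$) and applying the union bound over the $K+1$ ``bad'' events gives
$$\mathbb{P}\!\left(\abs{\tilde V - v^*} \le \epsilon\right) \ge 1 - \sum_{k} \mathbb{P}\!\left( \tilde\rho_T^{-2k}\tilde\rho_E^{-2}\abs{b_k(\eta)}\,\abs{\tilde Y_k - \tilde y_k} \ge w_k\epsilon \right) \ge 1 - \sum_k (1-\alpha_k) = \beta,$$
which is the first claim. The sample choice $w_k = K^{-1}$, $\alpha_k = (K-1+\beta)/K$ is verified by direct substitution, exactly as in Prop.~\ref{prop:assemble-k}.

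For the relative-error statement I would simply replace $\epsilon$ by $\epsilon\abs{v}$ in the first claim and rearrange the per-power hypothesis: the event $\tilde\rho_T^{-2k}\tilde\rho_E^{-2}\abs{b_k(\eta)}\,\abs{\tilde Y_k - \tilde y_k} \le w_k\epsilon\abs{v}$ is equivalent to $\abs{\tilde Y_k - \tilde y_k} \le w_k\,\tilde r_k\,\abs{b_k(\eta)}^{-1}\epsilon$ once one sets $\tilde r_k := \abs{v}\,\tilde\rho_T^{2k}\,\tilde\rho_E^{2}$ as in Eq.~\eqref{eq:defn-rk-bidir}. Hence the stated hypothesis on $\abs{\tilde Y_k - \tilde y_k}$ transports directly into the premise of the first claim, yielding $\mathbb{P}(\abs{\tilde V - v^*} \le \epsilon\abs{v}) \ge \beta$.

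Finally, for the asymptotic statement I would argue as in the last lines of Prop.~\ref{prop:assemble-k}: under Assumption~\ref{ass:innerprod-positiveterms} the normalization $\tilde\rho_T^{-2} = \sum_j (T'_j-\eta)$ is a sum of positive terms that diverges with $N$, so $\tilde\rho_T \le 1$ for $N$ large; then for every $k \le K$ one has $\tilde r_K / \tilde r_k = \tilde\rho_T^{2(K-k)} \le 1$, and therefore whenever $\tilde r_k \to 0$ the term $\tilde r_K$ tends to zero at least as fast, making the $k=K$ constraint the (asymptotically) binding one that controls the sampling cost. I do not expect a genuine obstacle here: all the probabilistic content — the control of $\abs{\tilde Y_k - \tilde y_k}$ per power — was already established in the per-power Propositions (e.g.\ Prop.~\ref{prop:errorbound-bidir+qhp+swap}), so this statement is purely a ``combination lemma'' and the only real care needed is keeping the tilded versus untilded symbols straight and noting that the quadratic dependence on norms (the $\tilde\rho^{-2}$ factors rather than $\tilde\rho^{-1}$) is exactly what makes $\tilde r_k$ decay faster in $N$ than its amplitude-encoding counterpart $r_k$.
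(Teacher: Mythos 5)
Your proposal is correct and follows exactly the route the paper intends: the paper's own proof of this proposition is literally ``Similar to that of Prop.~\ref{prop:assemble-k},'' and you have simply written out that analogous argument (triangle inequality, error-budget split via $\sum_k w_k=1$, union bound, then the rearrangement defining $\tilde r_k$ and the ratio $\tilde r_K/\tilde r_k=\tilde\rho_T^{2(K-k)}$ for the asymptotic claim). You also correctly read through the statement's typographical slips (untilded $\rho_T$, $\rho_E$, $Y_k$, $r_k$ where the tilded versions are meant), which is the only point of care here.
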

\begin{proof} Similar to that of Prop.~\ref{prop:assemble-k}.
\end{proof}

\begin{remark}\label{rem:error-scaling-bidir}
    Let us emphasize here that $\tilde r_k$ scales \textit{quadratically} with $\tilde \rho_T^k$, whereas $r_k$ scales linearly with $\rho_T^k$. Under the assumptions of Remark~\ref{rem:error-scaling}, $\tilde \rho_T$ scales as $O(\sqrt{N})$, similarly to $\rho_T$. As a consequence,
    $$\tilde r_k = O(N^{K+1-1}) = O(N^K).$$
    Given the discussion at the end of Subsection~\ref{subsec:assemble-k}, this compares to $r_k = O(N^{K/2-1/2})$ of the amplitude encoding, providing strong limitations to the applicability of the BOE for big values of $N$.
\end{remark}

Complexity of the different techniques is summarized and discussed in Appendix~\ref{appendix:complexity}. For the moment, let us say that it can be improved by resorting to QAE, as done for the amplitude encoding.

\subsection{QAE techniques}

Inspired by Fig.~\ref{fig:bidir+qhp+swap} and similar to Remark~\ref{remark:qae-swap}, one can define two QAE estimation problems, the former to approximate the success rate $\tilde a_k^{-2}$ and the latter to estimate $\frac{1}{2} \left( \tilde y_k + \tilde a_k^{-2} \right)$. As shown in the figure, $\mathcal{A}$ is composed by data loading, QHP, and swap test. Then we define $\mathcal{R'}$ to value an additional qubit as $\ket{1}$ iff all the QHPs were successful, and $\mathcal{R}$ to value an additional qubits $\ket{1}$ iff all the QHPs were successful and the swap test is $\ket{1}$. With these building blocks, we can achieve the following performance.

\begin{prop}[Complexity for BOE + *QHP + swap + QAE]\label{prop:errorbound-xqae-bidir+qhp+direct}
    Let $\mathcal{A}$, $\mathcal{R'}$ and $\mathcal{R}$ as specified right above. Let $\mathcal{U'}$ and $\mathcal{U}$ be the Grover oracles for $\mathcal{A}$ and $\mathcal{R'}$, and $\mathcal{A}$ and $\mathcal{R}$ respectively, according to the usual QAE technique. Let $Y_S^{(1)}$ the median outcome of $S$ executions of a QAE estimation applied to $\mathcal{U'}$, and let $Y_S^{(0)}$ be similarly the median outcome of the QAE technique applied to $\mathcal{A}$ and $\mathcal{R}$.
    Then $Y_S := 2 Y_S^{(1)} - Y_S^{(0)}$ is an estimator for $\tilde y_k := \sum_j \tilde E_j^2 \tilde T_j^{2k}$. 

    Moreover, $\mathbb{P}(\abs{Y_S-\tilde y_k} \leq \epsilon) \geq \alpha$ is obtained by using the Grover oracles $\mathcal{U}$ and $\mathcal{U'}$ for a number of times $O (1/\epsilon)$, and taking $S=O(\lg 1/(1-\alpha))$.

    Additionally, under the same conditions, $\mathbb{P}(\abs{Y'_S-\tilde y'_k} \leq \tilde\rho_T^{-2k} \tilde\rho_E^{-2} \epsilon) \geq \alpha$.
\end{prop}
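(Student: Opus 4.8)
The plan is to reduce the statement to two separate applications of the QAE scaling result, Theorem~\ref{thm:qae-bounded-values}, one for each of the two amplitudes whose signed combination equals $\tilde y_k$, and then to control the error of that combination by a triangle inequality together with a union bound over the two estimations. The first task is therefore to pin down which amplitude each QAE problem estimates. For the pair $(\mathcal{A},\mathcal{R'})$ the amplitude is the probability that the flag qubit produced by $\mathcal{R'}$ is $\ket{1}$, i.e.\ the probability that all the QHPs succeed; this is the BOE analogue of Eq.~\eqref{eq:successrate}, and since the QHP success rate in the BOE equals $\tilde a_k^{-2}$ (see the discussion around Eq.~\eqref{eq:psiTpower-bidir}), this amplitude is $z^{(1)}:=\tilde a_k^{-2}$. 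One should observe here that the swap test sitting inside $\mathcal{A}$ is unitary and acts only on the primary registers and the swap ancilla, so it does not alter the reduced state of the $\mathbf{Z}$ ancillas; hence QAE may equivalently be run on the shortened circuit that stops after the QHPs, as remarked below Fig.~\ref{fig:bidir+qhp+swap}. For the pair $(\mathcal{A},\mathcal{R})$ the amplitude is $z^{(0)}:=\mathbb{P}(X=1,\mathbf{Z}=\mathbf{0})$, the probability that all QHPs succeed and the swap test returns $1$; I would compute it exactly along the lines of Eqs.~\eqref{eq:prob-swap1}--\eqref{eq:inner-from-probabilities}, but using the first claim of Prop.~\ref{prop:swap-bidir} applied with the primary-register amplitudes $\tilde a_k\tilde T_j^{k}$ and $\tilde E_j$, which gives $\mathbb{P}(X=1\mid\mathbf{Z}=\mathbf{0})=\tfrac{1}{2}+\tfrac{1}{2}\tilde a_k^{2}\tilde y_k$ and hence $z^{(0)}=\tfrac{1}{2}\tilde a_k^{-2}+\tfrac{1}{2}\tilde y_k$. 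The key algebraic observation is then $2z^{(0)}-z^{(1)}=\tilde y_k$, so that combining the two QAE estimators $Y_S^{(0)}$ and $Y_S^{(1)}$ with the matching coefficients $+2$ and $-1$ (attached to the swap amplitude and to the success rate, respectively) produces $Y_S$ targeting $\tilde y_k$; in contrast with Prop.~\ref{prop:errorbound-qae-qhp+direct}, no square root intervenes here because $\tilde y_k$ is already linear in the two measured probabilities, so the Taylor-expansion argument of Eq.~\eqref{eq:hatz-error-bound} is not needed.

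Next I would invoke Theorem~\ref{thm:qae-bounded-values} on each of the two problems. For each, the relevant $f$ is the $\{0,1\}$-valued indicator of the flag configuration (a multi-controlled NOT, only with a slightly richer support than the bare $x=0$ test of Prop.~\ref{prop:errorbound-qae-qhp+direct}), so $f(\mathcal{A})\in[0,1]$ and the theorem applies verbatim, estimating $z^{(0)}$ respectively $z^{(1)}$. I would split the error budget so that twice the error on $z^{(0)}$ and the error on $z^{(1)}$ each stay below $\epsilon/2$, i.e.\ estimate $z^{(0)}$ to accuracy $\epsilon/4$ and $z^{(1)}$ to accuracy $\epsilon/2$; the triangle inequality $\abs{Y_S-\tilde y_k}\le 2\abs{Y_S^{(0)}-z^{(0)}}+\abs{Y_S^{(1)}-z^{(1)}}$ then yields $\abs{Y_S-\tilde y_k}\le\epsilon$ on the intersection of the two good events, and each estimation still uses its Grover oracle only $O(1/\epsilon)$ times (the constants $4$ and $2$ are absorbed). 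Running each QAE with $S=O(\lg 1/(1-\alpha))$ repetitions and taking the median raises its individual confidence to $(1+\alpha)/2$, so a union bound over the two gives $\mathbb{P}(\abs{Y_S-\tilde y_k}\le\epsilon)\ge\alpha$, which is the claimed bound with the stated oracle count and sample complexity. The rescaled statement is then immediate: since $\tilde y'_k=\tilde\rho_T^{-2k}\tilde\rho_E^{-2}\tilde y_k$, setting $Y'_S:=\tilde\rho_T^{-2k}\tilde\rho_E^{-2}Y_S$ multiplies both the error and the threshold by the same positive constant, so $\abs{Y_S-\tilde y_k}\le\epsilon$ forces $\abs{Y'_S-\tilde y'_k}\le\tilde\rho_T^{-2k}\tilde\rho_E^{-2}\epsilon$.

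The only genuine obstacle I anticipate is the bookkeeping in this second step rather than anything conceptual: one must check that the error-budget split and the two-way union bound leave both the $O(1/\epsilon)$ oracle count and the $O(\lg 1/(1-\alpha))$ sample count intact, and one must verify that the amplitudes identified in the first step really are the probabilities QAE reads off the flag qubit — in particular that the conditioning on $\mathbf{Z}=\mathbf{0}$ in Prop.~\ref{prop:swap-bidir} is correctly converted into the joint-event probability realized by $\mathcal{R}$. The QHP normalization factors $\tilde a_k$ are not a source of difficulty, since they cancel in the combination $2z^{(0)}-z^{(1)}$.
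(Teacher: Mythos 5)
Your proof is correct and follows essentially the same route as the paper's: identify the two QAE target amplitudes as the QHP success rate $\tilde a_k^{-2}$ and the joint probability $\tfrac{1}{2}(\tilde y_k+\tilde a_k^{-2})$, combine them linearly to recover $\tilde y_k$, and apply Theorem~\ref{thm:qae-bounded-values} twice with a union bound at confidence $\beta=(1+\alpha)/2$. You are in fact slightly more careful than the paper on two points: you attach the coefficient $+2$ to the swap-test amplitude (implicitly correcting the superscript mismatch in the statement's formula $Y_S=2Y_S^{(1)}-Y_S^{(0)}$, which with the stated definitions would target $\tfrac{3}{2}\tilde a_k^{-2}-\tfrac{1}{2}\tilde y_k$ rather than $\tilde y_k$), and you split the error budget as $\epsilon/4$ and $\epsilon/2$ so the triangle inequality yields exactly $\epsilon$ rather than the $3\epsilon/2$ that the paper's uniform $\epsilon/2$ split would give.
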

\begin{proof}
Consider a circuit that performs $\mathcal{U'}$ and measures the last qubit: by definition of $\mathcal{A}$ and $\mathcal{R'}$, the probability to get $1$ is the success rate $\tilde a_k^{-2}$. Similarly, if one executes $\mathcal{U}$ and measures the last qubits, obtains $1$ with probability $\frac{1}{2} \left( \tilde y_k+ \tilde a_k^{-2} \right)$, applying the usual argument of Eq.~\eqref{eq:inner-from-probabilities}.

As a consequence, for Thm.~\ref{thm:qae-bounded-values}, we obtain $\mathbb{P}\left( \abs{Y^{(1)}_S-\tilde a_k^{-2}} \leq \frac{\epsilon}{2} \right) \geq \beta$ by using $\mathcal{U'}$ a number of times $O (1/\epsilon)$, and taking $S=O(\lg 1/(1-\beta))$, where $\beta = \frac{1+\alpha}{2}$. In the same way, $\mathbb{P}\left( \abs{Y^{(0)}_S- \frac{1}{2} \left( \tilde y_k+ \tilde a_k^{-2} \right)} \leq \frac{\epsilon}{2} \right) \geq \beta$ by using $\mathcal{U}$ a number of times $O (1/\epsilon)$, and taking $S=O(\lg 1/(1-\beta))$.

Combining the two estimates, we get $\mathbb{P} \left( \abs{ Y_S - \tilde y_k} \leq \epsilon \right) \geq \alpha$. The proof is complete once we observe that $O(\lg 1/(1-\beta)) = O(\lg 1/(1-\alpha))$.

As far as the originally scaled version of the estimation is concerned, simply substitute $Y'_S$ and $y'_k$ in the previous result.
\end{proof}

\begin{prop}[Oracle depth and width in the BOE]\label{prop:depth-width-oracle-bidir}
    In the setting of the previous Proposition, and assuming a split level $s=s(N) \in \{1, \dots, \lg N \}$, the depth and width of $\mathcal{U}$ are:
    \begin{equation}\label{eq:width-k-oracle-bidir}
        C_{\mathrm{w}}(k) = O \left( (s+1)N 2^{-s} + k \lg N \right),
    \end{equation}
    \begin{equation}\label{eq:depth-k-oracle-bidir}
        C_{\mathrm{d}}(k) = O \left( 2^s +\lg^2 N -s^2 + k \lg N \right),
    \end{equation}
    and they dominate those of $\mathcal{U'}$.
\end{prop}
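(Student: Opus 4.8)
The plan is to mirror the proof of Proposition~\ref{prop:depth-width-oracle}, replacing the amplitude-encoding loading costs by the BOE costs of Proposition~\ref{prop:width-depth-bidir} and reusing the circuit anatomy already dissected in Proposition~\ref{prop:width-depth-k-bidir}. Recall that $\mathcal{U}=\mathcal{F}\mathcal{Z}\mathcal{F}^\dagger$ with $\mathcal{F}=\mathcal{R}(\mathcal{A}\otimes I)$ and $\mathcal{Z}=I-2\ketbra{0}{0}$, and analogously for $\mathcal{U}'$ with $\mathcal{R}'$; hence it suffices to bound the width and depth of $\mathcal{A}$, of $\mathcal{R}$ (resp.\ $\mathcal{R}'$) and of $\mathcal{Z}$, and add them, the adjoint $\mathcal{F}^\dagger$ matching $\mathcal{F}$. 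I would first treat $\mathcal{A}$: as drawn in Fig.~\ref{fig:bidir+qhp+swap}(c) it loads $k$ copies of $\ket{\tilde\psi_{\tilde T}}$ and one copy of $\ket{\tilde\psi_{\tilde E}}$ in parallel, performs the QHP CNOT layers on the primary registers, and applies the swap test. By Proposition~\ref{prop:width-depth-bidir} each BOE loading occupies $(s+1)N2^{-s}-1+\lg N$ qubits and has depth $2^s+\tfrac12(\lg^2 N-\lg N-s^2+s)+1$; since the loadings run in parallel, $\mathcal{A}$ contributes width $O((s+1)N2^{-s})$ and depth $O(2^s+\lg^2 N-s^2)$, to which the QHP layers add $O(k)$ depth and the swap test one qubit and $O(\lg N)$ depth (its $\lg N$ controlled swaps share one ancilla).

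Next I would account for the reflections. As in Proposition~\ref{prop:depth-width-oracle}, $\mathcal{R}$ is a multi-controlled NOT flipping its output qubit conditioned on the $k-1$ QHP flag registers being $\ket{\mathbf{0}}$ and on the swap-test qubit being $\ket{1}$, i.e.\ an $O(k\lg N)$-controlled NOT; $\mathcal{R}'$ is the same with the swap-test control removed (and, per Proposition~\ref{prop:errorbound-xqae-bidir+qhp+direct}, with the swap test itself dropped from the corresponding $\mathcal{A}$); and $\mathcal{Z}$ is, up to two single-qubit gates, a multi-controlled NOT on the same $O(k\lg N)$ primary/flag qubits. By the linear-depth constructions of \cite{barenco1995elementary,maslov_advantages_2016}, each of these admits a realisation with $O(k\lg N)$ work ancillas (restored to $\ket{0}$) and $O(k\lg N)$ depth. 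Collecting the pieces: the width is $O((s+1)N2^{-s})+O(k\lg N)=O((s+1)N2^{-s}+k\lg N)$, which is Eq.~\eqref{eq:width-k-oracle-bidir}; and the depth, summing $\mathcal{F}$, $\mathcal{Z}$ and $\mathcal{F}^\dagger$, is $2\bigl(O(2^s+\lg^2 N-s^2)+O(k)+O(\lg N)\bigr)+O(k\lg N)=O(2^s+\lg^2 N-s^2+k\lg N)$, which is Eq.~\eqref{eq:depth-k-oracle-bidir}. For the final clause, $\mathcal{U}'$ is obtained from $\mathcal{U}$ by deleting the swap test from $\mathcal{A}$ and removing one control from the reflections, so every one of its resource counts is bounded by the corresponding count for $\mathcal{U}$; hence $\mathcal{U}$ dominates $\mathcal{U}'$.

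The step that requires the most care is the reflection $\mathcal{Z}$: by definition it reflects about the all-zeros state of the \emph{entire} register on which $\mathcal{F}$ acts, and for the BOE this register includes the wide auxiliary side-registers, whose naive inclusion as controls would reinject an $O((s+1)N2^{-s})$ factor into the depth and break Eq.~\eqref{eq:depth-k-oracle-bidir}. The argument must therefore justify — exactly as the amplitude-encoding proof of Proposition~\ref{prop:depth-width-oracle} tacitly does — that it is enough to control on the $O(k\lg N)$ primary and flag qubits, either because the auxiliary registers are returned to $\ket{0}$ by the uncomputation half of the BOE loading within $\mathcal{F}^\dagger$ (so they never carry non-trivial support at the point $\mathcal{Z}$ acts), or because the invariant two-dimensional QAE subspace on which $\mathcal{U}$ must act correctly is unaffected by restricting the reflection to that subset. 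Pinning down this reduction is the only non-routine point; the remaining estimates are additive and term-by-term as above.
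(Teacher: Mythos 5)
Your proof follows essentially the same route as the paper's, whose entire argument is the single sentence ``similar to Prop.~\ref{prop:depth-width-oracle}, using the loading resources of Prop.~\ref{prop:width-depth-bidir}''; your term-by-term accounting of $\mathcal{A}$, $\mathcal{R}$, $\mathcal{R}'$ and $\mathcal{Z}$ is exactly that argument, spelled out.

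The one point you leave open --- that $\mathcal{Z}=I-2\ketbra{0}{0}$ reflects about the all-zeros state of the \emph{full} register, including the $O\left((s+1)N2^{-s}\right)$ auxiliary qubits, so that a linear-depth multi-controlled NOT would blow past Eq.~\eqref{eq:depth-k-oracle-bidir} --- is genuine and is not addressed by the paper either. It can, however, be closed without the invariant-subspace argument you sketch: a multi-controlled gate on $M$ qubits admits a realisation of depth $O(\lg M)$ using $O(M)$ clean ancillas (a binary tree of Toffolis computing the AND, one controlled phase, then uncomputation). Here $M=O\left((s+1)N2^{-s}+k\lg N\right)$, so the extra ancillas do not change the asymptotic width and $\lg M=O(\lg N)$ is absorbed into the $k\lg N$ depth term; the strictly linear-depth constructions of~\cite{barenco1995elementary,maslov_advantages_2016} cited for the amplitude-encoding case would indeed not suffice. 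A smaller cosmetic remark: loading $k$ BOE copies in parallel occupies $k\left[(s+1)N2^{-s}-1+\lg N\right]$ qubits, so the first term of Eq.~\eqref{eq:width-k-oracle-bidir} strictly carries a factor $k$ (as it does via $r(k)$ in Prop.~\ref{prop:width-depth-k-bidir}); your write-up inherits this looseness from the statement itself.
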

\begin{proof}
The proof is similar to that of Prop.~\ref{prop:depth-width-oracle}, using the fact that data loading requires resources listed in Prop.~\ref{prop:width-depth-bidir}.
\end{proof}

\subsection{The classical sampling-based algorithm as a benchmark}\label{subsec:tang}
For comparison with the classical case, we recall the following results, that exploit the idea of sample access to design efficient classical quantum-inspired algorithms \cite{tang_quantum-inspired_2019}. Having \textit{sample access} to a real non-null vector $v$, means having a tool that (efficiently) provides the index $j$ of any element from the array, with a probability proportional to its squared value. In other words, it means having access to a random variable $J$ such that $\mathbb{P}(J = j) = v_j^2 / \norm{v}$. Instead, we say that \textit{query access} is given to a vector $v$, if it is possible to obtain $v_j$ from $j$.
\begin{prop}[Sample access {\cite[Prop. 3.2]{tang_quantum-inspired_2019}}]\label{prop:tang-sample}
    The state decomposition tree of a vector of length $N$, described in Fig.~\ref{fig:binary-tree}, also allows for classical sampling from the vector in $O(\lg N)$ time.
\end{prop}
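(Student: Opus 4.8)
The plan is to read the sampling procedure directly off the state decomposition tree and then verify that it realizes the distribution $\mathbb{P}(J=j)=v_j^2/\norm{v}^2$ within the claimed time bound. Recall from Fig.~\ref{fig:binary-tree} that in this tree every internal node stores the sum of the squared leaf-entries of its subtree; in particular the root stores $\norm{v}^2$ and the leaf indexed by $j$ stores $v_j^2$. Since $N$ is a power of $2$, the tree is balanced with depth $\lg N$, so a single root-to-leaf walk touches exactly $\lg N$ internal nodes.

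First I would define the walk: start at the root; at an internal node $\nu$ with stored value $s_\nu$ and children $\nu_0,\nu_1$ with stored values $s_{\nu_0},s_{\nu_1}$ (so that $s_{\nu_0}+s_{\nu_1}=s_\nu$), draw a single uniform variate on $[0,1]$ and descend to $\nu_0$ with probability $s_{\nu_0}/s_\nu$ and to $\nu_1$ otherwise; iterate until a leaf is reached and return its index. A child carrying a zero stored value receives branching probability $0$, so the walk never enters a zero-weight subtree and no division by zero occurs along any realized path.

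Then I would check correctness by a telescoping argument: the probability of terminating at the leaf with index $j$ equals the product, along the root-to-leaf path, of the conditional branching probabilities. Writing $s_0=\norm{v}^2$ for the root value, $s_{\lg N}=v_j^2$ for the leaf value, and $s_1,\dots,s_{\lg N-1}$ for the stored values of the intermediate path nodes, this product is $\prod_{i=0}^{\lg N-1}\frac{s_{i+1}}{s_i}=\frac{s_{\lg N}}{s_0}=\frac{v_j^2}{\norm{v}^2}$, exactly the desired distribution. For the running time, each of the $\lg N$ steps reads two stored values, samples one random number, and performs one comparison, i.e.\ $O(1)$ work, for a total of $O(\lg N)$ once the tree is available (the $O(N)$ one-time cost of building the tree is not counted here). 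The only delicate points are the treatment of zero-weight subtrees and the telescoping bookkeeping; neither is a genuine obstacle, so the argument is in essence a restatement of \cite[Prop.~3.2]{tang_quantum-inspired_2019}.
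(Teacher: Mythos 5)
Your argument is correct and is exactly the standard root-to-leaf walk underlying \cite[Prop.~3.2]{tang_quantum-inspired_2019}, which the paper cites without reproducing a proof: branching at each internal node with probability proportional to the child's stored squared mass and telescoping the conditional probabilities yields $v_j^2/\norm{v}^2$ in $\lg N$ constant-time steps. No gaps; the handling of zero-weight subtrees and the exclusion of the $O(N)$ tree-construction cost are both consistent with how the paper uses this result.
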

\begin{prop}[Sampling-based inner product {\cite[Prop. 4.2]{tang_quantum-inspired_2019}}]\label{prop:tang-inner}
    Given query access to two real vectors $v, w$, sample access to $v$, and knowledge of $\norm{v}$, the inner product $\sum_j v_j w_j$ can be estimated to additive error $\norm{v} \ \norm{w} \epsilon$ with probability at least $\alpha$ using $O\left( \frac{1}{\epsilon^2} \lg \frac{1}{1-\alpha}\right)$ queries and samples.
\end{prop}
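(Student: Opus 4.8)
The plan is to construct an unbiased importance-sampling estimator for $\sum_j v_j w_j$ and then amplify its confidence via a median-of-means argument, following the standard quantum-inspired template of Ref.~\cite{tang_quantum-inspired_2019}.

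First I would define a single-shot estimator. Drawing an index $J$ from the sample access to $v$, so that $\mathbb{P}(J=j) = v_j^2/\norm{v}^2$ (note this forces $v_J \neq 0$, so the expression below is well defined on the support), I set $Z := \norm{v}^2 \, w_J / v_J$, which costs one sample from $v$ plus one query each to $v$ and $w$. A direct computation gives $\mathbb{E}[Z] = \sum_j \frac{v_j^2}{\norm{v}^2}\cdot\frac{\norm{v}^2 w_j}{v_j} = \sum_j v_j w_j$, so $Z$ is unbiased for the target; and $\mathbb{E}[Z^2] = \sum_j \frac{v_j^2}{\norm{v}^2}\cdot\frac{\norm{v}^4 w_j^2}{v_j^2} = \norm{v}^2 \norm{w}^2$, hence $\mathrm{Var}(Z) \leq \norm{v}^2\norm{w}^2$.

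Second, I would average $S_0 = \lceil 4/\epsilon^2 \rceil$ i.i.d.\ copies $Z_1,\dots,Z_{S_0}$ of $Z$ into a sample mean $\bar Z$ and apply Chebyshev's inequality: since $\mathrm{Var}(\bar Z) \leq \norm{v}^2\norm{w}^2/S_0$, we get $\mathbb{P}\bigl(\abs{\bar Z - \sum_j v_j w_j} \geq \norm{v}\norm{w}\epsilon\bigr) \leq 1/(S_0\epsilon^2) \leq 1/4$, a fixed success probability $\geq 3/4$ independent of $\epsilon$, $N$, and $\alpha$. Then, to reach confidence $\alpha$ with only logarithmic overhead, I would repeat this whole block $M = O(\lg\frac{1}{1-\alpha})$ times independently and return the median of the $M$ estimates; a standard Chernoff bound on the count of "bad" blocks (each bad with probability $\leq 1/4$) shows the median lies within $\norm{v}\norm{w}\epsilon$ of the target except with probability $\leq 1-\alpha$, once $M \geq C\lg\frac{1}{1-\alpha}$ for an absolute constant $C$. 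The total resource count is $MS_0 = O\!\left(\frac{1}{\epsilon^2}\lg\frac{1}{1-\alpha}\right)$ samples and queries, as claimed; if a time bound is also wanted, each sample from $v$ costs $O(\lg N)$ by Prop.~\ref{prop:tang-sample}, but the stated bound only counts queries and samples.

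The only subtle point — and hence the main obstacle — is the variance control: one must guarantee that $Z = \norm{v}^2 w_J/v_J$ has second moment bounded purely by $\norm{v}^2\norm{w}^2$, without any blow-up from small entries $v_j$. This is precisely why importance sampling with weights $\propto v_j^2$ is essential: the $v_j^2$ in the sampling probability cancels the $1/v_j^2$ in $Z^2$, leaving $\sum_j w_j^2 = \norm{w}^2$. Indices with $v_j = 0$ never occur under sample access, so they neither contribute nor cause division by zero. Everything else is the routine Chebyshev-plus-median-of-means boosting.
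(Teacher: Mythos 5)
Your proof is correct and is precisely the standard argument behind the cited result: the paper itself offers no proof of this proposition, deferring entirely to Tang's Prop.~4.2, whose proof is exactly your importance-sampling estimator $Z=\norm{v}^2 w_J/v_J$ with second moment $\norm{v}^2\norm{w}^2$, followed by Chebyshev on a mean of $O(\epsilon^{-2})$ copies and a median-of-means boost over $O(\lg\frac{1}{1-\alpha})$ blocks. All steps check out, including the key cancellation of $v_j^2$ in the sampling weights against $1/v_j^2$ in $Z^2$ and the handling of zero entries.
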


\begin{figure}
    \centering \small
    \include{Figures/iqae_vs_tang_both}
    \caption{IQAE scales quadratically better than the classical sampling-based algorithm described in Prop.~\ref{prop:tang-sample}, in terms of error, when the number of queries grows. Queries are intended as the distribution samples in the classical case, and the oracle calls in the quantum case. Let us remind that D\&C-orth is the same as BOE with $s=1$. The underlying problem is that of calculating the inner product of two fixed vectors of dimension $N=4$, and power $k=1$ in (a), $k=2$ in (b). Quantum runs are executed on a noiseless QASM simulator in Qiskit. Dashed lines are the best-fit lines in the log-log space.}
    \label{fig:iqae-vs-tang}
\end{figure}

The better error scaling of Prop.~\ref{prop:errorbound-iqae-qhp+direct} and Prop.~\ref{prop:errorbound-bidir+qhp+swap}, in contrast with the classical sampling-based version in Prop.~\ref{prop:tang-inner}, is empirically demonstrated in Fig.~\ref{fig:iqae-vs-tang}. 

\begin{remark}\label{rem:error-scaling-tang}
    Let us emphasize that the classical sampling-based algorithm requires sample access to only \textit{one} of the inputs: in our case, we can assume sample access to prices, that do not undergo any transformation. On the contrary, in the quantum case we obviously need to normalize \textit{both} vectors. This translates into the fact that the scaling of the sampling-based algorithm in $N$ does not depend on $k$, since we can assume $\norm{[f(T'_j)]_j} = O(N^{1/2})$ for any $k$, in contrast with the quantum behavior highlighted in Subsection~\ref{subsec:assemble-k}.
\end{remark}

In conclusion, the error scaling of the proposed quantum algorithm is better, but only for fixed vector size $N$.

\section{Comparative complexity analysis}\label{appendix:complexity}
\subsection{Summary of the algorithm variants}\label{subsec:variants}
In the previous appendices we introduced multiple implementations of our approach, generated by different data encodings (amplitude or BOE), different implementations of the QHP (with or without mid-reset), and different techniques for the inner product (ancilla-free method or swap test), as well as by the introduction of *QAE in some cases. Here we collect and compare the main ones, to discuss their complexity. Their key characteristics are also summarized in Fig.~\ref{fig:summary}.

\textbf{(a) Ancilla-free inner product and mid-resets}. The first variant we consider assumes data are available in the amplitude encoding. Each circuit is fed with $k$ copies of the temperature state $\ket{\psi_T}$ in Eq.~\eqref{eq:psiT}, so that the power Eq.~\eqref{eq:psiTpower} of the temperature vector is calculated trough the QHP. Moreover, the same register is reused for the multiple copies of $\ket{\psi_T}$, as shown in Fig.~\ref{fig:qhp-midreset}. At the level of quantum circuit implementation, this variant can benefit from the application of the dynamic stopping technique, as introduced in Subsec.~\ref{subsec:qhp}. Afterwards, under the assumption that a loading unitary $U_E$ for $\ket{\psi_E}$ is known, the inner products $y_k$ can be calculated with the direct method. Finally the volume $v$ is reconstructed classically via Eq.~\eqref{eq:finalsum}. We refer to Subsection~\ref{subsec:qhp+inner} for the detailed implementation and analysis.

\textbf{(b) Ancilla-free inner product and no mid-resets.} In a second variant, we resort to the QHP implementation without mid-circuit resets, as depicted in Fig.~\ref{fig:qhp-nomidreset}, while leaving all other steps unchanged. This results in a shallower circuit, at the cost of more qubits. Details are again in Subsec.~\ref{subsec:qhp+inner}.

In both techniques introduced so far, the ancilla-free method for the estimation of inner products can be replaced with the swap test. We do not extensively discuss these variants here, since, as previously mentioned, the swap test comes with the relevant disadvantage that the sampling complexity is in general unbounded (Fig.~\ref{fig:inner-p}). On the other hand, the swap test does not require knowledge of a loading unitary $U_E$, as it is enough to have access to a copy of the state $\ket{\psi_E}$ at each iteration. This does not appear to be a significant limitation in our use case, where we assume data are classically available and loaded via circuits. The algorithm complexity is also not improved by using the swap test, see Subsec.~\ref{subsec:qhp+inner}.

\textbf{(c) Ancilla-free inner product and *QAE.} The version above without mid-circuit resets inspires another interesting variant: under the assumption that loading unitaries $U_T$ and $U_E$ are available for both vectors, the overall circuit before any measurement is made can be seen as a unitary, that is fed to a Quantum Amplitude Estimation (QAE) technique to obtain a quadratic speedup in the precision $\epsilon$, in line with the general result of QAE. Details are presented in Appendix~\ref{appendix:qae-ampl}. This version (c) is the main variant described in the body of the manuscript.

\textbf{(d) BOE and *QAE.} Additional algorithms can be obtained by resorting to the BOE, as discussed in Appendix~\ref{appendix:bidir}. In this context, the swap test is the only viable option. We focus on the variant that loads data in BOE, applies the QHP for the power calculation, the swap test for the inner product, and boosts precision via QAE techniques.

\subsection{Classical benchmarking algorithms}\label{subsec:benchmarks}
For comparison, we consider three classical algorithms:

\textbf{Exact algorithm} By this term, we refer to the direct calculation of $f(T'_j)$ for all $j$, and of the inner product $\sum_j f(T'_j) E'_j$, without resorting to a polynomial approximation.

\textbf{Classical polynomial approximation} This algorithm is the evaluation of the summation in Eq.~\eqref{eq:finalsum}, where $y_k$ are computed classically.

\textbf{Sampling-based polynomial approximation} In her recent work \cite{tang_quantum-inspired_2019}, Tang showed that inner products can be approximated efficiently on classical computers as well, under the assumption of availability of a binary tree data structure similar to the one needed for the BOE loading for at least one of the two inputs, as further described in Subsec.~\ref{subsec:tang}. With the term `Sampling-based polynomial approximation' we refer to an algorithm made of the efficient calculation of $y_k$ by means of such technique, and then of the usual summation in Eq.~\eqref{eq:finalsum}.

\subsection{Complexity measures}\label{subsec:complex-measures}
We use the following metrics for complexity. The \textit{classical time} $C_{\mathrm{c}}$ is the pre- and post-processing time computed in terms of elementary operations. In case of purely classical benchmarking algorithms, it accounts for the all processing. Classical data is assumed to be given in a specific format, and the cost of data preparation is not included in the pre-processing phase. The \textit{circuit depth} $C_{\mathrm{d}}$ is the depth of a single, independently instantiated run of a circuit shot, measured in terms of layers of 1-qubit gates and CNOTs (in terms of big-O notation, this is equivalent to measure it in terms of 1- and 2-qubit gates).
Notice that $C_{\mathrm{d}}$ may even depend on the single shot, if dynamic stopping is applied: in this case, we can treat it as a random variable. The maximal $C_{\mathrm{d}}$ is an important metric for an approximate estimation of the impact of circuit noise. %
The \textit{sampling complexity} $C_{\mathrm{S}}$ is the number of repeated executions (shots) of each circuit required for the statistical estimation of outputs, and is again indexed on the circuit parameters $k$ and $l$. The \textit{oracle complexity} $C_{\mathrm{o}}$, which appears in the discussion of the classical sampling-based algorithm or when QAE techniques are involved, represents the number of calls made to a black-box operation by a general-purpose method. We also define the \textit{quantum time} $C_{\mathrm{q}}$ as the sum of circuit depths multiplied by the respective sampling complexities, which gives an approximate measure of the overall quantum circuit execution cost under the assumption that all circuit layers have similar duration.
We summarize the sum of quantum time and classical time as \textit{time scaling}, and for this specific metric, we add the additional assumption that all norms scale as $O(\sqrt{N})$ (see Subsection~\ref{subsec:assemble-k}), instead of keeping the explicit contribution of norms, so that the scaling in $N$ becomes more clearly readable.
Finally, the \textit{circuit width} $C_{\mathrm{w}}$ denotes the spatial requirements of an algorithm, as measured by the total number of qubits necessary for its implementation.

\subsection{Space and time complexity}\label{subsec:complexity}
\newcounter{tablenotesComparison}
\renewcommand{\thetablenotesComparison}{\alph{tablenotesComparison}}
\newcommand{\tablenoteItem}[1]{\refstepcounter{tablenotesComparison}\label{#1} \item[\thetablenotesComparison]}
\newcommand{\tnoteRef}[1]{\tnote{\ref{#1}}}
\begin{table*}[p]
    \newcolumntype{Y}{>{\centering\arraybackslash}X}
    \centering
    \small
    \begin{threeparttable}[b]
    \begin{tabularx}{\textwidth}{p{11mm}p{1mm}cYYY} %
    \toprule
	\multicolumn{3}{c}{\textit{Classical algorithms}} & Classical exact & Classical polynomial approximation & Sampling-based polynomial approximation\\
 
	\midrule
    \multicolumn{3}{c}{Classical Encoding} &
    Array &
    Array &
    Binary Tree
    \\
    \cline{4-6}

    \multicolumn{3}{c}{Oracle complexity $C_{\mathrm{o}}(k)$} &
    N/A &
    N/A &
    $O_\alpha(\rho_E^2 \rho_T^2 \abs{v} \epsilon^{-2})$ \tnoteRef{tablenotes:tang}
    \\
    \cline{4-6}

    \multirow{3}{*}{Summary} &
    \multirow{3}{*}{$\left\lbrace\begin{array}{l} \! \\ \! \\ \! \end{array}\right.$} &
    Classical time $C_{\mathrm{c}}$ &
    $O(N)$ \tnoteRef{tablenotes:classical} &
    $O(KN)$ \tnoteRef{tablenotes:classical} &
    $O(\lg N) \; O_\beta(r_1^2 \epsilon^{-2}) + O_K(1)$ \tnoteRef{tablenotes:tang}\tnoteRef{tablenotes:bintree-assumption}\tnoteRef{tablenotes:kprepost}
    \\
    \cline{4-6}

    & &
    Time scaling $C_{\mathrm{c}}$ \tnoteRef{tablenotes:timescaling} &
    $O(N)$ &
    $O(N)$ &
    $O_{\beta, K}(\epsilon^{-2} \lg N)$
    \\
	\bottomrule
    \end{tabularx}
    
    \begin{tabularx}{\textwidth}{p{11mm}p{1mm}cYY} %
    \toprule
	\multicolumn{3}{c}{\textit{QAE-free algorithms}} & (a) Ancilla-free and mid-resets & (b) Ancilla-free, no mid-resets \\
 
	\midrule
    \multicolumn{3}{c}{Classical Encoding} &
    Various &
    Various
    \\
    \cline{4-5}

    \multicolumn{3}{c}{Quantum Encoding} &
    Amplitude &
    Amplitude
    \\
    \cline{4-5}

    \multirow{3}{*}{Circuit} &
    \multirow{3}{*}{$\left\lbrace\begin{array}{l} \! \\ \! \\ \! \end{array}\right.$} &
    Depth $C_{\mathrm{d}}(k)$ &
    $\leq C_{\mathrm{d}, \mathrm{load}}(N) k + k$ \tnoteRef{tablenotes:width-depth-k} &
    $2 C_{\mathrm{d}, \mathrm{load}}(N) + k$ \tnoteRef{tablenotes:width-depth-k}
    \\
    
    & &
    Samples $C_{\mathrm{S}}(k)$ &
    $O_\alpha \left( \rho_E^2 \rho_T^{2k} (y'_k)^{-2} \epsilon^{-2} \right)$ \tnoteRef{tablenotes:direct-scaling}&
    $O_\alpha \left( \rho_E^2 \rho_T^{2k} (y'_k)^{-2} \epsilon^{-2} \right)$ \tnoteRef{tablenotes:direct-scaling}
    \\
    
    & &
    Width $C_{\mathrm{w}}(k)$ &
    $2 \lg N$ \tnoteRef{tablenotes:width-depth-k}&
    $k \lg N$ \tnoteRef{tablenotes:width-depth-k}
    \\
    \cline{4-5}

    \multirow{4}{*}{Summary} &
    \multirow{4}{*}{$\left\lbrace\begin{array}{l} \! \\ \! \\ \! \\ \! \end{array}\right.$} &
    Classical time $C_{\mathrm{c}}$ &
    $2 C_{\mathrm{c}, \mathrm{load}}(N) +O_K(1)$ \tnoteRef{tablenotes:classicaltime}\tnoteRef{tablenotes:kprepost}&
    $2 C_{\mathrm{c}, \mathrm{load}}(N) +O_K(1)$ \tnoteRef{tablenotes:classicaltime}\tnoteRef{tablenotes:kprepost}
    \\

    & &
    Quantum time $C_{\mathrm{q}}$ &
    $C_{\mathrm{d}, \mathrm{load}}(N) \; O_{\beta, K, b}(r_K^2 \epsilon^{-2})$ \tnoteRef{tablenotes:quantumtime} &
    $C_{\mathrm{d}, \mathrm{load}}(N) \; O_{\beta, K, b}(r_K^2 \epsilon^{-2})$ \tnoteRef{tablenotes:quantumtime}
    \\

    & &
    Time scaling $C_{\mathrm{c}} + C_{\mathrm{q}}$ \tnoteRef{tablenotes:timescaling} &
    $O_{\beta, K, b}(C_{\mathrm{c}, \mathrm{load}}(N) + C_{\mathrm{d}, \mathrm{load}}(N) N^{K-1} \epsilon^{-2})$ \tnoteRef{tablenotes:time}&
    $O_{\beta, K, b}(C_{\mathrm{c}, \mathrm{load}}(N) + C_{\mathrm{d}, \mathrm{load}}(N) N^{K-1} \epsilon^{-2})$ \tnoteRef{tablenotes:time}
    \\
	\bottomrule
    \end{tabularx}
    
    \begin{tabularx}{\textwidth}{p{11mm}p{1mm}cYY} %
    \toprule
	\multicolumn{3}{c}{\textit{QAE-based algorithms}} & (c) Ancilla-free and *QAE & (d) BOE and *QAE\\
 
	\midrule
    \multicolumn{3}{c}{Classical Encoding} &
    Various &
    Sqrt Binary Tree
    \\
    \cline{4-5}

    \multicolumn{3}{c}{Quantum Encoding} &
    Amplitude &
    BOE
    \\
    \cline{4-5}
    
    \multirow{3}{*}{Oracle} &
    \multirow{3}{*}{$\left\lbrace\begin{array}{l} \! \\ \! \\ \! \end{array}\right.$} &
    Depth $C_{\mathrm{d}}(k)$ &
    $2 C_{\mathrm{d}, \mathrm{load}}(N) + O_k(\lg N)$ \tnoteRef{tablenotes:width-depth-oracle} &
    $O \left( 2^s +\lg^2 N -s^2 + k \lg N \right)$ \tnoteRef{tablenotes:width-depth-oracle-bidir}
    \\

    & &
    Width $C_{\mathrm{w}}(k)$ &
    $O_k(\lg N)$ \tnoteRef{tablenotes:width-depth-oracle}&
    $O \left( (s+1)N 2^{-s} + k \lg N \right)$ \tnoteRef{tablenotes:width-depth-oracle-bidir}
    \\
    
    & &
    Complexity $C_{\mathrm{o}}(k)$ &
    $O \left( \rho_E \rho_T^k (y'_k)^{-1} \epsilon^{-1} \right)$ \tnoteRef{tablenotes:qae}&
    $O \left( \tilde \rho_E \tilde \rho_T^k (\tilde y'_k)^{-1} \epsilon^{-1} \right)$ \tnoteRef{tablenotes:qae-bidir}
    \\
    \cline{4-5}

    \multirow{3}{*}{Circuit} &
    \multirow{3}{*}{$\left\lbrace\begin{array}{l} \! \\ \! \\ \! \end{array}\right.$} &
    Depth $C_{\mathrm{d}}(k)$ &
    Various \tnoteRef{tablenotes:qae-depth} &
    Various \tnoteRef{tablenotes:qae-depth} 
    \\

    & &
    Width $C_{\mathrm{w}}(k)$ &
    $O_k(\lg N)$ \tnoteRef{tablenotes:width-depth-oracle}&
    $O \left( (s+1)N 2^{-s} + k \lg N \right)$ \tnoteRef{tablenotes:width-depth-oracle-bidir}
    \\
    
    & &
    Samples $C_{\mathrm{S}}(k)$ &
    $O_\alpha(1)$ \tnoteRef{tablenotes:qae}&
    $O_\alpha(1)$ \tnoteRef{tablenotes:qae-bidir}
    \\
    
    \cline{4-5}

    \multirow{5}{*}{Summary} &
    \multirow{5}{*}{$\left\lbrace\begin{array}{l} \! \\ \! \\ \! \\ \! \\ \! \end{array}\right.$} &
    Classical time $C_{\mathrm{c}}$ &
    $2 C_{\mathrm{c}, \mathrm{load}}(N) +O_K(1)$ \tnoteRef{tablenotes:classicaltime}\tnoteRef{tablenotes:kprepost} &
    $O_K(1)$ \tnoteRef{tablenotes:bintree-assumption}\tnoteRef{tablenotes:kprepost}
    \\

    & &
    Quantum time $C_{\mathrm{q}}$ &
    $C_{\mathrm{d}, \mathrm{load}}(N) \; O_{\beta, K, b}(r_K \epsilon^{-1})+$ $O_{\beta, K, b}(r_K \epsilon^{-1} \lg N)$ \tnoteRef{tablenotes:quantumtime-qae} &
    $O_{\beta, K, b} \left( \tilde r_K \epsilon^{-1} \left(2^s +\lg^2 N -s^2 \right) \right)$ \tnoteRef{tablenotes:quantumtime-bidir}\tnoteRef{tablenotes:width-depth-oracle-bidir}
    \\

    & &
    Time scaling $C_{\mathrm{c}} + C_{\mathrm{q}}$ \tnoteRef{tablenotes:timescaling} &
    $O_{\beta, K, b}(C_{\mathrm{c}, \mathrm{load}}(N) + \epsilon^{-1} \left[C_{\mathrm{d}, \mathrm{load}}(N) + \lg N \right] N^{\frac{K-1}{2}})$ \tnoteRef{tablenotes:time} &
    $O_{\beta, K, b} \left( \epsilon^{-1} N^K \left(2^s +\lg^2 N -s^2 \right) \right)$
    \tnoteRef{tablenotes:time-bidir} \\
	\bottomrule
    \end{tabularx}
    \begin{tablenotes}
    \tablenoteItem{tablenotes:timescaling} Under the additional assumptions of Remark~\ref{rem:error-scaling}.
    \tablenoteItem{tablenotes:classical} Under the simplification that floating point operations are performed in $O(1)$ time and without precision loss.
    \tablenoteItem{tablenotes:tang} Prop.~\ref{prop:tang-sample}, Prop.~\ref{prop:tang-inner} and Remark~\ref{rem:error-scaling-tang}.
    \tablenoteItem{tablenotes:bintree-assumption} Assuming that the input binary tree is available; otherwise, its preparation requires $O(N)$ time \cite{araujo2021divide}.
    \tablenoteItem{tablenotes:kprepost} The term $O_K(1)$ encompasses the polynomial reconstruction of Eq.~\eqref{eq:finalsum} or~\eqref{eq:finalsum-bidir}.
    \tablenoteItem{tablenotes:width-depth-k} Prop.~\ref{prop:width-depth-k}
    \tablenoteItem{tablenotes:direct-scaling} Prop.~\ref{prop:errorbound-qhp+direct}.
    \tablenoteItem{tablenotes:classicaltime} $C_{\mathrm{c}, \mathrm{load}}$ is the classical pre-processing needed to load a copy of one input vector.
    \tablenoteItem{tablenotes:overalldepth} Easily derived from the circuit depth.
    \tablenoteItem{tablenotes:quantumtime} Use Prop.~\ref{prop:assemble-k} and the fact that quantum time is defined as the product of circuit depth and samples.
    \tablenoteItem{tablenotes:time} Remark~\ref{rem:error-scaling}.
    \tablenoteItem{tablenotes:width-depth-oracle} Prop.~\ref{prop:depth-width-oracle}.
    \tablenoteItem{tablenotes:qae} Prop.~\ref{prop:errorbound-qae-qhp+direct} and Prop.~\ref{prop:errorbound-iqae-qhp+direct}.
    \tablenoteItem{tablenotes:qae-depth} The circuit depth depends on the specific QAE technique, but all the techniques considered share the same overall depth.
    \tablenoteItem{tablenotes:quantumtime-qae} Prop.~\ref{prop:assemble-k} and Prop.~\ref{prop:errorbound-qae-qhp+direct} (Prop.~\ref{prop:errorbound-iqae-qhp+direct}).
    \tablenoteItem{tablenotes:width-depth-oracle-bidir} Prop.~\ref{prop:depth-width-oracle-bidir}.
    \tablenoteItem{tablenotes:qae-bidir} Prop.~\ref{prop:errorbound-xqae-bidir+qhp+direct}.
    \tablenoteItem{tablenotes:quantumtime-bidir} Prop.~\ref{prop:assemble-k-bidir} and Prop.~\ref{prop:errorbound-xqae-bidir+qhp+direct}.
    \tablenoteItem{tablenotes:time-bidir} Remark~\ref{rem:error-scaling-bidir}.
    \end{tablenotes}
    \end{threeparttable}
    \caption{Complexity analysis of algorithm variants proposed in Subsec.~\ref{subsec:variants}, in comparison with the classical benchmarks listed in Subsec.~\ref{subsec:benchmarks}. Refer to Subsec.~\ref{subsec:complex-measures} for a description of the complexity measures. In the case of circuit complexity and oracle complexity measures, the parameters in the analysis are the data set size $N$, the monomial degree $k$ ($k \geq 1$), the target precision $\epsilon$ and confidence level $\alpha$ such that $\mathbb{P}\left( \abs{ Y'_k - y'_k} \leq \epsilon y'_k \right) \geq \alpha$, the success rates $a_k^{-2}$, the target value $y_k$, the scaling factors $\rho_E, \rho_T$, the `tilde version' of the variables above and the split level $s=s(N) \in \{1, ..., \lg N\}$ for the BOE. In the case of summary complexity measures, we use $N$ as before, the polynomial degree $K$ ($K \geq 1$), the target precision $\epsilon$ and confidence level $\beta$ such that $\mathbb{P}\left( \abs{ V - v^*} \leq \epsilon \abs{v} \right) \geq \beta$, the coefficients $b_k(\eta)$ in Eq.~\eqref{eq:contractvalue-approx}, the scaling factors $\rho_E, \rho_T$, the convergence ratios $r_k$ defined in Eq.~\eqref{eq:defn-rk} or $\tilde r_k$ in Eq.~\eqref{eq:defn-rk-bidir}. Remarkably, the error $\epsilon$ is measured against the exact polynomial evaluation, under the assumption that the polynomial itself is a good approximation of the target volume function. Asymptotic estimations are provided for $\epsilon \to 0$ or $N \to \infty$. Constants affecting asymptotic estimates are marked in the subscript of the big $O$ notation: for instance the notation $O_\alpha(\epsilon^{-2})$ is intended for $\epsilon \to 0$ uniformly in $N$, with factors depending only on $\alpha$. For readability in the subscripts we use $b$ for $b_k(\eta)$ and $a$ for $a_k$. It is worth emphasising that all estimations in the table are independent of $a_k$.}
    \label{tab:algo-compare}
\end{table*}
Table~\ref{tab:algo-compare} collects the results obtained so far on the performance of the techniques listed in Subsec.~\ref{subsec:variants}, and for all the metrics defined above. In terms of time scaling in $N$, the best method is variant (c), which is $O(C_{\mathrm{c}, \mathrm{load}}(N) + C_{\mathrm{d}, \mathrm{load}}(N) N^{K/2-1/2} \lg N)$. If $C_{\mathrm{d}, \mathrm{load}}(N) \ll N$, $C_{\mathrm{d}, \mathrm{load}}(N) \ll N^{1/2}$ and $K \leq 2$, then we have a quantum speedup against the classical case (notice that the sampling-based method is not applicable since we are not supposing that a binary tree structure is available). If instead $C_{\mathrm{d}, \mathrm{load}}(N) = O(1)$ and $K = 3$, the quantum complexity is almost comparable to the classical one, in the sense that it is worse only by logarithmic factors.

The following main takeaways can also be observed.
Methods (a) and (b) have the same asymptotic time complexity of the classical techniques when $N$ grows, if $K=2$ and data loading is performed in $O(1)$ depth. Furthermore, the method (d) is not competitive against the classical counterparts in terms on $N$, while keeping better than the sampling-based algorithm in terms of $\epsilon$ for fixed $N$, as shown in Fig.~\ref{fig:iqae-vs-tang}. Finally, note that the space complexity, indicated in the Table as circuit depth, is $O(k \lg N)$ for algorithms (a), (b), (c), with better constants in the QAE-free cases.

We also want to verify our scaling results for variant (c) empirically. For simplicity, instead of showing the quantum time complexity, we deal with the oracle complexity, which constitutes the key factor of the quantum time complexity. Thence, we want to demonstrate that the oracle complexity of (c) scales linearly with $N^{K/2-1/2}$, coherently with Table~\ref{tab:algo-compare} and Remark~\ref{rem:error-scaling}. It is known by Theorem~\ref{thm:qae-bounded-values} that the error tolerance of *QAE scales linearly with its inverse oracle complexity. Therefore, to verify that the oracle complexity of (c) scales linearly with $N^{K/2-1/2}$, we need to show that the tolerance we have to require to the corresponding *QAE scales as the inverse of $N^{K/2-1/2}$, in order to obtain an (approximately) constant relative error $\epsilon$ in output. This property is shown in Fig.~\ref{fig:error-scaling-k-qae} (left pane).

\begin{figure}[t]
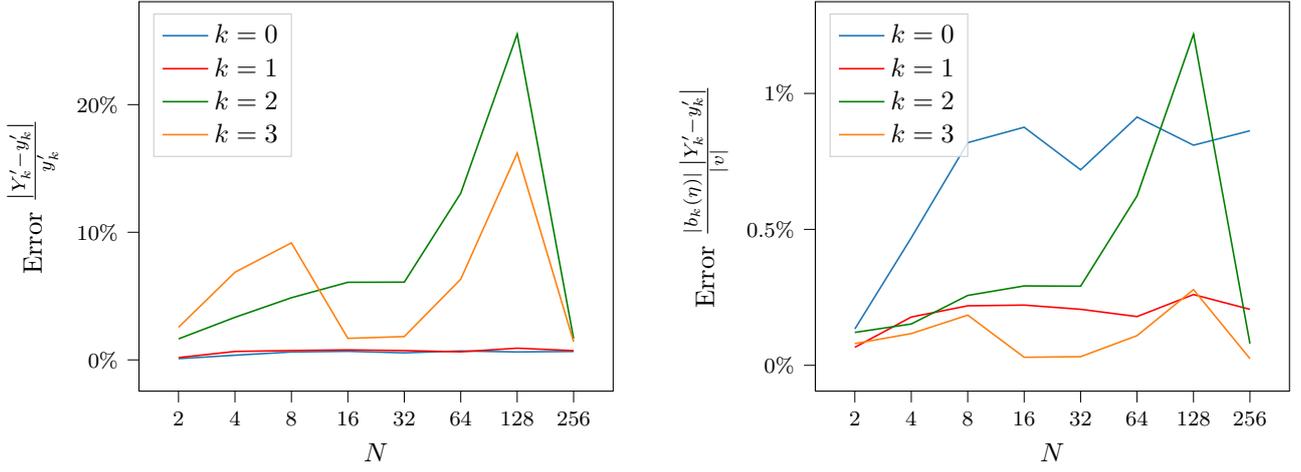

    \centering \small
    \include{Figures/error-scaling-qae}
    \caption{The error of the estimator $Y'_k$, relatively to its own target value $y'_k$ (left), and the error of the power contribution $b_k(\eta) Y'_k$, relatively to the global target $v$ (right), for IQAE with the ancilla-free method.  The IQAE is set to obtain a precision for $Y_k$ of $0.5 \, N^{-k/2+1/2}$, where $0.5$ is an arbitrary constant and the scaling in $N$ is designed to compensate with the error scaling. Each point in the plot is the average of 20 independent runs on the qasm simulator. There is no increasing or decreasing trend in the lines of the plot, consistently with the theory, stating that the error keeps approximately constant if the required precision scales as $r_k = O(N^{k/2-1/2})$. As a comparison, in Fig.~\ref{fig:error-scaling-k} the samples $S$ are kept constant when $N$ grows, and so the error increases, whereas by contrast, here the precision is appropriately scaled with $N$ to keep the relative error constant.}
    \label{fig:error-scaling-k-qae}
\end{figure}

\end{document}